\date{}
\begin{document}
\title{Hamiltonicity: Variants and Generalization in $P_5$-free Chordal Bipartite graphs\thanks{This work is partially supported by DAE-NBHM project, NBHM/ 02011/37/2017/RD II/16646}}
%
%
\author{S.Aadhavan \inst{1} \and
	R.Mahendra Kumar \inst{1} \and
	P.Renjith \inst{2} \and
	N.Sadagopan \inst{1}}
\authorrunning{N.Sadagopan  et al.}
%
\institute{Indian Institute of Information Technology Design and Manufacturing, Kancheepuram, Chennai, India \\\email{aadhav1395@gmail.com,\{coe18d004,sadagopan\}@iiitdm.ac.in} \and Indian Institute of Information Technology Design and Manufacturing, Kurnool. \\
	\email{renjith@iiitk.ac.in}}
\maketitle              
\begin{abstract}
A bipartite graph is chordal bipartite if every cycle of length at least six has a chord in it. M$\ddot{\rm u}$ller \cite {muller1996Hamiltonian}  has shown that the Hamiltonian cycle problem is NP-complete on chordal bipartite graphs by presenting a polynomial-time reduction from the satisfiability problem.  The microscopic view of the reduction instances reveals that the instances are $P_9$-free chordal bipartite graphs, and hence the status of Hamiltonicity in $P_8$-free chordal bipartite graphs is open.  In this paper, we identify the first non-trivial subclass of $P_8$-free chordal bipartite graphs which is $P_5$-free chordal bipartite graphs, and present structural and algorithmic results on $P_5$-free chordal bipartite graphs.  We investigate the structure of $P_5$-free chordal bipartite graphs and show that these graphs have a {\em Nested Neighborhood Ordering (NNO)}, a special ordering among its vertices.  Further, using this ordering, we present polynomial-time algorithms for classical problems such as the Hamiltonian cycle (path), also the variants and generalizations of the Hamiltonian cycle (path) problem. We also obtain polynomial-time algorithms for treewidth (pathwidth), and minimum fill-in in $P_5$-free chordal bipartite graph.  We also present some results on complement graphs of $P_5$-free chordal bipartite graphs.

\keywords{$P_5$-free chordal bipartite graphs \and Nested Neighborhood Ordering \and Hamiltonian cycle (path) \and Longest cycle (path) \and Steiner cycle (path) \and Treewidth \and Pathwidth \and Minimum fill-in}
\end{abstract}
\section{Introduction}
The graphs with forbidden subgraphs possess a nice structural characterization.  The structural characterization of these graphs has attracted researchers from both mathematics and computing. The popular graphs are chordal graphs \cite{dirac1961rigid}, which forbid induced cycles of length at least four, and chordal bipartite graphs \cite{fulkerson1965incidence}, which are bipartite graphs that forbid induced cycles of length at least six.  Further, these graph classes have a nice structural characterization with respect to minimal vertex separators and a special ordering, namely, perfect elimination ordering \cite{golumbie1980algorithmic} among its vertices (edges).  These graphs are largely studied in the literature to understand the computational complexity of classical optimization problems such as vertex cover, dominating set, coloring, etc., as these problems are known to be NP-complete on general graphs.  Thus these graphs help to identify the gap between NP-complete instances and polynomial-time solvable instances for classical combinatorial problems.\\\\
The Hamiltonian cycle (path) problem is a famous problem that asks for the presence of a cycle (path) that visits each node exactly once in a graph. A graph $G$ is said to be Hamiltonian if it has a Hamiltonian cycle. The Hamiltonian problem plays a significant role in various research areas such as circuit design \cite{wang2012efficient}, operational research \cite{malakis1976Hamiltonian}, biology \cite{dorninger1994Hamiltonian}, etc.  On the complexity front, this problem is well studied and it remains NP-complete on general graphs.  Interestingly, this problem is polynomial-time solvable on special graphs such as cographs and permutation graphs \cite{deogun1994polynomial}.  Surprisingly, this problem is NP-complete on chordal graphs \cite{bertossi1986Hamiltonian}, bipartite graphs \cite{Krishnamoorthy:1975:NPB:990518.990521}, and $P_5$-free graphs \cite{golumbie1980algorithmic}. \\ \\
M$\ddot{\rm u}$ller \cite{muller1996Hamiltonian}  has shown that the Hamiltonian cycle problem is NP-complete on chordal bipartite graphs by presenting a polynomial-time reduction from the satisfiability problem.  The microscopic view of the reduction instances reveals that the instances are $P_9$-free chordal bipartite graphs.  It is natural to study the complexity of the Hamiltonian cycle problem in $P_8$-free chordal bipartite graphs and its subclasses.  Since $P_4$-free chordal bipartite graphs are complete bipartite graphs, the first non-trivial graph class in this line of research is $P_5$-free chordal bipartite graphs.  Further,  the Steiner tree problem and the dominating set problem are also NP-complete for chordal bipartite graphs \cite{mullar1987brandstadt} as there is a polynomial-time reduction from the vertex cover problem.  Interestingly, the instances generated have $P_5$ in it.  This gives us another motivation to study the structure of $P_5$-free chordal bipartite graphs.  \\ \\
It is known from the literature that the Hamiltonian cycle problem is NP-complete on $(C_4,C_5,P_5)$-free graphs \cite{golumbie1980algorithmic}. Therefore, it is NP-complete on $(C_5,P_5)$-free graphs as well. In this paper, we present a polynomial-time algorithm for $(C_3,C_5,P_5)$-free graphs which are precisely $P_5$-free chordal bipartite graphs. \\ \\
A graph $G$ is $k$-bisplit if it has a stable set (an independent set) $X$ such that $G-X$ has at most $k$ bicliques (complete bipartite graphs). A graph $G$ is weak bisplit if it is $k$-bisplit for some $k$. It is known from \cite{bisplit}, recognizing weak bisplit graphs is NP-complete, whereas recognition of $k$-bisplit graphs for every fixed $k$ is polynomial-time solvable. A graph $G$ is bisplit if its vertex set can be partitioned into three stable sets $X$, $Y$, and $Z$ such that  $Y\cup Z$ induces a complete bipartite subgraph (a biclique) in $G$. Note that 1-bisplit graphs are bisplit graphs.  Brandst{\"a}dt et al. \cite{bisplit} provided $\mathit{O(nm)}$ linear time recognition algorithm for bisplit graphs. Subsequently, Abueida et al. \cite{bisplitn2} has given $\mathit{O(n^2)}$ time recognition algorithm for bisplit graphs.  Various popular problems like, Hamiltonian cycle (path) problem, domination, independent dominating set and feedback vertex set are NP-complete on bisplit graphs. So, it is natural to investigate the complexities of the problems mentioned above in a subclass of bisplit graphs. Interestingly, chordal bipartite graphs are a subclass of bisplit graphs and hence due to \cite{muller1996Hamiltonian} the Hamiltonian cycle (path) problem is NP-complete on this graph class.  This is another motivation to investigate the structure of $P_5$-free chordal bipartite graphs, which are a subclass of bisplit graphs.\\\\
In recent times, the class of $P_5$-free graphs have received a good attention, and  problems such as independent set \cite{Lokshantov:2014:ISP:2634074.2634117}, k-colourbility \cite{hoang2010deciding}, and feedback vertex set \cite{fvsp5} have polynomial-time algorithms restricted to $P_5$-free graphs. \\ \\
The longest path problem is the problem of finding an induced path of maximum length in $G$. Since this problem is a generalization of the Hamiltonian path problem, it is NP-complete on general graphs. This problem is polynomial-time solvable in interval graphs \cite{lpathinterval} and biconvex graphs \cite{lpathbiconvex}. A minimum leaf spanning tree is a spanning tree of $G$ with the minimum number of leaves.  This problem is NP-complete on general graphs, and polynomial-time solvable on cographs \cite{cographsmlst}. It is important to highlight that the minimum leaf spanning tree problem in cographs can be solved using the longest path problem as a black box. In this paper, we wish to see whether we can come up with a framework for Hamiltonicity and its variants in $P_5$-free chordal bipartite graphs.    \\\\
It is important to highlight that other classical problem such as Steiner tree, dominating set, connected dominating set are known to be NP-complete on chordal bipartite graphs \cite{mullar1987brandstadt}.\\\\
Similar to the Hamiltonian cycle (path) problem, its variants also attracted many researchers. We shall see some of the popular variants of the Hamiltonian cycle (path) problem.  A graph $G$ with the vertex set $V(G)$ and the edge set $E(G)$ is pancyclic \cite{bondy1971pancyclic} if it contains cycles of all lengths $l$, $3 \leq l \leq |V(G)|$. A graph $G$ is (2)-pancyclic graph \cite{2pancycle} if it contains exactly two cycles of length $l$, $3 \leq l \leq |V(G)|$. A graph $G$ is said to be bipancyclic \cite{bipancyclic} if it contains a cycle of length $2l$, $2 \leq l \leq |V(G)|$. A graph $G$ is said to be Hamiltonian connected if there exists a Hamiltonian path between every pair of vertices. Asratian \cite{asratian1993criterionhamiltonconnected} and Kewenet et al. \cite{kewen2008hamiltonianhamiltonconnected} have given sufficient conditions for a graph to be Hamiltonian connected. A graph $G$ is said to be homogeneously traceable if there exists a Hamiltonian path beginning at every vertex of $G$.  Skupie{\'n} \cite{skupien1984homogeneously} has given necessary condition for a graph $G$ to be homogeneously traceable. Chartrand et al.\cite{chartrand1979homogeneously} proved that there exists a homogeneously traceable non-Hamiltonian graph of order $n$ for all positive integers $n$ except $3 \leq n \leq 8$. A hypo-Hamiltonian graph is a non-Hamiltonian graph $G$ such that $G - v$ is Hamiltonian for every vertex $v \in V (G)$.  A graph is $k$-ordered Hamiltonian if for every ordered sequence of $k$ vertices, there exists a Hamiltonian cycle that encounters the vertices of the sequence in the given order. Faudree et al. \cite{korderhamilton} have given degree conditions for a graph to be $k$-ordered Hamiltonian. These variants are known to be NP-complete on general graphs, and chordal biparatite graphs.   \\\\
In the literature, there are many conditions for Hamiltonicity and its variants. Due to the inherent difficulty of the problem, all these conditions are either necessary conditions or sufficient conditions, however no known necessary and sufficient conditions. 
We show that Chv{\'a}tal's Hamiltonian cycle (path) necessary condition is sufficient for $P_5$-free chordal bipartite graphs. We also present a necessary and sufficient condition for the existence of the Steiner path (cycle) in $P_5$-free chordal bipartite graphs.\\\\
{\bf Our work:} In this paper, we study the structure of $P_5$-free chordal bipartite graphs and present a new ordering referred to as {\em Nested Neighbourhood Ordering} among its vertices. We present a polynomial-time algorithm for the Hamiltonian cycle (path) problem using the {\em Nested Neighbourhood Ordering}.  Further, using this ordering, we present polynomial-time algorithms for longest path (cycle),  minimum leaf spanning tree,  Steiner path (cycle), treewidth (pathwidth), and minimum fill-in.  We believe that these results can help us in the study of other combinatorial problems restricted to $P_5$-free chordal bipartite graphs, and also in the structural investigation of $P_k$-free chordal bipartite graphs, $6 \leq k \leq 8$.   Further, our understanding shall help us in obtaining a dichotomy: easy vs hard input instances of chordal bipartite graphs. We also present some results on complement graphs of $P_5$-free chordal bipartite graphs. \\ \\ 
{\bf Related work:} Structure of $P_5$-free graphs has been looked at while studying the Difference graph \cite{diffgraph1990}, 3-colorable $P_5$-free graphs \cite{maffray3color2012}, and $(P_5, \overline{P_5} )$-free graphs \cite{fouquet1993decomposition}.  However, the explicit mention of  {\em NNO} has not been reported in the literature.  To the best of our knowledge, the algorithmic results presented in this paper are new and have not been reported in the literature.
\subsection{Preliminaries}
In this paper, we work with simple, connected, undirected and unweighted graphs.   For a graph $G$, let $V(G)$ denote the vertex set and $E(G)$ denote the edge set.  The notation $\{u,v\}$ represents an edge incident on the vertices $u$ and $v$.  The neighborhood of a vertex $v$ of $G$, $N_G(v)$, is the set of vertices adjacent to $v$ in $G$.  The degree of a vertex $u$ is denoted by $d_G(u)=|N_G(u)|$. We use $d(u)$ and $d_G(u)$ interchangeably if the graph under consideration is clear from the context. A graph $H$ is called an induced subgraph of $G$ if for all
$u, v \in V (H)$, $\{u, v\} \in E(H)$ if and only if $\{u, v\} \in E(G)$.  The graph induced on $V(G)\setminus\{u\}$ is denoted by $G-u$. A graph $G$ is said to be connected if there exists a path between every pair of vertices. If $G$ is disconnected, then $c(G)$ denotes the number of connected components in $G$ (each component being maximal).  A bipartite graph is chordal bipartite if every cycle of length at least six has a chord.  A maximal biclique $K_{i,j}$ is a complete bipartite graph such that there is no strict supergraphs $K_{i+1,j}$ or $K_{i,j+1}$.  A maximum biclique is a maximal biclique with the property that $|i-j|$ is minimum.  Note that $P_5$ is an induced path of length $5$ and $P_{uv}$ denotes a path that starts at $u$ and ends at $v$, and $|P_{uv}|$ denotes its length.  We use $P_{uv}$ and $P(u,v)$ interchangeably. 
Let $\overline{G}$ denote the complement of the graph $G$, where $V (\overline{G}) = V (G)$ and $E(\overline{G}) = \{\{u, v\} | \{u, v\} \notin E(G)\}$.
\section{Structural Results}
In this section, we shall present a structural characterization of $P_5$-free chordal bipartite graphs.  Also, we introduce {\em Nested Neighborhood Ordering (NNO)} among its vertices.  We shall fix the following notation to present our results.  For a chordal bipartite $G$ with bipartition $(A,B)$, let $A=\{x_1,x_2,\ldots,x_m\}$ and $B=\{y_1,y_2,\ldots,y_n\}$. The edge set $E(G) \subseteq \{\{x,y\} ~|~ x \in A,y \in B \}$. Let $A=A_{1} \cup A_{2}$ and $B=B_{1} \cup B_{2}$, $A_{1}=\{x_{1},x_{2},\ldots,x_{i}\}$, $A_{2}=\{x_{i+1},x_{i+2},\ldots,x_{m}\}$ and $B_{1}=\{y_1,y_2,\ldots,y_j\}$, $B_{2}=\{y_{j+1},y_{j+2},\ldots,y_{n}\}$, such that $(A_1,B_1)$ is a maximum biclique.  
\begin{lemma}
	\label{lem1}
	Let $G$ be a $P_5$-free chordal bipartite graph.  Then, $\forall x \in A_{2},\exists y_{k} \in B_{1}$ such that $y_{k} \not\in N_G(x)$ and $\forall y \in B_{2},\exists x_{k} \in A_{1}$ such that $x_{k} \not\in N_G(y)$.
\end{lemma}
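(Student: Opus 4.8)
The plan is a direct proof by contradiction that exploits the maximality built into the definition of $(A_1,B_1)$. Recall that $(A_1,B_1)$ is chosen to be a maximum biclique, and hence in particular a maximal biclique $K_{i,j}$ with $i=|A_1|$ and $j=|B_1|$; by definition it admits no strict biclique supergraph $K_{i+1,j}$ or $K_{i,j+1}$. The two conjuncts of the statement are symmetric, so I would prove the first and then transfer the argument to the second by interchanging the roles of the two colour classes.

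For the first conjunct, suppose toward a contradiction that some $x\in A_{2}$ is adjacent to every vertex of $B_{1}$, i.e. $B_{1}\subseteq N_G(x)$. Since every vertex of $A_1$ is already adjacent to every vertex of $B_1$, adjoining $x$ keeps the subgraph complete bipartite: every vertex of $A_{1}\cup\{x\}$ is adjacent to every vertex of $B_{1}$. Thus $(A_{1}\cup\{x\},B_{1})$ is a biclique $K_{i+1,j}$ that strictly contains $K_{i,j}$, contradicting the maximality of $(A_1,B_1)$. Hence every $x\in A_2$ must miss some $y_k\in B_1$, which is the first claim. For the second conjunct I would argue identically with the sides exchanged: if some $y\in B_2$ satisfied $A_1\subseteq N_G(y)$, then $(A_1,B_1\cup\{y\})$ would be a biclique $K_{i,j+1}$ strictly containing $K_{i,j}$, again contradicting maximality, so every $y\in B_2$ misses some $x_k\in A_1$.

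I do not expect a genuine obstacle here: the only point requiring care is the reading of the definitions, namely that a maximum biclique is in the first place a \emph{maximal} one, so the single-vertex extension that drives the contradiction is indeed forbidden. It is worth noting that the conclusion is essentially a restatement of the maximality of $(A_1,B_1)$ and therefore holds in any bipartite graph; the chordal-bipartite and $P_5$-free hypotheses are not consumed by this lemma but will instead be needed in the subsequent results that build the Nested Neighborhood Ordering.
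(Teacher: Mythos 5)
Your proof is correct and takes essentially the same route as the paper: both derive a contradiction by observing that if some $x\in A_2$ were adjacent to all of $B_1$, then $(A_1\cup\{x\},B_1)$ would be a biclique strictly containing $(A_1,B_1)$, violating its maximality, with the $B_2$ case handled symmetrically. Your version is in fact slightly more careful than the paper's, since you explicitly invoke that a maximum biclique is in particular maximal and note that the $P_5$-free and chordal-bipartite hypotheses are not used here.
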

\begin{proof}
	Suppose there exists $x$ in $A_2$ such that for all $y_k$ in $B_1$, $xy_k \in E(G)$.  Then $(A_1 \cup \{x\},B_1)$ is the maximum biclique, contradicting the fact that $(A_1,B_1)$ is maximum.  Similar argument is true for $y \in B_2$.  Therefore, the lemma follows. \qed 
\end{proof}
\begin{lemma}
	\label{lem2}
	Let $G$ be a $P_5$-free chordal bipartite graph.  Then,  $\forall x \in A_{2}$, $N_G(x) \subset B_{1}$ and $ \forall y \in B_{2}$, $N_G(y) \subset A_{1}$. 
\end{lemma}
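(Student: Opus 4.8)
The plan is to prove the first assertion, $N_G(x)\subset B_1$ for every $x\in A_2$, and then obtain the second one by the symmetric argument that interchanges the roles of $(A,A_1,A_2)$ and $(B,B_1,B_2)$. Since $G$ is bipartite we already know $N_G(x)\subseteq B_1\cup B_2$, so it suffices to rule out any neighbour of $x$ in $B_2$; the strictness of the inclusion is then immediate from Lemma~\ref{lem1}, which supplies some $y_k\in B_1\setminus N_G(x)$. So I would argue by contradiction: assume some $x\in A_2$ has a neighbour $y\in B_2$, and manufacture a forbidden induced path on five vertices.

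First I would fix the ``frame'' provided by Lemma~\ref{lem1}: choose $x_p\in A_1$ with $x_py\notin E(G)$ and $y_q\in B_1$ with $xy_q\notin E(G)$. Because $(A_1,B_1)$ is a biclique we have $x_py_q\in E(G)$, so $\{x,y,x_p,y_q\}$ already induces two disjoint edges (a $2K_2$): the edges $xy$ and $x_py_q$ are present, while every other pair is a non-edge, either because the two vertices lie on the same side or by the choices of $x_p,y_q$. The whole task is then to bridge these two disjoint edges into an induced $P_5$, and I would split on whether $x$ sees $B_1$. If $x$ has a neighbour $y_r\in B_1$, then $y,x,y_r,x_p,y_q$ is the desired path: the new edges $xy_r$, $y_rx_p$, $x_py_q$ are present (the last two from the biclique), and every non-consecutive pair is a non-edge, either by same-side non-adjacency or by the Lemma~\ref{lem1} choices; distinctness of $y_r$ and $y_q$ follows since $x$ is adjacent to one and not the other.

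The remaining, and genuinely harder, case is when $x$ has no neighbour in $B_1$, i.e. $N_G(x)\subseteq B_2$; here the biclique is ``far'' from $x$ and I must invoke connectedness. I would take a shortest path $R\colon x=p_0,p_1,\dots,p_s$ from $x$ to the first vertex it meets in $A_1\cup B_1$. A shortest path is induced, so $P_5$-freeness forces $R$ to have at most four vertices; since $p_1\in N_G(x)\subseteq B_2$ lies outside the biclique, this yields $s\in\{2,3\}$. In either subcase I would append one or two biclique vertices at the $p_s$ end to complete an induced $P_5$: when $p_s\in A_1$ I extend by some $y_q\in B_1$ and then by a vertex $x_r\in A_1$ with $x_rp_1\notin E(G)$, and when $p_s\in B_1$ I extend by a single such $x_r\in A_1$, where in both situations $x_r$ is furnished by Lemma~\ref{lem1} applied to $p_1\in B_2$ (which also guarantees $x_r\ne p_s$ in the first subcase). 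The crucial verifications are that these extensions introduce no chords, and these again reduce to same-side non-adjacency, the defining biclique edges, and the Lemma~\ref{lem1} choices for $p_1$ and for $x$. I expect this connectivity case to be the main obstacle, as it is the only place where a global hypothesis rather than the local $2K_2$ picture is needed, and the chord-freeness of the appended path must be checked separately according to whether the endpoint $p_s$ lands in $A_1$ or in $B_1$.
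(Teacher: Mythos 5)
Your proof is correct, and its overall skeleton matches the paper's: assume some $x\in A_2$ has a neighbour $y\in B_2$, use Lemma~\ref{lem1} to supply non-neighbours in the biclique, and assemble an induced $P_5$ for a contradiction (your Case~A, where $x$ also has a neighbour in $B_1$, is literally the paper's Case~3 up to relabelling, and your derivation of strictness directly from Lemma~\ref{lem1} replaces the paper's Case~1, which just re-runs the maximality argument). The genuine divergence is in the hard case $N_G(x)\subseteq B_2$. There the paper asserts that, ``since $G$ is connected,'' the neighbour $y_b\in B_2$ of $x$ must have a neighbour $x_c\in A_1$, and then builds the path $(x,y_b,x_c,y_k,x_k)$; but connectedness alone does not yield such an edge --- a priori all of $y_b$'s neighbours could lie in $A_2$, and ruling that out is essentially the half of Lemma~\ref{lem2} concerning $B_2$, i.e.\ the very statement being proved, so the paper's step is circular as written. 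Your shortest-path device supplies exactly the missing argument: the distance from $x$ to $A_1\cup B_1$ is at least $2$ (since $p_1\in B_2$) and at most $3$ (a shortest path is induced, and $P_5$-freeness caps it at four vertices); when the distance is $2$ your extension $(x,p_1,p_2,y_q,x_r)$ recovers the paper's path, and when it is $3$ you append a single Lemma~\ref{lem1} non-neighbour of $p_1$ to the induced $P_4$, a configuration the paper never confronts. The trade-off is clear: the paper's version is shorter but rests on an unjustified (in fact circular) step, while yours is slightly longer, fully self-contained, and uses only bipartiteness, the biclique edges, and the Lemma~\ref{lem1} choices in every chord check.
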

\begin{proof}
	On the contrary,  $\exists x_{a} \in A_{2}$, $N(x_{a}) \not\subset B_{1}$. Case 1: $N(x_{a}){=}B_{1}$. Then, $(A_{1} \cup \{x_{a}\}, B_{1})$ is the  maximum biclique, a contradiction.  Case 2: $N(x_{a}) \subseteq B_{2}$.  This implies that there exists $y_{b} \in B_2$ such that $y_b \in N(x_{a})$.  Since $G$ is connected, $N(y_{b}) \cap A_{1} \neq \emptyset$, say $x_{c} \in N(y_{b}) \cap A_{1}$.  In $G$, $P(x_{a},x_{k})=(x_{a},y_{b},x_{c},y_{k},x_{k})$ is an induced $P_{5}$.  Note that, due to the maximality of $(A_1,B_1)$, as per Lemma \ref{lem1}, we find $x_{k} \not\in N(y_{b}), y_{k} \not\in N(x_{a})$.  This  contradicts that $G$ is $P_5$-free.  Case 3: $N(x_{a}) \cap B_{1} \neq \emptyset$ and $N(x_{a}) \cap B_{2} \neq \emptyset$. I.e., $ \exists x_{a} \in A_{2}$ such that $y_{b},y_{c} \in N(x_{a})$ and $y_{b} \in B_{1}$, $y_{c} \in B_{2}$.  In $G$, $P(y_{c},y_{k})=(y_{c},x_{a},y_{b},x_{k},y_{k}$), $x_{k} \not\in N(y_{c})$, $y_{k} \not\in N(x_{a})$ is an induced $P_{5}$.  Note that the existence of $x_k,y_k$ is due to Lemma \ref{lem1}.  This is contradicting the $P_5$-freeness of $G$.  Similarly, $ \forall y \in B_{2}$, $N(y) \subset A_{1}$, can be proved. \qed
\end{proof}
\begin{lemma}
	\label{lem3}
	Let $G$ be a $P_5$-free chordal bipartite graph.  
	For $x_i,x_j \in A_{2}$, $i \not= j$, if $d(x_i) \leq d(x_j)$, then $N(x_i) \subseteq N(x_j)$.   Similarly, for $y_i,y_j \in B_{2}$, if $d(y_i) \leq d(y_j)$, $N(y_i) \subseteq N(y_j)$.
\end{lemma}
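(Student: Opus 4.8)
The plan is to prove the statement in two stages. First I would show that the neighborhoods of any two vertices of $A_2$ are \emph{nested}, i.e.\ comparable under inclusion, and only then use the degree hypothesis $d(x_i)\le d(x_j)$ to orient that inclusion the correct way. The engine driving the argument is Lemma~\ref{lem2}: it guarantees $N(x_i),N(x_j)\subset B_1$ for every $x_i,x_j\in A_2$, which confines all the relevant neighbors to the $B$-side of the maximum biclique $(A_1,B_1)$, where each of them is automatically adjacent to every vertex of $A_1$. This last fact is what will let me build a forbidden path.

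I would argue by contradiction, assuming $d(x_i)\le d(x_j)$ but $N(x_i)\not\subseteq N(x_j)$. Then there is a vertex $y_a\in N(x_i)\setminus N(x_j)$. Next I claim a vertex $y_b\in N(x_j)\setminus N(x_i)$ must also exist: if it did not, then $N(x_j)\subseteq N(x_i)$, which together with $y_a\in N(x_i)\setminus N(x_j)$ forces $N(x_j)\subsetneq N(x_i)$ and hence $d(x_j)<d(x_i)$, contradicting $d(x_i)\le d(x_j)$. This is the \emph{only} place the degree inequality enters, and it is precisely what excludes the reverse nesting that would otherwise be consistent with $N(x_i)\neq N(x_j)$.

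With $y_a$ and $y_b$ in hand, the core of the proof is to exhibit a forbidden induced $P_5$. By Lemma~\ref{lem2} both $y_a,y_b\in B_1$, and since $(A_1,B_1)$ is a nontrivial maximum biclique we have $A_1\neq\emptyset$, so I may pick any $x_c\in A_1$; being part of the biclique, $x_c$ is adjacent to both $y_a$ and $y_b$. I would then show that $(x_i,y_a,x_c,y_b,x_j)$ is an induced $P_5$. The consecutive edges $x_iy_a$, $y_ax_c$, $x_cy_b$, $y_bx_j$ hold by construction; the five vertices are distinct because $x_c\in A_1$ while $x_i,x_j\in A_2$ and $y_a\neq y_b$.

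The main obstacle, and really the only thing needing care, is verifying that this path is \emph{induced}, i.e.\ that every remaining pair is a non-edge. These split into two kinds: the same-side pairs $x_ix_c$, $x_ix_j$, $x_cx_j$, and $y_ay_b$ are non-edges simply because $G$ is bipartite, while the cross pairs $x_iy_b$ and $y_ax_j$ are non-edges by the defining choice of $y_b\notin N(x_i)$ and $y_a\notin N(x_j)$. This makes $(x_i,y_a,x_c,y_b,x_j)$ a genuine induced $P_5$, contradicting $P_5$-freeness and forcing $N(x_i)\subseteq N(x_j)$. The analogous claim for $y_i,y_j\in B_2$ follows by the same argument with the roles of $A$ and $B$ (and of $A_1$ and $B_1$) interchanged, invoking the second half of Lemma~\ref{lem2}.
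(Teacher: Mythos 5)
Your proof is correct, but it takes a genuinely different route from the paper's. The paper argues by contradiction and then splits into cases according to whether $N(x_i)\cap N(x_j)$ is empty: when a common neighbor $y_a$ exists, it serves as the middle vertex of the forbidden path $(y_c,x_j,y_a,x_i,y_b)$, where $y_c\in N(x_j)\setminus N(x_i)$ is extracted from the degree inequality by the same counting you use; when the neighborhoods are disjoint, the paper instead appeals to connectivity of $G$ to get an induced path of length at least $3$ between $y_a$ and $y_b$ and extends it by $x_i$ and $x_j$ to find an induced path of length at least $5$. Your construction eliminates this case split entirely: by invoking Lemma~\ref{lem2} you confine $y_a,y_b$ to $B_1$, so any vertex $x_c\in A_1$ is automatically adjacent to both and can serve as the hub of the single uniform path $(x_i,y_a,x_c,y_b,x_j)$, whose inducedness you verify pair by pair. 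What your approach buys is tightness and uniformity — the paper's disjoint-neighborhood case is stated somewhat loosely (one must take shortest paths and check adjacencies of $x_i,x_j$ to internal vertices before the claimed induced $P_5$ materializes), whereas every non-edge in your path is immediate from bipartiteness or the defining choices of $y_a,y_b$. What it costs is a dependency: your argument leans on Lemma~\ref{lem2} (and on $A_1\neq\emptyset$, which holds since $G$ is connected with at least one edge), while the paper's proof of this lemma is independent of Lemma~\ref{lem2}; since Lemma~\ref{lem2} is proved immediately before this statement in the paper, the dependency is harmless.
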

\begin{proof}
	Let us assume to the contrary that $N(x_i) \not\subseteq N(x_j)$.  I.e., $N(x_i)  \setminus  N(x_j) \neq \emptyset$. \newline \textbf{Case 1:} $N(x_i) \cap N(x_j)\neq \emptyset$. I.e., $\exists y_{a}\in B$ such that  $y_{a} \in N(x_i) \cap N(x_j)$.  Since $N(x_i) \not\subseteq N(x_j)$, $\exists y_{b}\in B$ such that  $y_{b} \not\in N(x_i) \cap N(x_j)$ and $y_{b} \in N(x_i)$.  Since $d(x_j) \geq d(x_i)$, vertex $x_j$ is adjacent to at least one more vertex $y_{c} \in B$ such that $y_{c} \not\in N(x_i)$.  The path $P(y_{c},y_{b})=(y_{c},x_j,y_{a},x_i,y_{b}$) is an induced $P_{5}$.  This is a contradiction. \newline \textbf{Case 2:} $N(x_i) \cap N(x_j)= \emptyset $, $ | N(x_i) | \geq 1$ and $ | N(x_j) | \geq 1$. I.e., $\exists y_{a},y_{b}\in B$ such that  $y_{a} \in N(x_i)$, $y_{a} \not\in N(x_j)$ and $y_{b} \in N(x_j)$, $y_{b} \not\in N(x_i)$.  Since $G$ is a connected graph,  $|P(y_{a},y_{b})| \geq 3$, an induced path of length at least 3.  The path $P(x_i,x_j)=(x_i,P(y_{a},y_{b}),x_j$) has an  induced  path of length at least 5. This is a contradiction.  Similarly, for all pairs of distinct vertices $y_i,y_j \in B_{2}$ with $d(y_i) \leq d(y_j)$, $N(y_i) \subseteq N(y_j)$ can be proved.   \qed
\end{proof}
\begin{theorem}
	\label{thm1}
	Let $G$ be a $P_5$-free chordal bipartite graphs with $(A_1,B_1)$ being the maximum biclique.  Let $A_{2}=(u_{1},u_{2},...,u_{p})$ and $B_{2}=(v_{1},v_{2},...,v_{q})$ are orderings of vertices.  If $d_{G}(u_1) \leq d_{G}(u_2) \leq d_{G}(u_3) \leq \ldots \leq d_{G}(u_p)$, then $N(u_{1}) \subseteq N(u_{2}) \subseteq N(u_{3}) \subseteq \ldots \subseteq N(u_{p})$.  Further, if $d_{G}(v_1) \leq d_{G}(v_2) \leq d_{G}(v_3) \leq \ldots \leq d_{G}(v_q)$, then $N(v_{1}) \subseteq N(v_{2}) \subseteq N(v_{3}) \subseteq \ldots \subseteq N(v_{q})$.
\end{theorem}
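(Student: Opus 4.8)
The plan is to obtain the full nested chain as a direct consequence of Lemma~\ref{lem3}, which already supplies pairwise containment between any two vertices of $A_2$ (respectively $B_2$) when they are compared by degree. The theorem is essentially a repackaging of these pairwise relations into a single totally ordered chain, so I expect no genuinely new structural argument to be required beyond Lemma~\ref{lem3}; the heavy lifting was already done there, where the $P_5$-freeness and connectivity of $G$ were exploited.

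First I would focus on the ordering $A_2 = (u_1, u_2, \ldots, u_p)$ with $d_G(u_1) \leq d_G(u_2) \leq \ldots \leq d_G(u_p)$. Fix any index $k \in \{1, \ldots, p-1\}$. Since $u_k$ and $u_{k+1}$ are distinct vertices of $A_2$ satisfying $d_G(u_k) \leq d_G(u_{k+1})$, Lemma~\ref{lem3} applies directly and yields $N(u_k) \subseteq N(u_{k+1})$. Carrying this out for every consecutive pair produces the individual containments $N(u_1) \subseteq N(u_2)$, $N(u_2) \subseteq N(u_3)$, up through $N(u_{p-1}) \subseteq N(u_p)$.

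Next I would invoke the transitivity of set inclusion to splice these consecutive containments into the single chain $N(u_1) \subseteq N(u_2) \subseteq \ldots \subseteq N(u_p)$, which is precisely the claimed conclusion for $A_2$. The argument for the ordering $B_2 = (v_1, \ldots, v_q)$ is verbatim identical, appealing instead to the $B_2$-part of Lemma~\ref{lem3}, so I would simply remark that it follows by symmetry rather than rewrite it.

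The only point that warrants a moment's care — and the closest thing to an obstacle in an otherwise routine assembly — is the treatment of ties in the degree sequence. When $d_G(u_k) = d_G(u_{k+1})$, Lemma~\ref{lem3} still applies because its hypothesis is phrased with $\leq$; applying it in both directions then forces $N(u_k) = N(u_{k+1})$, so equal-degree vertices merely contribute equalities inside the chain and never disrupt the nesting. Consequently the chain holds no matter how ties are broken when fixing the ordering, and the statement is established.
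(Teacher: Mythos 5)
Your proposal is correct and follows essentially the same route as the paper: the paper also derives the chain directly from Lemma~\ref{lem3}, merely packaging the repeated application and transitivity as an induction on $|A_2|$, which is the same argument in different clothing. Your explicit remark about equal-degree ties is a small bonus of clarity that the paper leaves implicit.
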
 
\begin{proof} 
	We shall prove by mathematical induction on $|A_2|$.  Base Case: $|A_{2}|=2, A_2=(u_{1},u_{2})$ such that $d(u_{1}) \leq d(u_2)$.  By Lemma \ref{lem3}, $N(u_1) \subseteq N(u_2)$.  Induction step: Consider $A_2=(u_{1},u_{2},u_{3},\ldots,u_{p}), p \geq 3$.  Consider the vertex $u_{p} \in A_{2}$ such that $d(u_p) \geq d(u_{p-1})$.  By Lemma \ref{lem3}, $N(u_{p-1}) \subseteq N(u_{p})$ is true.  By the hypothesis, $N(u_{1}) \subseteq N(u_{2}) \subseteq N(u_{3}) \subseteq \ldots \subseteq N(u_{p-1})$.  By combining the hypothesis and the fact that $N(u_{p-1}) \subseteq N(u_{p})$, our claim follows.  Similarly for $B_{2}$ as well.\qed
\end{proof}
We refer to the above ordering of vertices as {\em Nested Neighbourhood Ordering (NNO)} of $G$.  From now on, we shall arrange the vertices in $A_2$ in non-decreasing order of their degrees so that we can work with {\em NNO} of $G$.
\begin{lemma}
	\label{bisplittop5}
	Let $G$ be a $P_5$-free chordal bipartite graph. Then, $G$ is bisplit.
	\end{lemma}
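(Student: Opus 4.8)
The plan is to exhibit an explicit partition of $V(G)$ into three stable sets, one pair of which induces a biclique, reading it off directly from the decomposition $(A_1,B_1,A_2,B_2)$ already fixed for $P_5$-free chordal bipartite graphs. Specifically, I would set $Y=A_1$, $Z=B_1$, and $X=A_2\cup B_2$, and then verify the three requirements in the definition of a bisplit graph: that $X$, $Y$, $Z$ are each stable, and that $Y\cup Z$ induces a complete bipartite subgraph.

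First I would check that $Y$ and $Z$ are stable. Since $A_1\subseteq A$ and $B_1\subseteq B$, and $G$ is bipartite with every edge joining a vertex of $A$ to a vertex of $B$, neither $A_1$ nor $B_1$ can contain an edge; hence both are stable sets. The main step is to show that $X=A_2\cup B_2$ is stable. Again because $G$ is bipartite, the only edges that could possibly lie inside $X$ would join a vertex of $A_2$ to a vertex of $B_2$. But by Lemma~\ref{lem2}, $N_G(x)\subset B_1$ for every $x\in A_2$, so no vertex of $A_2$ has a neighbour in $B_2$. Therefore $X$ induces no edge and is stable.

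Finally I would confirm that $Y\cup Z=A_1\cup B_1$ induces a complete bipartite subgraph: this is immediate, since $(A_1,B_1)$ was chosen to be a maximum biclique, so every vertex of $A_1$ is adjacent to every vertex of $B_1$. As $X\cup Y\cup Z=(A_1\cup A_2)\cup(B_1\cup B_2)=V(G)$ and the three sets are pairwise disjoint, $(X,Y,Z)$ is the desired partition and $G$ is bisplit.

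I do not expect a genuine obstacle here; the entire argument hinges on the single fact, supplied by Lemma~\ref{lem2}, that the neighbourhoods of $A_2$ avoid $B_2$. The one point worth stating carefully is that bipartiteness already eliminates every candidate edge inside $X$ except the $A_2$--$B_2$ ones, so that Lemma~\ref{lem2} alone settles the stability of $X$; in particular the {\em NNO} of Theorem~\ref{thm1} is not needed for this result.
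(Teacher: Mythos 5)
Your proposal is correct and follows essentially the same route as the paper: both read the bisplit partition directly off the decomposition $(A_1,B_1,A_2,B_2)$, taking $A_1\cup B_1$ as the biclique and $A_2\cup B_2$ as the stable set. In fact your write-up is slightly more careful than the paper's, which loosely speaks of ``four stable sets'' and never explicitly notes that the union $A_2\cup B_2$ is itself stable, whereas you correctly pin that step on Lemma~\ref{lem2} (no vertex of $A_2$ has a neighbour in $B_2$), which is exactly the fact the paper's terse argument implicitly relies on.
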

\begin{proof}
	By our structural result, we know that $G$ is partitioned into four stable sets namely $A_1$, $B_1$, $A_2$, and $B_2$. We observe that $A_1 \cup B_1$ induces a maximum biclique and $A_2$, $B_2$ is an independent set. Therefore, by definition, $G$ is bisplit. \qed
	\end{proof}
\section{Hamiltonicity in $P_{5}$-free Chordal Bipartite graphs}
In this section, we shall present polynomial-time algorithms for the Hamiltonian cycle (path) problem in $P_5$-free chordal bipartite graphs.  For a connected graph $G$ and set $S \subset V(G)$, $c(G-S)$ denotes the number of connected components in the graph induced on the set $V(G) \setminus S$.  It is well-known, due to, Chvatal \cite{dbwest2003} that if a graph $G$ has a Hamiltonian cycle, then for every $S\subset V(G), c(G-S) \leq |S|$. Similarly, if a graph $G$ has a Hamiltonian path, then for every $S\subset V(G), c(G-S) \leq | S|+1$. 
\begin{theorem}
	\label{thm2}
	For a $P_5$-free chordal bipartite graph $G$, $G$ has a  Hamiltonian cycle if and only if (i) $|A| = |B|$ and (ii) $A_{2}$ has an ordering $(u_{1},u_{2},\ldots,u_{p})$, such that $\forall u_g, d(u_g) > g$, $1 \leq g \leq p$ and $B_{2}$ has an ordering $(v_{1},v_{2},\ldots,v_{q})$, $ \forall v_h, d(v_h) >h$, $1 \leq h \leq q$.
\end{theorem}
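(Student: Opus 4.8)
The plan is to prove the two directions separately, handling the symmetric conditions on $A_2$ and $B_2$ identically, and to lean throughout on the Nested Neighbourhood Ordering of Theorem~\ref{thm1} together with Lemma~\ref{lem1} and Lemma~\ref{lem2}.

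\textbf{Necessity.} Condition (i) is immediate: a Hamiltonian cycle in a bipartite graph alternates between $A$ and $B$, so it uses equally many vertices from each side, forcing $|A|=|B|$. For (ii) I would order $A_2$ by the NNO as $(u_1,\dots,u_p)$ with $N(u_1)\subseteq\cdots\subseteq N(u_p)\subseteq B_1$ and argue by contradiction using Chv\'atal's condition $c(G-S)\le|S|$. Suppose $d(u_g)\le g$ for some $g$ and set $S=N(u_g)\subseteq B_1$, so $|S|=d(u_g)\le g$. By nesting, each of $u_1,\dots,u_g$ has all its neighbours inside $S$, hence these $g$ vertices are isolated in $G-S$, contributing $g$ components. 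Since Lemma~\ref{lem1} gives $N(u_g)\subsetneq B_1$ and $A_1$ is non-empty, the set $A_1\cup(B_1\setminus S)$ induces a (connected) sub-biclique disjoint from the isolated $u_i$, providing one further component. Thus $c(G-S)\ge g+1>g\ge|S|$, contradicting Chv\'atal's condition. The same separator argument applied to $B_2$ with $S=N(v_h)\subseteq A_1$ yields the condition on $B_2$.

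\textbf{Sufficiency.} Here I would construct a Hamiltonian cycle explicitly. First, $d(u_p)>p$ and $d(u_p)\le|B_1|$ give $|B_1|\ge p+1$, and symmetrically $|A_1|\ge q+1$. Using the NNO I greedily build an alternating path $P_A=b^{(0)}u_1b^{(1)}u_2\cdots u_pb^{(p)}$ covering all of $A_2$ and using $p+1$ distinct vertices of $B_1$: having placed $u_1,\dots,u_{g-1}$ on $g$ distinct $B_1$-vertices, the bound $d(u_g)>g$ leaves an unused neighbour of $u_g$, while the nesting $N(u_{g-1})\subseteq N(u_g)$ guarantees the current endpoint $b^{(g-1)}$ is still adjacent to $u_g$, so the path extends. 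Symmetrically I build $P_B=a^{(0)}v_1a^{(1)}\cdots v_qa^{(q)}$ covering $B_2$ with $q+1$ distinct vertices of $A_1$. Writing $A_1'$, $B_1'$ for the leftover biclique vertices, condition (i) gives $|A_1'|=|B_1'|=:r$, and since $(A_1,B_1)$ is complete bipartite I can splice $P_A$, $P_B$ and the leftover vertices into a single cycle, for instance
\[
b^{(0)}\xrightarrow{P_A}b^{(p)}-a^{(0)}\xrightarrow{P_B}a^{(q)}-b'_1-a'_1-\cdots-b'_r-a'_r-b^{(0)},
\]
every join being a biclique edge. This visits each vertex exactly once, so it is Hamiltonian.

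\textbf{Main obstacle.} The delicate part is the sufficiency direction. Verifying that the greedy placement never stalls is exactly where the strict inequality $d(u_g)>g$ (rather than $\ge$) is needed, and it is essential that the neighbourhoods are nested so that the already-built endpoint stays adjacent to the next inserted vertex. I also expect to spend care on the degenerate cases ($p=0$, $q=0$, $r=0$, and very small bicliques), where the generic splicing formula must be adjusted so that the two cycle-neighbours assigned to a shared vertex remain distinct and a genuine cycle, rather than a shorter closed walk, is produced.
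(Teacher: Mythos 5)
Your proposal is correct and follows essentially the same route as the paper: necessity via Chv\'atal's condition with $S=N(u_g)$ as the separator (your component count is in fact spelled out more carefully than the paper's), and sufficiency via an explicit alternating cycle that threads $A_2$ through $p+1$ vertices of $B_1$, threads $B_2$ through $q+1$ vertices of $A_1$, and splices the leftover biclique vertices, exactly as in the paper's construction $(y_1,u_1,y_2,\ldots,u_p,y_{p+1},x_1,v_1,\ldots,v_q,x_{q+1},y_{p+2},x_{q+2},\ldots,y_j,x_i,y_1)$. Your greedy never-stalls induction is just a restatement of the paper's use of the prefix property $N(u_g)\supseteq\{y_1,\ldots,y_{g+1}\}$, so there is no substantive difference to report.
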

\begin{proof}
	{\em Necessity:} (i) Any cycle in a bipartite graph is even and alternates between vertices of $A$ and $B$. Since the Hamilton cycle visits all the vertices in $A$ and $B$, it must have $ |A|= |B|$. (ii) On the contrary, $ \exists u_g \in A_{2}$  such that $u_g$ is the first vertex in the ordering with $d(u_g) \leq g$.  That is, for $u_k \in \{u_1,\ldots,u_{g-1}\}$, $d_{G}(u_k) > k$ and $d(u_g) \leq g$.  Since $G$ follows {\em NNO}, $d_{G}(u_g)= g$.  From Theorem \ref{thm1},  we know that $N(u_1) \subseteq N(u_2) \subseteq \ldots \subseteq N(u_{g-1}) \subseteq N(u_g)$.   This implies that $c(G-N(u_g)) = g+1 >g$.  This is a contradiction to Chvatal's necessary condition for the Hamiltonian cycle. Similarly, in $B_{2}$, $ \forall v_h,$ $d(v_h)>h$ can be proved. \\
	{\em Sufficiency:} Let $i=|A_1|$ and $j=|B_1|$. Since $A_{2}$ has an ordering such that $ \forall u_g \in A_2$, $d(u_g)>g$, for clarity purpose, we define $N_G(u_g)$ as follows; $N(u_g)=\{y_{1},y_{2},\ldots,y_{l}\}$, $g<l<j$,  that is, $u_1$ is adjacent to at least two vertices $\{y_{1},y_{2}\}$ and at most $j-1$ vertices $\{y_{1},y_{2},\ldots,y_{j-1}\}$, $u_2$ is adjacent to at least three vertices $\{y_{1},y_{2},y_{3}\}$ and at most $j-1$ vertices $\{y_{1},y_{2},\ldots,y_{j-1}\}$ and similarly $u_p$ is adjacent to at least $p+1$ vertices $\{y_{1},y_{2},\ldots,y_{p+1}\}$ and at most $j-1$ vertices $\{y_{1},y_{2},\ldots,y_{j-1}\}$.  Observe that, due to the maximality of $(A_1,B_1)$, any $u_g$ of $A_2$ can be adjacent to at most $j-1$ vertices of $B_1$.  Similarly, in $B_2$, for all $v_h \in B_2$,               $N(v_h)=\{x_{1},x_{2},\ldots,x_{l}\},$ $h<l<i$. \\
	Let $d(u_p)=r, p+1 \leq r \leq j-1$ and $d(v_q)=s, q+1 \leq s \leq i-1$. The vertices in $A_{1}$ can be ordered as $(x_{1},x_{2},\ldots,x_{q},x_{q+1},\ldots,x_{i})$ and the vertices in $B_{1}$ can be ordered as $(y_{1},y_{2},\ldots,y_{p},y_{p+1},\ldots,y_{j})$.  
	Note that $A_{3}=A_{1}{\setminus}\{x_{1},x_{2},\ldots,x_{q+1}\}=\{x_{q+2},\ldots,x_{i-1},x_{i}\}$ and $B_{3}=B_{1}{\setminus}\{y_{1},y_{2},\ldots,y_{p+1}\}=\{y_{p+2},\ldots,y_{j-1},y_{j}\}$. 
	Further, $|A_{3}| = | A | -(| A_{2} | +q+1)= | A | -(p+q+1)$ and $ | B_{3} | = | B | -( | B_{2} | +p+1)= | B | -(q+p+1)$. 
	Since $ | A | = | B | $, it follows that $ | A_{3} | = | B_{3} | $. 
	In $G$,  $(y_{1},u_{1},y_{2},u_{2},\ldots,y_{p},u_{p},y_{p+1},x_{1},v_{1},x_{2},v_{2},\ldots,x_{q},v_{q},x_{q+1},y_{p+2},x_{q+2},$ $\ldots,y_{j},x_{i},y_{1})$ is a Hamiltonian cycle. \qed
\end{proof}
\begin{theorem}
	\label{thm3}
	For a $P_5$-free chordal bipartite graph $G$, $G$ has a  Hamiltonian path  if and only if one of the following is true \\
	(i)  $|A|=|B|$ and $A_2$ has an ordering,  $\forall u_g, d(u_g) \geq g$, $1 \leq g \leq p$ and $B_{2}$ has an ordering, $ \forall v_h, d(v_h) \geq h$, $1 \leq h \leq q$. \\
	(ii) $|A|=|B|+1$ and $A_2$ has an ordering,$\forall u_g, d(u_g) \geq g$, $1 \leq g \leq p$ and $B_{2}$ has an ordering, $ \forall v_h, d(v_h) >h$, $1 \leq h \leq q$. 
\end{theorem}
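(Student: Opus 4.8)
The plan is to prove both directions, closely paralleling the proof of Theorem~\ref{thm2}, while accounting for the fact that a Hamiltonian path, unlike a cycle, has two free endpoints: this both relaxes each degree requirement by one and forces a size constraint between $|A|$ and $|B|$. First, for necessity I would settle the sizes. Since $G$ is bipartite a Hamiltonian path alternates between $A$ and $B$, so the two part sizes differ by at most one; fixing the convention $|A|\ge|B|$ leaves exactly the cases $|A|=|B|$ and $|A|=|B|+1$. I would record the crucial structural consequence: in case~(i) the path has one endpoint in each part, whereas in case~(ii) \emph{both} endpoints lie in $A$ and every vertex of $B$ is internal.

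For the degree bounds I would reuse the separator idea of Theorem~\ref{thm2}. If some $u_g\in A_2$ is the first vertex (in NNO order) with $d(u_g)<g$, then by Theorem~\ref{thm1} the neighbourhoods $N(u_1)\subseteq\cdots\subseteq N(u_g)$ are nested and $d(u_g)=g-1$, so deleting $S=N(u_g)$ isolates $u_1,\dots,u_g$ and leaves the non-empty biclique remainder as a further component, giving $c(G-S)\ge g+1>|S|+1$. This violates Chv\'atal's path bound $c(G-S)\le|S|+1$, so $d(u_g)\ge g$ in both cases; the identical argument on $B_2$ yields $d(v_h)\ge h$ in case~(i).

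The delicate point, and the step I expect to be the main obstacle, is the \emph{strict} bound $d(v_h)>h$ in case~(ii): taking $S=N(v_h)$ with $d(v_h)=h$ gives only $c(G-S)=h+1=|S|+1$, which does \emph{not} violate Chv\'atal's purely connectivity-based condition, so the obstruction here is genuinely bipartite. I would argue instead from the endpoint structure. Since both endpoints lie in $A$, each isolated vertex $v_1,\dots,v_h\in B$ is internal and, using $N(v_i)\subseteq N(v_h)=S$, must have both of its path-neighbours inside $S$. This places $2h$ path-incidences on the $h$ vertices of $S$, each of which can supply at most two, so every vertex of $S$ in fact has both path-neighbours among $\{v_1,\dots,v_h\}$. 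Hence $T=S\cup\{v_1,\dots,v_h\}$ has no path-edge leaving it; but $|T|=2h$ is even while $|V(G)|=2|B|+1$ is odd, so $T\ne V(G)$, contradicting that the path is connected and reaches vertices outside $T$. Thus $d(v_h)=h$ is impossible and the strict inequality is forced.

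For sufficiency I would build the path explicitly, in the spirit of the cycle in Theorem~\ref{thm2}. The bound $d(u_g)\ge g$ gives $\{y_1,\dots,y_g\}\subseteq N(u_g)$, so the comb $u_1,y_1,u_2,y_2,\dots,u_p,y_p$ is a valid path covering $A_2$ and $y_1,\dots,y_p$; symmetrically $d(v_h)\ge h$ threads $B_2$ with part of $A_1$ into a second comb. The leftover vertices of the biclique $(A_1,B_1)$ induce a balanced complete bipartite graph (the balance being forced by the size hypothesis), so they are alternated and spliced between the two combs along biclique edges to form a single Hamiltonian path; in case~(i) the endpoints land as $u_1\in A$ and $v_1\in B$, matching $|A|=|B|$. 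In case~(ii) the strict bound $d(v_h)>h$ upgrades the second comb to $x_1,v_1,x_2,v_2,\dots,x_q,v_q,x_{q+1}$, absorbing one extra vertex of $A_1$ so as to accommodate $|A|=|B|+1$, and both endpoints then lie in $A$ as required. I would finish with a routine vertex count showing all of $V(G)$ is covered, verifying each spliced edge from biclique membership and the nested-neighbourhood bounds of Theorem~\ref{thm1}, which is mechanical.
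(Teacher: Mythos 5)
Your proposal is correct, and most of it (the size argument, the $A_2$ bounds via the separator $N(u_g)$, the case-(i) bounds on $B_2$, and the explicit comb-and-splice construction for sufficiency) follows essentially the same route as the paper. The one place you genuinely diverge is the strict bound $d(v_h)>h$ in case (ii). The paper does \emph{not} abandon Chv\'atal's condition there: it simply chooses a different separator, $S=B_1\cup\{v_{r+1},\ldots,v_q\}$, for which $|S|=j+q-r$ while $c(G-S)\geq p+(i-r)+1$ (the $p$ isolated $A_2$ vertices, the at least $i-r$ isolated $A_1$ vertices, and the component containing $v_1,\ldots,v_r$); using $i+p=j+q+1$ this gives $c(G-S)\geq|S|+2$, violating the path bound. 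So your meta-claim that this obstruction is ``genuinely bipartite'' and invisible to Chv\'atal's condition is false --- it is only invisible to the particular separator $N(v_h)$ you tried. That said, your replacement argument is valid: with both endpoints in $A$, every $v_i$, $i\leq h$, is internal with both path-neighbours in $N(v_h)$, forcing all $2h$ incidences onto the $h$ vertices of $N(v_h)$, so $T=N(v_h)\cup\{v_1,\ldots,v_h\}$ is closed under path-adjacency; connectivity of the path then forces $T=V(G)$, impossible since $|T|=2h$ is even and $|V(G)|=2|B|+1$ is odd. (You should state explicitly that the weak bound $d(v_h)\geq h$ in case (ii) comes first from the same Chv\'atal argument as in case (i), so that your parity argument only needs to kill equality; alternatively note that $d(v_h)<h$ makes the incidence count $2h>2d(v_h)$ fail outright by pigeonhole.) The trade-off: the paper's proof keeps everything inside one uniform technique, while yours isolates exactly where the extra endpoint structure of case (ii) matters, at the cost of a second mechanism.
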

\begin{proof}
	{\em Necessity:} (i) Without loss of generality, we assume $A \geq B$.  Any Hamiltonian path starting at $A$ and alternates between $A$ and $B$ can end at $A$ or $B$.  Therefore $|A|=|B|$ or $|A|=|B|+1$. To prove that $A_2$ satisfies the ordering, we assume to the contrary that $ \exists u_g \in A_{2}$ such that $u_g$ is the first vertex in the ordering such that $d(u_g)<g$.  Since $G$ follows {\em NNO}, $d_{G}(u_g)= g - 1$. From Theorem \ref{thm1}, we know that $N(u_1) \subseteq N(u_2) \subseteq \ldots \subseteq N(u_{g-1}) \subseteq N(u_g)$.  Note that, as per the ordering of $A_2$, $N(u_{g-1})=N(u_g)$. On removing $N(u_g)$ from $G$ we have $g$ components in $A_2$ and $A_1 \cup B_2 \cup (B_1 -N(u_g))$ forms another component.  This implies that $c(G-N(u_g)) =  g+1$.  Clearly, $g+1 \not \leq g-1$.  Thus we contradict the Chvatal's necessary condition for the Hamiltonian path.
	Similarly $B_{2}$ has an ordering such that $ \forall v_h, d(v_h) \geq h$, $1 \leq h \leq q$. \\
	(ii) For $A_2$, the argument is similar to the above.   Suppose ${\exists}v_{r}{\in}B_{2}$ such that $v_{r}$ is the first vertex in the ordering such that $d(v_{r}){\leq}r$. From Theorem \ref{thm1}, $N(v_{1}){\subseteq}N(v_{2}){\subseteq}\ldots{\subseteq}N(v_{r-1}){\subseteq}N(v_{r})$. Consider the set $S=B_{1}{\cup}\{v_{r+1},v_{r+2},\ldots,v_{q}\}$ and $|S|=j+q-r-1+1=j+q-r$.  Further, $c(G-S) {\geq} p+1+i-r$.  Note that $|A|=i+p$ and $|B|=j+q$.  Since $|A|=|B|+1$, $c(G-S) {\geq} p+1+i-r = j+q+1+1-r=j+q-r+2$.  Clearly, $c(G-S) \not \leq |S|+1$, contradicting the Chvatal's condition for the Hamiltonian path. \\
	{\em Sufficiency:} (i) Let $N(u_g)=\{y_{1},\ldots,y_{l}\}$, $g \leq l < j$ and $N(v_h)=\{x_{1},\ldots,x_{l}\},$ $h{\leq}l<i$. Consider $A_{3}=A_{1}{\setminus}\{x_{1},x_{2},\ldots,x_{q}\}=\{x_{q+1},x_{q+2},\ldots,x_{i-1},x_{i}\}$.  $B_{3}=B_{1}{\setminus}\{y_{1},y_{2},\ldots,y_{p}\}=\{y_{p+1},x_{p+2},\ldots,y_{j-1},y_{j}\}$.   Note that $|A_3|=|A|-(p+q)$ and $|B_3|=|B|-(p+q)$.  In $G$, \\$P(u_{1},y_{1},u_{2},y_{2},\ldots,u_{p},y_{p},x_{q+1},y_{p+1},x_{q+2},y_{p+2},\ldots,x_{i},y_{j},x_{q},v_{q},\ldots,x_{1},v_{1})$ is a Hamiltonian path.  \\
	(ii) Consider $A_{3}=A_{1}{\setminus}\{x_{1},x_{2},\ldots,x_{q+1}\}=\{x_{q+2},x_{q+3},\ldots,x_{i-1},x_{i}\}$ and 
	$B_{3}=B_{1}{\setminus}\{y_{1},y_{2},\ldots,y_{p}\}=\{y_{p+1},x_{p+2},\ldots,y_{j-1},y_{j}\}$.
	In $G$, $P(u_{1},y_{1},u_{2},y_{2},\ldots,u_{p},y_{p},x_{q+2},y_{p+1},x_{q+3},y_{p+2},\ldots,x_{i},y_{j},x_{q+1},v_{q},\ldots,x_{2},$ $v_{1},x_{1})$ is a Hamiltonian path.  This completes the proof of this claim. \qed
\end{proof}
\begin{theorem}
	Let $G$ be a $P_5$-free chordal bipartite graph.  Finding a Hamiltonian path and cycle in $G$ are polynomial-time solvable.
\end{theorem}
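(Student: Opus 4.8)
The plan is to prove the theorem constructively by turning the characterizations of Theorems~\ref{thm2} and~\ref{thm3} into an explicit polynomial-time procedure. First I would establish that the entire structural apparatus can be computed efficiently: given $G$, I would compute a maximum biclique $(A_1,B_1)$, thereby obtaining the partition $A = A_1 \cup A_2$ and $B = B_1 \cup B_2$, and then sort the vertices of $A_2$ and $B_2$ in non-decreasing order of degree to obtain the \emph{NNO} guaranteed by Theorem~\ref{thm1}. Each of these steps is clearly polynomial: degree computation and sorting are standard, and I would argue that a maximum biclique in this restricted class is obtainable in polynomial time (indeed, the NNO structure makes the nested neighborhoods easy to read off once degrees are sorted). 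The main point is that all the combinatorial data referenced in the if-and-only-if conditions is polynomially computable.

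Next I would \emph{test the conditions}. For the Hamiltonian cycle, I would check in linear time whether $|A| = |B|$ and whether, in the sorted orderings, $d(u_g) > g$ holds for every $1 \le g \le p$ and $d(v_h) > h$ holds for every $1 \le h \le q$. For the Hamiltonian path I would check the two disjunctive cases of Theorem~\ref{thm3} analogously ($|A|=|B|$ with $d(u_g)\ge g$, $d(v_h)\ge h$; or $|A|=|B|+1$ with $d(u_g)\ge g$, $d(v_h)>h$). Each test is a single pass over the sorted lists, hence polynomial. By Theorems~\ref{thm2} and~\ref{thm3}, these tests correctly decide existence.

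Finally, for the \emph{construction} (the theorem asks us to \emph{find} a path/cycle, not merely decide), I would invoke the sufficiency directions of Theorems~\ref{thm2} and~\ref{thm3}, which are already constructive: they exhibit an explicit interleaving sequence such as $(y_1,u_1,y_2,u_2,\ldots,y_p,u_p,y_{p+1},x_1,v_1,\ldots)$ that realizes the Hamiltonian cycle, and analogous sequences for the path. I would note that once the conditions pass, writing out these sequences requires only indexing into the sorted orderings and the biclique partition, which is again linear in $|V(G)|$. Thus the overall running time is dominated by computing the biclique and sorting.

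I expect the main obstacle to be the efficient computation of the maximum biclique $(A_1,B_1)$, since maximum biclique is NP-hard in general bipartite graphs; the argument must lean on the $P_5$-free chordal bipartite structure (and Lemmas~\ref{lem1}--\ref{lem3}) to show that here the biclique can be extracted in polynomial time, for instance by exploiting that the nested neighborhood structure forces $A_1$ and $B_1$ to be recoverable from the high-degree vertices. Once that is settled, everything else is a routine consequence of the already-proven characterizations.
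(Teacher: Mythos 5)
Your proposal is correct and takes essentially the same route as the paper: the paper's entire proof is the single observation that the characterizations in Theorems~\ref{thm2} and~\ref{thm3} are constructive, which is exactly your test-the-conditions-then-write-out-the-sequence procedure. The preprocessing issue you flag (computing $(A_1,B_1)$ and the \emph{NNO}) is not addressed in the paper's proof either; the paper only asserts, in its closing complexity remarks, that the \emph{NNO} is obtainable in linear time via a perfect edge elimination ordering, so your proposal is, if anything, more explicit than the original on this point.
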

\begin{proof}
	Follows from  the characterizations presented in Theorems \ref{thm2} and \ref{thm3} as the proofs are constructive. \qed
\end{proof}
\section{Chv{\'a}tal's Necessary condition is Sufficient on $P_5$-free chordal bipartite graphs}
For the Hamiltonian cycle (path) problem, it is well-known from the literature that there are no necessary and sufficient conditions. However, we know the condition due to Chv{\'a}tal is necessary but not sufficient and results due to Ore and Dirac are sufficient but not necessary. In this section, we show that Chv{\'a}tal's necessary condition is sufficient for $P_5$-free chordal bipartite graphs.

\begin{theorem}
	Let $G$ be a $P_5$-free chordal bipartite graph with $|A| = |B|$. If $G$ satisfies $c(G-S)\leq|S|$, for every non-empty subset $S$ $\subseteq V(G)$, then $G$ has the Hamiltonian cycle.  
\end{theorem}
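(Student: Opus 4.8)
The plan is to reduce the statement entirely to Theorem \ref{thm2}, whose sufficiency direction already constructs an explicit Hamiltonian cycle from two ingredients: (i) $|A|=|B|$ and (ii) the nested neighbourhood ordering of $A_2$ satisfies $d(u_g)>g$ for every $g$, with the analogous condition on $B_2$. Condition (i) is given by hypothesis, so the whole burden of the proof is to \emph{derive} condition (ii) from the single assumption that $c(G-S)\le|S|$ for every non-empty $S\subseteq V(G)$. In other words, I want to run the necessity argument of Theorem \ref{thm2} in its contrapositive form.

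First I would suppose, for contradiction, that (ii) fails on $A_2$, and let $u_g$ be the first vertex in the NNO of $A_2$ with $d(u_g)\le g$. Since the vertices of $A_2$ are arranged in non-decreasing order of degree, the earlier vertices satisfy $d(u_k)>k$ for $k<g$, and the NNO forces $d(u_g)=g$; by Theorem \ref{thm1} this gives the nesting $N(u_1)\subseteq N(u_2)\subseteq\cdots\subseteq N(u_g)$ with $|N(u_g)|=g$. I would then take $S=N(u_g)$, which is non-empty since $g\ge 1$. Because the neighbourhood of each $u_k$ ($k\le g$) is contained in $S$, every one of $u_1,\dots,u_g$ becomes an isolated vertex in $G-S$, while the remaining vertices of $G$ form at least one further component. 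Hence $c(G-S)\ge g+1>g=|S|$, contradicting the hypothesis. The identical argument applied to the NNO of $B_2$ rules out a failure there, so (ii) holds for both $A_2$ and $B_2$.

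With (i) supplied by hypothesis and (ii) now established, I would invoke the sufficiency direction of Theorem \ref{thm2}, which exhibits a concrete Hamiltonian cycle, to finish the proof.

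The one point that needs care is the component count that produces the strict inequality $c(G-S)>|S|$: I must be sure that deleting $S=N(u_g)$ isolates \emph{exactly} the vertices $u_1,\dots,u_g$ and does not accidentally reconnect them or over-split the remainder. This is where the earlier structure does the work — the nesting from Theorem \ref{thm1} guarantees $N(u_k)\subseteq N(u_g)$ for $k\le g$ so these vertices are genuinely isolated, while Lemma \ref{lem2} guarantees that vertices of $A_2$ have no neighbours outside $B_1$, so no hidden edge keeps them attached. I expect no real obstacle beyond this bookkeeping; the substance of the result is already packaged in Theorems \ref{thm1} and \ref{thm2}, and the only new observation is that their necessity argument is a genuine contrapositive, turning Chvátal's condition from a consequence of Hamiltonicity into a sufficient hypothesis in this class.
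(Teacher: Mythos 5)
Your proposal is correct and follows essentially the same route as the paper: both arguments take the first violating vertex $u_g$ in the NNO, observe $d(u_g)=g$, set $S=N(u_g)$, and derive $c(G-S)\ge g+1>|S|$ to contradict the hypothesis, then conclude via the constructive sufficiency of Theorem~\ref{thm2}. Your phrasing as a direct verification of condition (ii) rather than the paper's ``assume no Hamiltonian cycle'' framing is only a cosmetic reorganization of the same argument.
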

\begin{proof}
	Assume on the contrary, that $G$ has no Hamiltonian cycle, then there exists $u_g$ or $v_h$ in $A_2 (B_2)$ that violates the degree conditions mentioned in Theorem 2.  Let $u_g$ be the first vertex in the ordering with $d(u_g)\leq g$.  By Theorem \ref{thm1}, we know that $G$ follows {\em NNO}, $d_{G}(u_g)= g$.  This implies that $c(G-N(u_g)) = g+1 >g$, which is a contradiction to the premise of the theorem. Similarly, $v_h$ in $B_2$ can be proved. \qed
	
\end{proof}

\begin{theorem}
	Let $G$ be a $P_5$-free chordal bipartite graph with $|A| = |B|$ or $|A| = |B|+1$ . If $G$ satisfies $c(G-S)\leq|S|+1$, for every non-empty subset $S$ $\subseteq V(G)$, then $G$ has the Hamiltonian path.  
\end{theorem}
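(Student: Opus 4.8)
The plan is to argue by contradiction, leaning on the characterization already established in Theorem \ref{thm3}. Suppose $G$ satisfies $c(G-S)\le|S|+1$ for every non-empty $S\subseteq V(G)$ yet admits no Hamiltonian path. The hypothesis fixes $G$ in one of the two regimes $|A|=|B|$ or $|A|=|B|+1$, and in either regime the absence of a Hamiltonian path means the corresponding degree condition of Theorem \ref{thm3} must fail. Hence some $u_g\in A_2$ violates $d(u_g)\ge g$, or some $v_h\in B_2$ violates its bound ($d(v_h)\ge h$ when $|A|=|B|$, and $d(v_h)>h$ when $|A|=|B|+1$). I would take the offending vertex to be the \emph{first} one in the NNO ordering, so that Theorem \ref{thm1} pins its degree exactly, and then exhibit a separator $S$ breaking Chv\'atal's bound.

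For the $A_2$ violations (and, symmetrically, for a $B_2$ violation when $|A|=|B|$) the single-neighbourhood separator suffices. If $u_g$ is the first vertex with $d(u_g)<g$, then non-decreasing degrees together with $d(u_{g-1})\ge g-1$ force $d(u_g)=g-1$. Setting $S=N(u_g)$ gives $|S|=g-1$, and since Theorem \ref{thm1} yields $N(u_1)\subseteq\cdots\subseteq N(u_g)=S$, the vertices $u_1,\dots,u_g$ are each isolated in $G-S$, while the remainder $A_1\cup B_2\cup(B_1\setminus S)$ stays connected; here I would verify $B_1\setminus S\neq\emptyset$, which holds because maximality of $(A_1,B_1)$ caps $d(u_g)$ at $j-1$, so $|S|=g-1<j$. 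Thus $c(G-S)\ge g+1>g=|S|+1$, the desired contradiction, and the $B_2$ analogue for $|A|=|B|$ is identical with the roles of $A$ and $B$ swapped.

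The genuinely different, and hence main, case is a $B_2$ violation when $|A|=|B|+1$: here the offending $v_r$ satisfies only $d(v_r)\le r$, and the same degree pinning gives $d(v_r)=r$, so removing $N(v_r)$ alone yields merely $c(G-S)=r+1=|S|+1$ and no contradiction. The fix, mirroring the necessity argument of Theorem \ref{thm3}(ii), is the block separator $S=B_1\cup\{v_{r+1},\dots,v_q\}$ of size $j+q-r$. Using Lemma \ref{lem2} (so $N(A_2)\subseteq B_1$ and $N(B_2)\subseteq A_1$) together with the nesting $N(v_1)\subseteq\cdots\subseteq N(v_r)=\{x_1,\dots,x_r\}$, one sees that in $G-S$ every vertex of $A_2$ is isolated ($p$ components), every vertex $x_{r+1},\dots,x_i$ of $A_1$ is isolated ($i-r$ components), and $\{x_1,\dots,x_r\}\cup\{v_1,\dots,v_r\}$ forms one connected piece. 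Hence $c(G-S)=p+(i-r)+1$, and substituting $i+p=j+q+1$ gives $c(G-S)=j+q-r+2>j+q-r+1=|S|+1$.

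The bookkeeping in this last case is where I expect the real work. I must confirm that the only surviving neighbours of $v_1,\dots,v_r$ lie in $\{x_1,\dots,x_r\}$, so that the high-index $A_1$ vertices truly become isolated, and that the single connected piece does not accidentally absorb extra components, since the inequality has only a one-unit margin and relies crucially on the parity relation $|A|=|B|+1$. Once the component count is verified, the contradiction with $c(G-S)\le|S|+1$ closes the proof.
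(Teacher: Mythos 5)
Your proof is correct and takes essentially the same route as the paper's: a contradiction via the characterization in Theorem \ref{thm3}, using the separator $S=N(u_g)$ for the first $A_2$ violator (and symmetrically for a $B_2$ violator when $|A|=|B|$), and the block separator $S=B_1\cup\{v_{r+1},\ldots,v_q\}$ of size $j+q-r$ for a $B_2$ violation when $|A|=|B|+1$, yielding $c(G-S)\geq p+i-r+1=j+q-r+2>|S|+1$. Your bookkeeping (pinning $d(v_r)=r$, isolating all of $A_2$ and $x_{r+1},\ldots,x_i$, and checking the surviving piece $\{x_1,\ldots,x_r\}\cup\{v_1,\ldots,v_r\}$ is connected) is in fact carried out more explicitly than in the paper, which simply asserts the component count.
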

\begin{proof}
	\textbf{Case 1: $|A| = |B|$}. Assume on the contrary that $G$ has no Hamiltonian path, then there exists  $u_g$ or $v_h$ in $A_2(B_2)$ that violates degree conditions $(i)$ mentioned in Theorem \ref{thm3}. Let $u_g$ be the first vertex in the ordering with $d(u_g) < g$.  By Theorem \ref{thm1}, we know that $G$ has {\em NNO} and hence $d_{G}(u_g)= g - 1$.  On removing $N(u_g)$ from $G$, we have $g$ components in $A_2$ and $A_1 \cup B_2 \cup (B_1 -N(u_g))$ forms another component.  This implies that $c(G-N(u_g)) =  g+1$.  Clearly, $g+1 \not \leq g-1$, a contradiction.\\
	
	\noindent	\textbf{Case 2: $|A| = |B|+1$}. Assume on the contrary that $G$ has no Hamiltonian path.  Let $v_{h}$ be the first vertex in $B_2$ in the ordering such that $d(v_{h}){\leq}h$. By Theorem \ref{thm1}, $G$ satisfies {\em NNO}.  Consider the set $S=B_{1}{\cup}\{v_{h+1},v_{h+2},\ldots,v_{q}\}$ and $|S|=j+q-h-1+1=j+q-h$.  Further, $c(G-S) {\geq} p+1+i-h$.  Note that $|A|=i+p$ and $|B|=j+q$.  Since $|A|=|B|+1$, $c(G-S) {\geq} p+1+i-h = j+q+1+1-h=j+q-h+2$, contradicting the premise. \qed

\end{proof}

\section{Hamiltonicity variants in $P_5$-free chordal bipartite graphs}
The variants of Hamiltonicity reported in the literature are bipancyclic, homogeneously traceable, EXACTLY 2 SIMPLE PATH COVER, Hamiltonian connected, and hypohamitonian. Interestingly, for all of them,  similar to the Hamiltonian cycle we do not know necessary and sufficient conditions. In this paper, we establish necessary and sufficient conditions for a few of the variants restricted to $P_5$-free chordal bipartite graphs.
\subsection{Bipancyclic $P_5$-free chordal bipartite graphs}
Bondy \cite{bondy1971pancyclic} has given a sufficient condition for a graph to be pancyclic. Similarly, Amar \cite{bipancyclic} has given a sufficient condition for a graph to be bipancyclic in terms of vertex degree along with Hamiltonicity. It is important to highlight that for $P_5$-free chordal bipartite graphs, being a Hamiltonian is necessary and sufficient to be bipancyclic.
\begin{theorem}
	For a $P_5$-free chordal bipartite graph $G$ of order $2n$, $G$ is Hamiltonian if and only if $G$ is bipancyclic.
\end{theorem}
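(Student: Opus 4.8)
The plan is to prove the two implications separately, the reverse being immediate and the forward carrying all the work. For the reverse implication, a bipancyclic graph of order $2n$ contains, in particular, a cycle of length $2n$; since such a cycle meets all $2n$ vertices it is a Hamiltonian cycle, so $G$ is Hamiltonian.

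For the forward implication, suppose $G$ is Hamiltonian. By Theorem \ref{thm2} this forces $|A|=|B|=n$ together with the degree conditions $d(u_g)>g$ for every $u_g\in A_2$ and $d(v_h)>h$ for every $v_h\in B_2$; combined with the nested neighbourhood ordering of Theorem \ref{thm1}, the relevant neighbourhoods are initial segments, giving $\{y_g,y_{g+1}\}\subseteq N(u_g)$ and $\{x_h,x_{h+1}\}\subseteq N(v_h)$. I would then produce, for each $l$ with $2\le l\le n$, a cycle on exactly $l$ vertices of $A$ and $l$ of $B$ (hence of length $2l$), which establishes bipancyclicity. Writing $i=|A_1|$, $j=|B_1|$ and assuming $i\ge j$ without loss of generality, I would split into three explicit constructions. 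For $2\le l\le j$ the set $\{x_1,\dots,x_l\}\cup\{y_1,\dots,y_l\}$ induces a $K_{l,l}$, which already contains a cycle of length $2l$. For $j<l\le i$ I would use all of $B_1$, the vertices $v_1,\dots,v_{l-j}$ of $B_2$, and $x_1,\dots,x_l$ of $A_1$, inserting each $v_h$ between $x_h$ and $x_{h+1}$ and closing the resulting chain through the remaining biclique vertices. For $i<l\le n$ I would use all of $A_1\cup B_1$ together with $u_1,\dots,u_{l-i}$ and $v_1,\dots,v_{l-j}$, threading each $u_g$ between $y_g,y_{g+1}$ and each $v_h$ between $x_h,x_{h+1}$, exactly as in the Hamiltonian cycle built in Theorem \ref{thm2}.

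The step I expect to be the main obstacle is verifying that the cycle closes in the last case, where vertices of both $A_2$ and $B_2$ are inserted. There the two alternating chains consume $y_1,\dots,y_{(l-i)+1}$ and $x_1,\dots,x_{(l-j)+1}$, and the remaining biclique vertices must balance for the cycle to close. This reduces to checking that the leftover counts $i-(l-j)-1$ on the $A$-side and $j-(l-i)-1$ on the $B$-side are equal, and their difference is $(i-j)-\bigl((l-j)-(l-i)\bigr)=(i-j)-(i-j)=0$; this is precisely the balancing that makes the Theorem \ref{thm2} construction valid, and the case $l=n$ recovers that Hamiltonian cycle verbatim. I would finish by dispatching the degenerate configurations: if $A_2=\emptyset$ or $B_2=\emptyset$ only the first two constructions occur, and if both are empty then $G=K_{n,n}$, which is visibly bipancyclic. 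The three families of cycles then realise every even length from $4$ to $2n$, completing the forward implication.
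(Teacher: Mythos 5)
Your proof is correct, and both directions are organized the same way as in the paper --- the backward implication is the same one-line observation (the cycle of length $2n$ is Hamiltonian), and the forward implication is an explicit construction of cycles of every even length from the degree conditions of Theorem \ref{thm2} together with the nested neighbourhood ordering --- but your decomposition of the cycle lengths is genuinely different. The paper anchors its cycles at $A_2$: it starts with $C_4=(y_1,u_1,y_2,u_2,y_1)$, grows through all of $A_2$ up to $C_{2p}$, and then extends one pair at a time through $y_{p+1},x_1$, the vertices of $B_2$, and finally the leftover biclique vertices, so each cycle is obtained from the previous one by inserting two vertices. You instead anchor at the biclique core: $K_{l,l}$-cycles for $2\le l\le j$, then absorbing $B_2$ for $j< l\le i$, then absorbing $A_2$ and $B_2$ together for $i< l\le n$. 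Your version buys robustness at the degenerate ends: the paper's opening cycles presuppose $|A_2|\ge 2$, and it never addresses what happens when $A_2$ or $B_2$ is small or empty, whereas your biclique cycles always exist and you dispatch those cases explicitly; the price is the closing-count verification in your third phase, which you carry out correctly (and which is exactly the balancing that makes the paper's $C_{2n}$ close up). The paper's version, in exchange, gives a single uniformly nested family of cycles, each extending the previous, which is what lets it describe the whole construction as ``following the Hamiltonian cycle procedure.''
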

\begin{proof}
		{\em Necessity:} We know that the graph $G$ satisfies Theorem \ref{thm2}. From the structural results, it is clear that the graph induced on $A_1 \cup B_1$, say $H$ is a complete bipartite subgraph and $A_2 (B_2)$ follows {\em NNO}. Let $p=|A_2|$. We obtain the cycles of length $2l$,  ~$2\leq l\leq |A_2|$, by following the Hamiltonian cycle procedure given in Theorem \ref{thm2}. \\
		$C_4=(y_1,u_1,y_2,u_2,y_1)$\\
		$C_6=(y_1,u_1,y_2,u_2,y_3,u_3,y_1)$\\
		\vdots\\
		$C_{2p}=(y_1,u_1,y_2,u_2,\ldots, u_p,y_1)$\\
		For the cycles, $C_{2l}$, $p+1\leq l\leq n$, the following procedure is used. \\
		 $C_{2(p+1)}=(y_1,u_1,y_2,u_2,\ldots,u_p,y_{p+1},x_1, y_1)$\\ $C_{2(p+2)}=(y_1,u_1,y_2,u_2,\ldots,u_p,y_{p+1},x_1,v_1,x_2,y_1)$\\
		 \vdots\\
		 $C_{2n}=(y_{1},u_{1},y_{2},u_{2},\ldots,y_{p},u_{p},y_{p+1},x_{1},v_{1},x_{2},v_{2},\ldots,x_{q},v_{q},x_{q+1},y_{p+2},x_{q+2},$ $\ldots,y_{j},x_{i},y_{1})$\\
		 Thus we obtain the cycles  $C_{2l}$, $2\leq l\leq n$ in $G$. Hence $G$ is a bipancyclic graph.\\
		 {\em Sufficiency: } Since $G$ is a $P_5$-free chordal bipartite bipancyclic graph, we have the cycles of length $2l$, $2\leq l \leq n$. It is easy to see that the cycle $C_{2n}$ contains all the vertices of $G$, which is a Hamiltonian cycle. \qed   
	\end{proof}
Remark : The above proof is constructive and we can get all these cycles in polynomial time.
\subsection{Homogeneously traceable $P_5$-free chordal bipartite graphs}
 A graph $G$ is said to be homogeneously traceable if there exists a Hamiltonian path beginning at every vertex of $G$. It is obvious that every Hamiltonian is homogeneously traceable. On the other hand, there exist homogeneously traceable non-Hamiltonian graphs. The Petersen graph is an example of a homogeneously traceable non-Hamiltonian graph.  A graph is semi-Hamiltonian if it has a Hamiltonian path and does not have a Hamiltonian cycle.  We denote the semi-Hamiltonian problem as ONE ENDPOINT SPECIFIED SEMI-HAMILTONIAN if one endpoint is specified in the input. ONE ENDPOINT SPECIFIED SEMI-HAMILTONIAN \cite{nguyen2018various} problem is NP-complete on general graphs as there is a polynomial-time reduction from satisfiability problem. The Homogeneously traceable problem is a generalization of ONE ENDPOINT SPECIFIED SEMI-HAMILTONIAN problem. Chartrand et al.\cite{chartrand1979homogeneously} has given a construction of homogeneously traceable non-Hamiltonian graph with $n$ vertices, $n\geq 9$.  Interestingly, every homogeneously traceable $P_5$-free chordal bipartite graphs are Hamiltonian graphs.
 \begin{theorem}
 	\label{thmhomogeneously}
For a $P_5$-free chordal bipartite graph $G$, $G$ is Hamiltonian if and only if $G$ is homogeneously traceable.
 \end{theorem}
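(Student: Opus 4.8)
The plan is to prove the two directions separately; the forward direction is immediate, and the converse carries all of the work. For the easy direction ($G$ Hamiltonian $\Rightarrow$ $G$ homogeneously traceable), I would take any Hamiltonian cycle $(w_1,w_2,\ldots,w_{2n},w_1)$ and, for an arbitrary vertex $w_i$, cut the cycle at $w_i$ to obtain the Hamiltonian path $(w_i,w_{i+1},\ldots,w_{i-1})$ beginning at $w_i$. Since $w_i$ is arbitrary, $G$ is homogeneously traceable, and this step needs none of the structural machinery.

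For the converse ($G$ homogeneously traceable $\Rightarrow$ $G$ Hamiltonian) I would argue by contradiction, exploiting the fact that in a bipartite graph, demanding a Hamiltonian path from \emph{every} vertex is extremely restrictive. Since $G$ is homogeneously traceable it has at least one Hamiltonian path, so Theorem \ref{thm3} applies and (taking without loss of generality $|A| \geq |B|$) either $|A|=|B|$ or $|A|=|B|+1$. I would first eliminate $|A|=|B|+1$: a counting/parity argument shows that any Hamiltonian path in a bipartite graph with $|A|=|B|+1$ has \emph{both} endpoints in $A$, so no Hamiltonian path can begin at a vertex of $B$; as $B\neq\emptyset$, this already contradicts homogeneous traceability. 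Hence $|A|=|B|$.

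Now, assuming for contradiction that $G$ (with $|A|=|B|$) is not Hamiltonian, Theorem \ref{thm2} tells us the strict degree condition fails, while Theorem \ref{thm3}(i) gives the non-strict one; so equality is attained, say $d(u_g)=g$ for some $u_g\in A_2$ (the $B_2$ case being symmetric). Setting $S=N(u_g)\subseteq B_1$ with $|S|=g$, the \emph{NNO} of Theorem \ref{thm1} makes $u_1,\ldots,u_g$ isolated in $G-S$, giving $c(G-S)=g+1=|S|+1$. The crux is to analyze how a Hamiltonian path must meet such a tight separator: deleting the $g$ separator vertices splits the path into at most $g+1$ sub-paths, and since these must cover all $g+1$ components, each component is traversed by exactly one sub-path while the two endpoints of the path lie in two distinct ``end'' components. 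The key observation is that the $g$ singleton components $\{u_1\},\ldots,\{u_g\}$ all lie in $A$, whereas a Hamiltonian path with $|A|=|B|$ has exactly one endpoint in each part; this forces the remaining component $D$ (the one containing $A_1$) to be an end component whose endpoint lies in $B$. Consequently no vertex of $A_1\subseteq A\cap D$ can be an endpoint of any Hamiltonian path, so (since $A_1\neq\emptyset$) no Hamiltonian path begins at such a vertex, contradicting homogeneous traceability and completing the argument.

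The main obstacle I anticipate is precisely the separator analysis in the last paragraph: making rigorous that a tight separator with $c(G-S)=|S|+1$ pins the Hamiltonian path into one sub-path per component with its endpoints confined to two end components, and then fusing this with the bipartite parity constraint to conclude that $A_1$-vertices are never endpoints. The supporting facts needed there, namely $A_1\neq\emptyset$ and $B_1\setminus S\neq\emptyset$ (so that $D$ is genuinely a single connected component), follow from the maximality of the biclique $(A_1,B_1)$ via Lemma \ref{lem1}.
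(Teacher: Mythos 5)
Your proof is correct, and its skeleton matches the paper's: the forward direction is immediate, and for the converse both you and the paper reduce, via Theorem~\ref{thm3}, to the two cases $|A|=|B|$ and $|A|=|B|+1$, eliminate the second, and force the strict degree conditions of Theorem~\ref{thm2} in the first. The difference lies in how the two key steps are justified, and here your route is genuinely different and more rigorous. To rule out $|A|=|B|+1$, the paper constructs a path starting at $y_1$ that sweeps $A_2$ and then counts the sets $A_3,B_3$ to show some vertices of $A_3$ are missed; you instead invoke the parity fact that a Hamiltonian path in a bipartite graph with $|A|=|B|+1$ has both endpoints in $A$, so no Hamiltonian path can begin in $B$ --- shorter and cleaner. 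In the case $|A|=|B|$, the paper merely \emph{observes} that a Hamiltonian path beginning at $y_1\in B_1$ forces $d(u_g)>g$ for all $g$ (and symmetrically from $x_1\in A_1$ for $B_2$), with essentially no justification; your tight-separator analysis --- taking $S=N(u_g)$ where $d(u_g)=g$, noting $c(G-S)=|S|+1$, establishing the bijection between components of $G-S$ and the segments of any Hamiltonian path minus $S$, and combining with bipartite parity to conclude that every Hamiltonian path has one endpoint among $u_1,\ldots,u_g$ and the other in $D\cap B$, so no vertex of $A_1\neq\emptyset$ is ever an endpoint --- is exactly the missing argument behind that observation, made precise. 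One detail to state explicitly when you write it up: for $c(G-S)=|S|+1$ you need the vertices $u_h$ with $h>g$ to attach to $B_1\setminus S$, which follows since Theorem~\ref{thm3}(i) gives $d(u_h)\geq h>g=|S|$ while Lemma~\ref{lem2} confines their neighborhoods to $B_1$; alternatively, since $G$ already has a Hamiltonian path, Chv\'{a}tal's condition gives $c(G-S)\leq |S|+1$ outright, so the $g$ singletons force the remainder to be one component.
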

\begin{proof}
		{\em Necessity:} It is easy to see that the Hamiltonian graphs are homogeneously traceable.\\
		{\em Sufficiency:} Let $G$ be homogeneously traceable. Case 1: $|A|=|B|$. By Theorem \ref{thm1}, the vertices of $A_2 (B_2)$ follows {\em NNO} property. Since $G$ is homogeneously traceable, we have Hamiltonian path beginning at every vertex of $G$. Without out loss of generality, we begin the Hamiltonian path at vertex $y_1\in B_1$. We observe that the degree of $u_1$ is at least two. In general, $\forall u_g, d(u_g) > g$, $1 \leq g \leq p$ . Similarly, if we begin at $x_1 \in A_1$, then the vertices of $B_2$ satisfies the following property, $ \forall v_h, d(v_h) >h$, $1 \leq h \leq q$. Now we observe that the graph $G$ follows Theorem \ref{thm2}. Therefore, $G$ is Hamiltonian.\\ 
		Case 2: $|A|=|B|+1$. We now show that this case is not possible. Assume on the contrary that this case is possible. Suppose we begin constructing the path $P$ at vertex $y_1\in B_1$ and visit all of $A_2$. Let $B'_1=\{y_{1},y_{2},\ldots,y_{p+1}\}$ be the set of vertices that helps to visit $A_2$. Note that $B_{3}=\{y_{p+2},\ldots,y_{j-1},y_{j}\}=B_{1}{\setminus}\{y_{1},y_{2},\ldots,y_{p+1}\}$. Similarly,   $A_{3}=\{x_{q+2},\ldots,x_{i-1},x_{i}\}=A_{1}{\setminus}\{x_{1},x_{2},\ldots,x_{q+1}\}$. Further, $|A_{3}| = | A | -(| A_{2} | +q+1)= | A | -(p+q+1)$ and $ | B_{3} | = | B | -( | B_{2} | +p+1)= | B | -(q+p+1)$. This implies that $|A_3|>|B_3|$. It is clear that some vertices of $A_3$ are left out in the path $P$. Therefore, the constructed path $P$ is not a Hamiltonian path. Hence this case is not possible.
		\qed     
		\end{proof}
\begin{corollary}
	\label{nonhamil}
	Let $G$ be a $P_5$-free chordal bipartite graph with $|A|=|B|+1$. Then, $G$ is non-Hamiltonian connected.
	\end{corollary}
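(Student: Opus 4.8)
The plan is to exploit the parity imbalance between the two sides of the bipartition, reducing the statement to an elementary counting observation. First I would recall that $G$ being Hamiltonian connected means there is a Hamiltonian path between \emph{every} pair of vertices; consequently, to prove that $G$ is \emph{not} Hamiltonian connected it suffices to exhibit a single pair of vertices that admits no Hamiltonian path joining them.

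Next I would analyze the endpoints that any Hamiltonian path of $G$ can possibly have. A Hamiltonian path visits all $|A|+|B| = 2|B|+1$ vertices and therefore uses exactly $2|B|$ edges. Since $G$ is bipartite, every path alternates between $A$ and $B$, and a path with an even number of edges has both of its endpoints on the same side of the bipartition. If both endpoints lay in $B$, the alternation $B,A,B,\ldots,B$ would force $|B| = |A|+1$, which is impossible because $|A| = |B|+1$. Hence both endpoints of every Hamiltonian path of $G$ must lie in $A$.

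Finally I would fix any vertex $u \in A$ and any vertex $v \in B$; here $B \neq \emptyset$ since $|B| = |A|-1$ and $G$ is a connected bipartite graph on at least two vertices. A Hamiltonian path between $u$ and $v$ would have one endpoint, namely $v$, in $B$, contradicting the conclusion of the previous step. Therefore no Hamiltonian path joins $u$ and $v$, and $G$ fails to be Hamiltonian connected. The argument is a pure parity/counting observation, so I do not expect any genuine obstacle; the only point meriting a line of care is confirming that $B$ is nonempty so that the witnessing pair $(u,v)$ actually exists, which is immediate from $|A| = |B|+1$ together with the connectivity of $G$.
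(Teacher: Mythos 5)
Your proof is correct, but it takes a genuinely different route from the paper. The paper's proof is a one-line reduction to its Theorem~\ref{thmhomogeneously}: if $G$ were Hamiltonian connected it would in particular be homogeneously traceable, hence (by that theorem) Hamiltonian, which forces $|A|=|B|$ by the bipartite Hamiltonian-cycle parity condition, contradicting $|A|=|B|+1$. Your argument bypasses all of that machinery: you observe directly that in a bipartite graph with $|A|=|B|+1$, a Hamiltonian path has an even number of edges, so its endpoints lie on the same side, and that side must be $A$; therefore no pair $(u,v)$ with $v\in B$ is joined by a Hamiltonian path. This is a pure parity argument that never uses $P_5$-freeness, chordality, or the \emph{NNO} structure, so it actually proves the stronger statement that \emph{every} bipartite graph with $|A|=|B|+1$ and $B\neq\emptyset$ is non-Hamiltonian connected; the same parity reasoning (endpoints on opposite sides when $|A|=|B|$) would also dispose of the paper's subsequent theorem for the $|A|=|B|$ case, which the paper instead proves by a longer path-construction argument. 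What the paper's approach buys is economy within its own framework --- the corollary falls out of a theorem already proved and keeps the exposition tied to the homogeneous-traceability theme --- while yours buys generality and self-containedness. Your side remark that $B\neq\emptyset$ is needed for the witnessing pair to exist is the right point of care and is immediate in the paper's setting, where $G$ is connected with a nonempty maximum biclique $(A_1,B_1)$.
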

\begin{proof}
Follows from the definition of Hamiltonian connected and Theorem \ref{thmhomogeneously}.\qed
	\end{proof}
\subsection{EXACTLY 2 SIMPLE PATH COVER in $P_5$-free chordal bipartite graphs}
Simple path cover is a simple path that covers all the vertices of $G$.  By EXACTLY 2 SIMPLE PATH COVER \cite{nguyen2018various}, we denote the set of graphs that can be covered by two simple paths, but cannot be covered by one simple path. A Hamiltonian path can be viewed as one simple path that covers all the vertices of $G$. These problems are NP-complete on general graphs. Since the Hamiltonian path problem is polynomial-time solvable in $P_5$-free chordal bipartite graphs, it is natural to look at the complexity of EXACTLY 2 SIMPLE PATH COVER in this graph class. The notation $\exists!z$ refers to there exists unique $z$.
\begin{theorem}
	For a $P_5$-free chordal bipartite graph $G$, $G$ has Exactly 2 Simple Path Cover if and only if one of the following is true \\
	(i)  $|A|=|B|$ and $A_2$ has an ordering, $\exists!z_r, ~d(z_r) < r$ and $\forall u_g\ne z_r, d(u_g) \geq g$, $1 \leq g \leq p$ and $B_{2}$ has an ordering, $ \forall v_h, d(v_h) \geq h$, $1 \leq h \leq q$. \\
	(ii) $|A|=|B|$ and $A_2$ has an ordering, $\forall u_g\, d(u_g) \geq g$, $1 \leq g \leq p$ and $B_{2}$ has an ordering, $\exists! z_r, ~d(z_r) < r$ and $ \forall v_h\ne z_r, d(v_h) \geq h$, $1 \leq h \leq q$. \\
	(iii) $|A|=|B|+1$ and $A_2$ has an ordering, $\exists! z_r, ~d(z_r) < r$ and $\forall u_g\ne z_r, d(u_g) \geq g$, $1 \leq g \leq p$ and $B_{2}$ has an ordering, $ \forall v_h, d(v_h) >h$, $1 \leq h \leq q$. \\
	(iv) $|A|=|B|+1$ and $A_2$ has an ordering, $\forall u_g,, d(u_g) \geq g$, $1 \leq g \leq p$ and $B_{2}$ has an ordering, $\exists! z_r, ~d(z_r) \leq r$ and $ \forall v_h\ne z_r, d(v_h) > h$, $1 \leq h \leq q$. 	
\end{theorem}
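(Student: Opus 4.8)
The plan is to treat \textsc{Exactly 2 Simple Path Cover} as the statement ``$G$ is one vertex short of being traceable.'' The engine is the path-cover analogue of Chv\'atal's condition: if $V(G)$ can be partitioned into $k$ vertex-disjoint simple paths then $c(G-S)\le |S|+k$ for every $S\subseteq V(G)$. Thus ``exactly $2$'' means this bound fails for $k=1$ (so $G$ has no Hamiltonian path, by Theorem~\ref{thm3}) but is met for $k=2$. Throughout I would read the four cases uniformly: each asserts that the degree conditions of Theorem~\ref{thm3} hold \emph{except at exactly one vertex} $z_r$, with the cardinality being $|A|=|B|$ in cases (i),(ii) or $|A|=|B|+1$ in cases (iii),(iv), and with the location of $z_r$ (in $A_2$ or $B_2$) and the strict/non-strict form of its deficiency matching the corresponding inequality in Theorem~\ref{thm3}.

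For \emph{sufficiency} I would, in each case, do two things. First, exhibit that $G$ has no Hamiltonian path: the unique deficient vertex $z_r$ together with \emph{NNO} (Theorem~\ref{thm1}) yields, exactly as in the necessity argument of Theorem~\ref{thm3}, a separator $S$ (of the form $N(z_r)$, or $B_1\cup\{v_{r+1},\dots,v_q\}$ in the $|A|=|B|+1$ subcase) with $c(G-S)=|S|+2>|S|+1$; by Theorem~\ref{thm3} this rules out a single-path cover. Second, construct two vertex-disjoint covering paths by adapting the Hamiltonian-path weaving of Theorem~\ref{thm3}: I thread $A_2$ against $B_1$ (and $B_2$ against $A_1$) as there, but the deficiency of $z_r$ forces the alternating walk to run out of a fresh $B_1$ (resp.\ $A_1$) partner at $z_r$, so I cut it into two pieces at that point; the chosen cardinality guarantees the leftover biclique vertices $A_3,B_3$ balance, so the two pieces together cover all of $V(G)$.

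For \emph{necessity} I would start from a $2$-path cover. Since $2$ paths contribute total excess at most $2$ between the sides, $|A|-|B|\le 2$; combining this with the fact that $G$ is not traceable (so Theorem~\ref{thm3} fails) and with the \emph{NNO} structure, I would refine the cardinality to one of the two listed values and locate a violated degree inequality. The crux is to show the violation is \emph{unique}: if two vertices were deficient I would build a separator forcing $c(G-S)\ge |S|+3$ --- for instance a deficiency in $A_2$ at $u_g$ and one in $B_2$ at $v_h$ give $S=N(u_g)\cup N(v_h)\subseteq B_1\cup A_1$ with $u_1,\dots,u_g$ and $v_1,\dots,v_h$ all isolated, so $c(G-S)=g+h+1=|S|+3$ --- which by the $k=2$ bound is incompatible with a $2$-path cover. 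Matching the strict versus non-strict form of the surviving inequality to the cardinality (the $d(z_r)<r$ condition when $|A|=|B|$, and the $\le r$ / $>h$ split when $|A|=|B|+1$) then pins down which of the four cases holds.

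The main obstacle is the bidirectional ``exactly one deficiency'' argument knitted together with the explicit constructions: I must verify that a single deficiency always admits a concrete $2$-path decomposition (the construction step), while two deficiencies always expose a separator with $c(G-S)\ge|S|+3$ (the counting step), and that these match across all four cases together with the delicate strict/non-strict degree bookkeeping inherited from Theorem~\ref{thm3}. The cardinality refinement is equally delicate, since a single deficiency already attains $c(G-S)=|S|+2$, exactly the boundary permitted for two paths, so the witnessing separators must be shown to be tight; in particular, separating the $|A|=|B|$ and $|A|=|B|+1$ regimes from any larger imbalance is where I expect the argument to require the most care.
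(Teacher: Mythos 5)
Your sufficiency argument is essentially the paper's: a unique deficient vertex rules out traceability via Theorem~\ref{thm3}, and the two covering paths come from cutting the canonical weaving of Theorem~\ref{thm3} at $z_r$; that part is fine. The genuine gap is in your necessity argument, at the step claiming that two deficient vertices always yield a separator $S$ with $c(G-S)\ge |S|+3$. Your computation only treats the cross-side case (one deficiency in $A_2$, one in $B_2$). If both deficiencies lie on the same side and are both ``shallow'' (degree exactly one below position), no such $S$ exists. Concretely, let $A_1=\{x_1,\dots,x_4\}$ and $B_1=\{y_1,\dots,y_4\}$ induce $K_{4,4}$, let $A_2=\{u_1,u_2,u_3\}$ with $N(u_1)=N(u_2)=\{y_1\}$, $N(u_3)=\{y_1,y_2\}$, and let $B_2=\{v_1,v_2,v_3\}$ with $N(v_1)=\{x_1\}$, $N(v_2)=\{x_1,x_2\}$, $N(v_3)=\{x_1,x_2,x_3\}$. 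This graph is in the class, $|A|=|B|=7$, and $A_2$ has exactly two violations ($d(u_2)=1<2$, $d(u_3)=2<3$). By {\em NNO}, isolating $u_2$ and $u_3$ forces $S\supseteq N(u_3)$, and one checks that $c(G-S)\le |S|+2$ for every $S$ (the extremal choices being $S=N(u_3)$ and $S=N(u_3)\cup N(v_3)$). So the contradiction you want can never be derived; yet the theorem requires this graph to have no $2$-path cover, and indeed it has none, for a reason invisible to component counting: $u_1,u_2$ are degree-one vertices sharing the single neighbour $y_1$, so either one path is $(u_1,y_1,u_2)$ --- and then the remaining graph has six $B$-vertices against five $A$-vertices, forcing both endpoints of the second path into $B$, while $u_3$, whose only neighbour outside the first path is $y_2$, must itself be an endpoint and lies in $A$ --- or one of $u_1,u_2$ is a singleton path and the other path must be a Hamiltonian path of $G-u_1$ or $G-u_2$, which Theorem~\ref{thm3} (with the roles of $A$ and $B$ exchanged) rules out since $d(u_1)=1\not>1$. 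The uniqueness step therefore needs an endpoint/degree-one argument; the Chv\'atal-type condition $c(G-S)\le|S|+2$ for all $S$ is strictly weaker than the existence of a $2$-path cover in this class, so it cannot serve as your engine.

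Two further remarks. First, your cardinality refinement from $|A|-|B|\le 2$ to the two listed values cannot be completed as described: two alternating paths may both have both endpoints in $A$, and $K_{4,2}$ (where $A_2=B_2=\emptyset$) is covered by $(x_1,y_1,x_2)$ and $(x_3,y_2,x_4)$ while having no Hamiltonian path, so $|A|=|B|+2$ does occur with an exactly-$2$ cover; this is a case the theorem statement itself overlooks, and no proof can eliminate it, so your write-up should flag it rather than claim the refinement. Second, for comparison, the paper's necessity proof avoids separators entirely: zero violations give a Hamiltonian path (contradiction), and two or more violations make the canonical weaving break into three explicit paths $P_1,P_2,P_3$. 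That argument is constructive but also logically incomplete (exhibiting three paths does not show two are impossible); your instinct to prove a genuine lower bound is the right one, but it has to be executed with path endpoints and degree-one vertices rather than with components.
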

\begin{proof}
	{\em Necessity:} (i) Without loss of generality, on the contrary, assume that   a vertex $z_r\in A_2$ such that $d(z_r)<r$  or there exist at least two vertices $z_r, z_s\in A_2$ such that $d(z_r)<r$ and $d(z_s)<s$. \\
	Case 1: There does not exist $z_r$ such that $d(z_r)<r$. By Theorem \ref{thm3}, we observe that $G$ is an yes instance of the Hamiltonian path problem, which is a one simple path cover. A contradiction. \\
	Case 2: There exist at least two such vertices $z_r$ and $z_s$. Without loss of generality, assume that $d(z_r) \leq d(z_s)$. By following the procedure given in Theorem \ref{thm3}, we obtain three simple paths $P_1$, $P_2$, and $P_3$ that cover  $V(G)$. \\
	$P_1=(u_{1},y_{1},u_{2},y_{2},\ldots,u_{r-1}, y_{r-1}, u_{r}=z_r)$, $1<r<s$ \\ 
	$P_2=(y_{r}, u_{r+1}, y_{r+1}, u_{r+2},\ldots,y_{s-2}, u_{s-1}, y_{s-1}, u_{s}=z_s)$, $1<s\leq p$ \\
	$P_3= (y_{s}, u_{s+1}, \dots,y_{p-1}, u_{p},y_{p},x_{q+1},y_{p+1},x_{q+2},y_{p+2},\ldots,x_{i}, y_{j},x_{q}, v_{q}, \ldots,x_{2},$ $v_{2},x_{1}, v_{1})$, \\
	A similar argument can be given for (ii), (iii) and (iv)\\
	{\em Sufficiency:} There exists exactly one vertex $z_r$, such that $~d(z_r) < r$. By Theorem \ref{thm3}, $G$ is a no instance of the Hamiltonian path problem and thus $G$ cannot be covered by one simple path.  By following the procedure given in  Theorem \ref{thm3}, we obtain the following two simple paths $P_1$, and $P_2$ that cover $V(G)$.  \\
	$P_1=(u_{1},y_{1},u_{2},y_{2},\ldots,u_{r-1},y_{r-1},u_{r}=z_r)$,  $1<r\leq p$\\
	$P_2=(y_{r}, u_{r+1}, y_{r+1},u_{r+2}\ldots,y_{p-1},u_{p},y_{p},x_{q+1},y_{p+1},x_{q+2},y_{p+2},\ldots,x_{i}, y_{j},x_{q}, v_{q}, \ldots,x_{2},$ $v_{2},x_{1},v_{1})$\\
	Similarly (ii), (iii) and (iv) can be proved.\\ 
\qed 
\end{proof}

\subsection{Hamiltonian connected $P_5$-free chordal bipartite graphs}
A graph $G$ is said to be Hamiltonian connected if there exists a Hamiltonian path between every pair of vertices. 
Some of the related problems studied in the literature are TWO ENDPOINT SPECIFIED-SEMI-HAMILTONIAN \cite{nguyen2018various}, where both the endpoints are specified. This problem is NP-complete on chordal bipartite graphs. Hamiltonian connected is a generalization of this problem.  In this paper, we investigate the Hamiltonian connectedness of $G$. If $G$ has a Hamiltonian path, then $G$ satisfies Theorem \ref{thm3}. It is clear from Corollary \ref{nonhamil} that if $|A|=|B|+1$, then $G$ is non-Hamiltonian connected. We shall now look at the case where $|A|=|B|$.
\begin{theorem}
	Let $G$ be a $P_5$-free chordal bipartite graph with $|A|=|B|$. Then, $G$ is non-Hamiltonian connected.
\end{theorem}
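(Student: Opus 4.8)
The plan is to exploit a simple parity property of Hamiltonian paths in balanced bipartite graphs, which makes the statement essentially independent of the $P_5$-free structure. First I would fix the bipartition $(A,B)$ with $|A|=|B|=n$ and recall that any Hamiltonian path $P$ of $G$ visits all $|A|+|B|=2n$ vertices. Since $G$ is bipartite, consecutive vertices of $P$ lie in opposite parts, so the side of the bipartition strictly alternates along $P$.

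Next I would reason about the endpoints of such a path. Writing the vertices of $P$ in order as $w_1,w_2,\ldots,w_{2n}$, alternation forces $w_t\in A$ exactly when $t$ has a fixed parity (say, $t$ odd, up to swapping the names of $A$ and $B$). Because $2n$ is \emph{even}, the indices $1$ and $2n$ have opposite parity, and therefore the two endpoints $w_1$ and $w_{2n}$ of $P$ lie in different parts. Consequently, no Hamiltonian path of $G$ can have both of its endpoints in the same part $A$ or the same part $B$.

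Finally I would invoke the definition of Hamiltonian connectedness. Since $|A|=|B|=n\ge 2$ in the non-trivial case, $A$ contains two distinct vertices, say $x_a$ and $x_b$; by the previous step there is no Hamiltonian path whose endpoints are $x_a$ and $x_b$. Thus some pair of vertices is not joined by any Hamiltonian path, which contradicts the requirement that a Hamiltonian connected graph admit a Hamiltonian path between \emph{every} pair of vertices. Hence $G$ is non-Hamiltonian connected.

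The argument carries no genuine obstacle beyond bookkeeping; the only point that needs care is the degenerate case $n=1$, where $G=K_2$ is vacuously Hamiltonian connected, so I would note that the claim is intended for graphs with at least two vertices on each side (which holds whenever the maximum biclique $(A_1,B_1)$ is non-trivial, or more simply whenever $|V(G)|>2$). I would also remark that $P_5$-freeness is in fact not used at all here: the conclusion holds for every balanced bipartite graph with $n\ge 2$, and this is precisely the complementary situation to Corollary~\ref{nonhamil}, which disposed of the case $|A|=|B|+1$.
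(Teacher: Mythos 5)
Your proposal is correct, but it takes a genuinely different route from the paper. The paper argues by contradiction inside the $P_5$-free structure: it picks two vertices $u_1,u_2 \in A_2$, attempts to build a Hamiltonian path starting at $u_1$ that saves $u_2$ for last, and uses the NNO-based counting $|B_3|=|A_3|+2$ to show that the two leftover vertices of $B_1$ and the vertex $u_2$ cannot all be absorbed, so the constructed path misses a vertex. Your proof uses only parity: in a bipartite graph every path alternates sides, so a Hamiltonian path on $2n$ vertices must have its endpoints in opposite parts; hence no two vertices in the same part are joined by any Hamiltonian path, and Hamiltonian connectedness fails as soon as one part contains two vertices. Your approach buys two things. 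First, generality: $P_5$-freeness, chordality, and the NNO play no role, so the conclusion holds for every balanced bipartite graph on more than two vertices. Second, logical completeness: to refute Hamiltonian connectedness one must rule out \emph{every} Hamiltonian path between some fixed pair, whereas the paper's argument only exhibits the failure of one particular greedy construction (and tacitly assumes $|A_2|\geq 2$ so that the pair $u_1,u_2$ exists); your parity argument rules out all paths at once and needs no such assumption. Your caveat about the degenerate case is also well taken: for $|A|=|B|=1$ the graph $K_2$ is Hamiltonian connected, so the theorem as stated implicitly assumes $|V(G)|>2$ — a hypothesis the paper never makes explicit.
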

\begin{proof}
		Assume on the contrary, that $G$ is Hamiltonian connected. By the definition, there exists a Hamiltonian path between every pair of vertices. Without loss of generality, we choose a pair $u_1$ and $u_2$ from $A_2$. Suppose, if we begin constructing the path $P$ at $u_1$ and visit the vertices of $A_2$ using $B_1$ except the vertex $u_2$. 
		 Note that $B_{3}=\{y_{p},\ldots,y_{j-1},y_{j}\}=B_{1}{\setminus}\{y_{1},y_{2},\ldots,y_{p-1}\}$ and  $A_{3}=\{x_{q+2},\ldots,x_{i-1},x_{i}\}=A_{1}{\setminus}\{x_{1},x_{2},\ldots,x_{q+1}\}$. Further, $|A_{3}| = | A | -(| A_{2} | +q+1)= | A | -(p+q+1)$ and $ | B_{3} | = | B | -( | B_{2} | +p-1)= | B | -(q+p-1)$. 
	Since $ | A | = | B | $, it follows that $ | A_{3} | < | B_{3} | $, in particular, $ | B_{3} |=| A_{3} |+2 $. Now, we are left with two vertices in $B_1 $, say $w$ and $z$, and a vertex $u_2$.
	 We observer that either $P$ ends in $w$ $(z)$ leaving $u_2$ or $P$ ends in $u_2$ leaving $w$ $(z)$. Clearly, $P$ is not a Hamiltonian path, which is a contradiction.  Hence $G$ is non-Hamiltonian connected. \qed
	\end{proof} 
\subsection{Path-hypohamitonian $P_5$-free chordal bipartite graphs}
We know that bipartite graphs are non-hypohamiltonian graphs. So it is natural to ask for the variants of hypohamiltonian, one such variant is path-hypohamiltonian. A graph $G$ is called path-hypohamiltonian,  if $G$ has no Hamiltonian path and $\forall v \in V(G)$, $G - v$ has a Hamiltonian path. In this section, we show that $P_5$-free chordal bipartite graphs are non-path-hypohamiltonian.

\begin{theorem}
	Let $G$ be a $P_5$-free chordal bipartite graph and has no Hamiltonian path. Then, $G$ is non-path-hypohamiltonian.
\end{theorem}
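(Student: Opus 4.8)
The plan is to argue directly from the definition of path-hypohamiltonian. Since $G$ is assumed to have no Hamiltonian path, it suffices to exhibit a single vertex $v$ for which $G-v$ also has no Hamiltonian path, as this alone contradicts path-hypohamiltonicity. My strategy is to locate a \emph{Chv\'atal bottleneck} for $G$, i.e.\ a non-empty set $S \subseteq V(G)$ with $c(G-S) \geq |S|+2$, and then to delete a single vertex of $S$.

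The deletion step is the conceptual core and is independent of how $S$ is found. Given such an $S$, I pick any $v \in S$ and set $S' = S \setminus \{v\}$. Since $V\big((G-v)-S'\big) = V(G) \setminus S = V(G-S)$, the graphs $(G-v)-S'$ and $G-S$ coincide, so $c\big((G-v)-S'\big) = c(G-S) \geq |S|+2 = |S'|+3 > |S'|+1$. Thus $G-v$ fails Chv\'atal's necessary condition for a Hamiltonian path, witnessed by $S'$, and therefore $G-v$ has no Hamiltonian path. The underlying point is that removing a vertex of the bottleneck shrinks the separator by one without merging any component, so the Chv\'atal deficiency strictly worsens; note that when $|S|=1$ we have $S'=\emptyset$, and the same inequality simply records that $G-v$ is disconnected.

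It remains to guarantee a non-empty bottleneck $S$, which I would establish by a case split on the part sizes, taking $|A| \geq |B|$ without loss of generality since the class is symmetric in $A$ and $B$. If $|A| \geq |B|+2$, take $S=B$: the graph $G-B$ is the edgeless graph on $A$, so $c(G-B)=|A| \geq |B|+2 = |S|+2$, and $B \neq \emptyset$ because $G$ is connected on at least two vertices. If instead $|A|=|B|$ or $|A|=|B|+1$, then because $G$ has no Hamiltonian path I invoke the contrapositive of the theorem establishing that Chv\'atal's condition is sufficient for the Hamiltonian path in $P_5$-free chordal bipartite graphs: such a $G$ must admit a non-empty $S$ with $c(G-S) > |S|+1$, i.e.\ $c(G-S) \geq |S|+2$. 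In either regime a valid bottleneck exists, the deletion argument of the previous paragraph applies, and $G$ is non-path-hypohamiltonian.

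The main obstacle here is bookkeeping rather than depth. I must ensure $S$ is genuinely non-empty so that a vertex can be removed from it, cover all size regimes uniformly (in particular forcing both parts to be non-empty from connectivity), and read the degenerate case $S'=\emptyset$ correctly as $G-v$ being disconnected. The substantive idea---that a Chv\'atal-violating separator remains violating, and in fact becomes strictly worse, after deleting one of its own vertices---is short once stated.
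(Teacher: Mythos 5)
Your proof is correct, but it takes a genuinely different route from the paper's. The paper argues directly from the degree characterization of Theorem~\ref{thm3}: since $G$ has no Hamiltonian path, some vertex of $A_2$ (or $B_2$) violates its degree bound in the {\em NNO}, and the paper deletes a vertex chosen so that this violation survives --- a vertex of $B_2$ when $|A|=|B|$ (deletion leaves the violator's degree unchanged), or any vertex of $B$ when $|A|=|B|+1$ (which forces $|A|>|B|+1$); Theorem~\ref{thm3} then certifies that $G-v$ has no Hamiltonian path. You instead work entirely at the level of Chv{\'a}tal's condition: you extract a violating separator $S$ with $c(G-S)\geq |S|+2$, either $S=B$ when $|A|\geq |B|+2$, or via the contrapositive of the paper's theorem that Chv{\'a}tal's condition is sufficient on this class, and then observe that for any $v\in S$ and $S'=S\setminus\{v\}$ one has $(G-v)-S'=G-S$, so the deficiency strictly worsens and $G-v$ fails Chv{\'a}tal's necessary condition. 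Your route buys generality and robustness: the deletion step is a lemma valid in every graph, the deleted vertex is explicit, and you avoid case bookkeeping that the paper glosses over (its case (i) deletes a vertex of $B_2$, silently assuming $B_2\neq\emptyset$, and it applies Theorem~\ref{thm3} to $G-v$ without remarking that $G-v$ could be disconnected; your argument handles both, including reading $S'=\emptyset$ correctly as disconnectedness). The cost is self-containedness: you use the paper's Chv{\'a}tal-sufficiency theorem as a black box, whereas the paper's proof invokes only Theorem~\ref{thm3}; since that sufficiency theorem appears earlier and is itself proved from Theorem~\ref{thm3}, there is no circularity.
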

\begin{proof}
	We now exhibit a vertex $u$ such that $G-u$ has no Hamiltonian path. Since $G$ is a no instance of Hamiltonian path problem, by Theorem \ref{thm3}, either (i) or (ii) is true. (i) $|A|=|B|$ and there exists a vertex $u_g\in A_2$ such that $d(u_g)<g$ or $v_r\in B_2$ such that $d(v_r)<r$ (ii) $|A|=|B|+1$ there exists a vertex $u_g\in A_2$ such that $d(u_g)<g$ or $v_r\in B_2$ such that $d(v_r)\leq r$.\\
	(i) We observe that $G-u_g$ is an yes instance of the Hamiltonian path problem and thus we get a Hamiltonian path in $G-u_g$. Now we choose $v \in B_2$.  In $G-v$, the degree $d_G(u_g)=d_{G-v}(u_g)$. By Theorem \ref{thm3},  $G-v$ is a no instance of the Hamiltonian path problem. Therefore, $G$ is non-path-hypohamiltonian.\\
	(ii) We choose $v \in B_1 (B_2)$. In $G-v$, It is easy to see that $|A|>|B|+1$. By Theorem \ref{thm3}, $G-v$ is a no instance of Hamiltonian path problem. Therefore, $G$ is non-path-hypohamiltonian.
	

	
	\end{proof}
{\bf Remark:} The problems discussed in this section and their proofs are constructive in nature, therefore we obtain a polynomial-time algorithm for Hamiltonicity variants.

\section{Generalizations of Hamiltonian path (cycle) in $P_5$-free chordal bipartite graphs}
We know that, if a problem $X$ is NP-complete for a graph class under study, then the problem  $X$ is also NP-complete in all its superclasses. Further, the generalization of the problem under study is also NP-complete in that graph class. For example, the Hamiltonian path is NP-complete on split graphs therefore it is NP-complete in chordal graphs. The longest path problem which is a generalization of the Hamiltonian path problem is NP-complete in split graphs. If a problem is polynomial-time solvable in a graph class, then it is appropriate to investigate the complexity of its generalization in that graph class. Interestingly, in $P_5$-free chordal bipartite graphs, Hamiltonian path (cycle) problems are polynomial-time solvable. So it is natural to look at the generalization of the Hamiltonian path (cycle) problems and their complexity study in $P_5$-free chordal bipartite graphs. In this section, we use the Hamiltonian cycle (path) problem as a framework to solve other combinatorial problems. For all other problems, we modify the input graph to obtain  $G^*$. The challenge lies in identifying $G^*$ for each combinatorial problem such as longest cycle (path), Steiner cycle (path). By calling the appropriate algorithm (Hamiltonian cycle, or Hamiltonian path Algorithm) on $G^*$, we obtain a result. A suitable modification to the result will give the longest cycle (path), Steiner cycle (path) for $G$. We shall see some of the generalizations of Hamiltonicity.

\subsection{Longest paths in $P_5$-free chordal bipartite graphs}
For a connected graph $G$, the longest path is an induced path of maximum length in $G$.  Since the Hamiltonian path is a path of maximum length, finding the longest path is trivially solvable if the input instance is an yes instance of the Hamiltonian path problem.  Thus the longest path problem is a generalization of the Hamiltonian path problem, and hence the longest path problem is NP-complete if the Hamiltonian path problem is NP-complete on the graph class under study.  On the other hand, it is interesting to investigate the complexity of the longest path problem in graphs where the Hamiltonian path problem is polynomial-time solvable.   Since, the Hamiltonian path problem in $P_{5}$-free chordal bipartite graphs is polynomial-time solvable, in this section, we shall investigate the complexity of the longest path problem in $P_{5}$-free chordal bipartite graphs.  \\\\
{\bf Pruning:} We shall now prune $G$ by removing vertices that will not be part of any longest path in $G$.  Without loss of generality, we assume that $G$ has no Hamiltonian path, and hence $A=B+1+f$, $f \geq 1$ or there must exists a vertex in $A_2$ ($B_2$) that does not satisfies the conditions mentioned in Theorem \ref{thm3}.  As part of pruning, we prune such vertices from $G$.   Recall that $A_2=(u_{1},u_{2},\ldots,u_{p})$.  Let  $u_r$ be the first vertex in $A_2$ with $d(u_{r})<r$.  Remove $u_r$ and relabel the vertices of $A_2$ so that the sequence is reduced to $(u_{1},u_{2},\ldots,u_{p-1})$.  With respect to the modified sequence, if we find $u_i$ such that $d(u_{i}) < i$, then prune $u_i$ and update the sequence.  If there are no such $u_r$, then $c=0$.  After, say $c$ iterations, $A_2$ becomes $(u_{1},u_{2},\ldots,u_{p-c})$ such that for $\forall u_g, 1 \leq g \leq (p-c), d(u_g) \geq g$.  Similarly, after $d$ iterations, $B_2$ becomes $(v_{1},v_{2},\ldots,v_{q-d})$ such that for $\forall v_h, 1 \leq h \leq (q-d), d(v_h) \geq h$.   After pruning the vertices in $A$ is reduced to the set $A'$, $|A'|= |A| - c$,, similarly, $B$ reduced to the set $B'$, $|B'|=|B|-d$.  From now on, when we refer to $A_2$ ($B_2$), it refers to the modified $A_2$ ($B_2$).  Let $G^*$ be the modified graph of $G$. $V(G^*) = V(A') \cup V(B')$. \\

\noindent \textbf{Case 1: $|A'| = |B'|$}\\
We observe that $G^*$ satisfies {\em NNO} and as per Theorem \ref{thm3}, $G^*$ has a Hamiltonian path, which is $ P= (u_1, y_1, u_2, y_2, \ldots, u_{p-c}, y_{p-c}, x_{(q-d)+1}, y_{(p-c)+1}, $ $x_{(q-d)+2}, y_{(p-c)+2}, \ldots, x_i, y_j, x_{q-d}, v_{q-d},$ $x_{(q-d)-1},$ $v_{(q-d)-1},$  $ \ldots, x_1, v_1)$  \\
\textbf{Case 2: $|A'| = |B'|+1+f$, $f \geq 0 $} \\
If $f>0$, By Theorem \ref{thm3}, $G^*$ is not an yes instance of the Hamiltonian path problem.  We remove $f$ vertices from $A'_2$.  Let $G^*_1(A' \setminus \{u_{(p-c)-1},u_{(p-c)-2},\ldots, u_{(p-c)-f}\},B')$ be the modified graph.  If $f=0$, then $G^*_1(A',B')$ be same as $G^*$. \\
Case 2.1: $\exists  v_r$ $\in B'_2 $ in $G^*_1$ such that $d_{G^*_1}(v_{r}) = r$\\
Since $d_{G^*_1}(v_{r}) = r$, $G^*_1$ is not an yes instance of the Hamiltonian path problem as per Theorem \ref{thm3}.  However $G^*_1(A' \setminus \{u_{(p-c)-1},u_{(p-c)-2},\ldots, u_{(p-c)-f},u_{(p-c)-(f+1)}\},B')$ has a Hamiltonian path as per Theorem \ref{thm3}.\\
$ P= (u_1, y_1, u_2, y_2, \ldots, u_{(p-c)-(f+1)}, y_{(p-c)-(f+1)}, x_{(q-d)+1},y_{((p-c)-(f+1))+1},x_{(q-d)+2}, y_{((p-c)-(f+1))+2}, \ldots, x_i,$ $ y_j, x_{q-d}, v_{q-d},$ $ x_{(q-d)-1},$ $ v_{(q-d)-1}, \ldots,$ $ x_1, v_1)$  \\ 
Case 2.2: $\nexists  v_r$ $\in B'_2 $ in $G^*_1$ such that $d_{G^*_1}(v_{r}) = r$\\ 
By Theorem \ref{thm3}, $G^*_1$ is an yes instance of the Hamiltonian path problem.\\
$ P= (u_1, y_1, u_2, y_2, \ldots, u_{(p-c)-f}, y_{(p-c)-f}, x_{(q-d)+1},y_{((p-c)-f)+1},x_{(q-d)+2}, y_{((p-c)-f)+2}, \ldots, x_i, y_j, x_{q-d}, v_{q-d},$ $ x_{(q-d)-1},$ $ v_{(q-d)-1}, \ldots,$ $ x_1, v_1)$  \\ 

\noindent\textbf{Claim 1:} $P$ is a longest path in $G^*$. 
\begin{proof}
	$G^*$ be the graph obtained by pruning the vertices from $A_2$ and $B_2$ in $G$ and thus $G^*$ becomes an yes instance of the Hamiltonian path problem. Since $G^*$ satisfies {\em (NNO)}, for all the pruned vertices of $A_2$ and $B_2$, their neighborhood is a subset of $\{y_1,\ldots,y_{p-c}\}$ and $\{x_1,\ldots,x_{q-d}\}$ respectively. This shows that the pruned vertices of $A_2$ and $B_2$ cannot be augmented to $P$ to get a longer path in $G$. This proves that $P$ is a longest path in $G$. \qed
\end{proof}
\noindent\textbf{Trace of the longest path algorithm:}
\begin{figure}
	\begin{center}
		\includegraphics[width=80mm,scale=0.5]{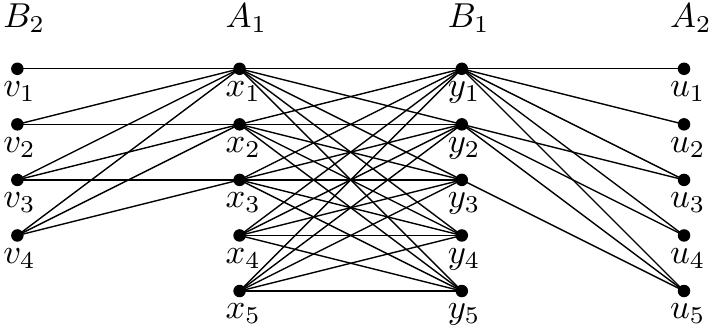}
		\caption{\small \sl   An illustration for the proof of Longest path (Case 1).\label{fig:Lp1}}
	\end{center}
\end{figure}
For Figure \ref{fig:Lp1}, we shall trace the longest path algorithm.  Consider the vertices of $A_2$,  $d(u_2)=1$, $u_2$ violates the degree constraint, so we prune $u_2$ and relabel the vertices $u_3$ as $u_2$, $u_4$ as $u_3$, and $u_5$ as $u_4$.  The updated sequence of $A_2$ is $(u_1,u_2,u_3,u_4)$. 
With respect to the modified sequence, $d_{G^*}(u_3)=2$, prune $u_3$ and relabel the vertex $u_4$ as $u_3$. Now all the vertices of $A'_2$ satisfies the degree constraint and the sequence is $(u_1,u_2,u_3)$.  
By applying the procedure to the vertices of $B_2$, we get $B'_2$ as $(v_1,v_2,v_3)$.  
The resultant graph $G^*$ falls under Case 1.  We obtain the longest path $ P= (u_1, y_1, u_2, y_2, u_3, y_3, x_4, y_4,x_5, y_5,x_3,v_3,x_2,v_2,x_1,v_1)$  \\
\begin{figure}
	\begin{center}
		\includegraphics[width=80mm,scale=0.5]{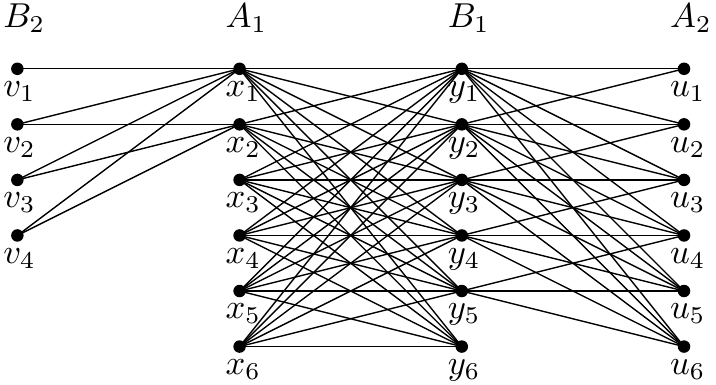}
		\caption{\small \sl  An illustration for the proof of Longest path (Case 2).\label{fig:Lp2}}
	\end{center}
\end{figure}

\noindent Consider Figure \ref{fig:Lp2}.  The vertex $u_6$ violates the degree constrains, prune $u_6$ and we obtain $A'_2$, whose sequence reduced to $(u_1,u_2,u_3,u_4,u_5)$.  Similarly $v_3$ and $v_4$ violates the constraint and thus the sequence of $B'_2$ is reduced to  $(v_1,v_2)$.  $G^*$ follows Case 2 of the procedure. Let $G^*_1$ be the graph obtained by removing $f$,$f=\{u_4,u_5\}$ vertices from $G^*$.  We observe that $d_{G^*}(v_1)=1$, by Case 2.1, remove $u_3$ and the longest path is  $ P= (u_1, y_1, u_2, y_2, x_3, y_3, x_4, y_4,x_5, y_5,x_6, y_6,x_2,v_2,x_1,v_1)$  \\ 
\begin{theorem}
	\label{thmlong}
	Let $G$ be a $P_5$-free chordal bipartite graph.  Finding the longest path in $G$ is polynomial-time solvable.
\end{theorem}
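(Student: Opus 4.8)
The plan is to show that the construction developed in this subsection --- the pruning step followed by the explicit path $P$ --- is itself a polynomial-time algorithm whose output is correct by Claim~1. Since the argument is entirely constructive, the proof reduces to verifying that each stage runs in polynomial time and that the object produced is indeed a longest path; no new combinatorial content is required beyond what has already been established.

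First I would account for the preprocessing. By Theorem~\ref{thm1}, $G$ admits an \emph{NNO}: once the maximum biclique $(A_1,B_1)$, and hence the partition $A = A_1 \cup A_2$, $B = B_1 \cup B_2$, is in hand, the orderings of $A_2$ and $B_2$ are obtained simply by sorting their vertices in non-decreasing order of degree, which costs $O(n \log n)$. Computing the degrees and recovering the partition from the structural results (Lemmas \ref{lem1}--\ref{lem3}) takes at most $O(n^2)$ time, so the \emph{NNO} is available in polynomial time. Next I would bound the pruning loop: each iteration locates the first vertex $u_r$ (resp. $v_r$) violating $d(u_r)\ge r$, deletes it, and relabels; a single scan of the current ordering checks these degree conditions in $O(n)$ time, and because every iteration deletes at least one vertex there are at most $|A_2|+|B_2| = O(n)$ iterations. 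Hence the pruning terminates in $O(n^2)$ time and yields $G^*$ together with reduced sequences of $A_2$ and $B_2$ that satisfy the hypotheses of Theorem~\ref{thm3}. Deciding which of Case~1 or Case~2 applies is a single comparison of $|A'|$ and $|B'|$, removing the $f$ surplus vertices (and, in Case~2.1, one further vertex) is again $O(n)$, and the path $P$ is written down directly from the displayed formula in $O(n)$ time.

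Correctness is exactly Claim~1: $P$ is a Hamiltonian path of $G^*$, and since every pruned vertex of $A_2$ (resp. $B_2$) has, by the \emph{NNO}, its neighbourhood contained in $\{y_1,\dots,y_{p-c}\}$ (resp. $\{x_1,\dots,x_{q-d}\}$), no pruned vertex can be appended to $P$ to lengthen it; thus $P$ is a longest path of $G$. Combining the polynomial bounds of each stage gives an overall polynomial-time algorithm, which establishes the theorem. The only genuinely delicate ingredient --- that the pruned vertices can never extend any path --- is not a complexity issue but the combinatorial fact already secured by Claim~1 and the nested structure of the neighbourhoods, so I expect the step requiring most care to be confirming that the pruning deletes precisely the vertices excluded from every longest path, neither too many nor too few; everything else is routine bookkeeping.
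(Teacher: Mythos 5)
Your proposal is correct and follows essentially the same route as the paper: the paper's own proof of Theorem~\ref{thmlong} simply states that the result follows from the pruning discussion and Claim~1, exactly the two ingredients you invoke. Your additional bookkeeping (sorting for the \emph{NNO}, the $O(n^2)$ bound on the pruning loop, and the $O(n)$ write-out of $P$) merely makes explicit the polynomial-time accounting that the paper leaves implicit in the phrase ``the proofs are constructive.''
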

\begin{proof}
	Follows from the above discussion on pruning and Claim 1. \qed
\end{proof}
{\bf Remarks:} As an extension of longest path problem, we naturally obtain a minimum leaf spanning tree of $G$, which is a spanning tree of $G$ with the minimum number of leaves, in polynomial-time.  Since $G$ satisfies {\em {\em {\em NNO}}}, the vertices pruned while constructing $G^*$, cannot be included as internal vertices of $P$.  We shall now construct a minimum leaf spanning tree $T$ with $P$ as a subtree.  (i) the pruned vertices are augmented to $P$ as leaves to obtain $T$.  (ii) if $|A_1'|>|B_1'|$, then the vertices in $A_1'$ not included in $P$ are added to $P$ as leaves to obtain $T$.  Similarly, if $|B_1'|>|A_1'|$, then the vertices in $B_1'$ not included in $P$ are added to $P$ as leaves to obtain $T$.   This shows that $T$ has $|V(G) \setminus V(P)| + 2$ leaves.  Maximum leaf spanning tree of $G$ can be constructed by choosing $x_1y_1$ edge and augment all other vertices of $A_1, A_2$ to the vertex $y_1$, similarly $B_1,B_2$ to $x_1$. 
\begin{corollary}
	Minimum connected dominating set in $P_5$-free chordal bipartite graph is polynomial-time solvable.
	\end{corollary}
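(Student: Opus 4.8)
The plan is to reduce the minimum connected dominating set problem to the maximum leaf spanning tree construction already exhibited in the Remarks immediately preceding this corollary. I would begin by recalling the classical duality between the two problems: a set $D \subseteq V(G)$ is a connected dominating set if and only if $G$ has a spanning tree whose set of internal (non-leaf) vertices is exactly $D$. One direction attaches every vertex outside $D$ to a neighbour in $D$ as a leaf (domination guarantees such a neighbour) over a spanning tree of the connected subgraph $G[D]$; the other direction uses the fact that, in a tree on at least three vertices, no two leaves are adjacent, so the internal vertices induce a connected subtree dominating all leaves. Consequently $\mathrm{min\ CDS} = |V(G)| - \ell^\ast$, where $\ell^\ast$ is the maximum number of leaves over all spanning trees of $G$, and the internal vertices of any maximum leaf spanning tree form a minimum connected dominating set.

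Next I would invoke the maximum leaf spanning tree $T$ built in the Remarks, which is centred on the edge $\{x_1,y_1\}$ (an edge because $(A_1,B_1)$ is the maximum biclique) and hangs every vertex of $A\setminus\{x_1\}$ off $y_1$ and every vertex of $B\setminus\{y_1\}$ off $x_1$. Its internal vertex set is precisely $\{x_1,y_1\}$, so by the duality this is a candidate minimum connected dominating set. To confirm it directly I would verify the three requirements: (i) $\{x_1,y_1\}$ induces a single edge and is therefore connected; (ii) every vertex of $A_2$ is adjacent to $y_1$ and every vertex of $B_2$ is adjacent to $x_1$, which follows from Theorem \ref{thm1}, since the nested neighbourhoods of the $A_2$-vertices all contain the lowest-indexed vertex $y_1$ of $B_1$ (and symmetrically for $B_2$); and (iii) every remaining vertex of $A_1$ (resp.\ $B_1$) is adjacent to $y_1$ (resp.\ $x_1$) by completeness of the biclique $(A_1,B_1)$. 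Thus $\{x_1,y_1\}$ dominates $V(G)$ and is connected.

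For optimality I would establish the matching lower bound: no single vertex can be a connected dominating set of a connected bipartite graph on at least three vertices, because a vertex $v\in A$ is non-adjacent to every other vertex of $A$ by bipartiteness, leaving some vertex undominated unless $|A|=1$, and symmetrically for $B$. Hence, outside the degenerate complete-bipartite-star cases in which one side is a singleton (where the minimum connected dominating set trivially has size one), the minimum connected dominating set has size exactly two and equals $\{x_1,y_1\}$. Since $x_1$ and $y_1$ are identified in polynomial time once the maximum biclique $(A_1,B_1)$ and the \emph{NNO} are computed, the whole construction is polynomial-time, which proves the corollary.

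I do not anticipate a substantive obstacle: the structure is rigid enough that the answer is constant-sized. The only points demanding care are making the domination check rigorous through the \emph{NNO} (so that the lowest-indexed neighbours $x_1,y_1$ are provably adjacent to all of $A_2$ and $B_2$) and cleanly separating out the singleton-side degenerate cases when stating the lower bound.
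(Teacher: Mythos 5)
Your proposal is correct and follows essentially the same route as the paper: the paper's one-line proof (``follows from Theorem~\ref{thmlong}'') is really a pointer to the Remarks preceding the corollary, where the maximum leaf spanning tree centred on the edge $\{x_1,y_1\}$ is constructed, and your argument is exactly the fleshed-out version of that --- the standard duality between connected dominating sets and maximum-leaf spanning trees, the \emph{NNO}-based verification that $\{x_1,y_1\}$ dominates everything, and the size-two lower bound. If anything, your write-up is more rigorous than the paper's, which never states the duality or the degenerate single-vertex cases explicitly.
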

\begin{proof}
	Follows from Theorem \ref{thmlong}.
	\end{proof}
\subsection{Longest cycles in $P_5$-free chordal bipartite graphs}
Similar to the longest path, the longest cycle is an induced cycle of maximum length in $G$. It is natural to ask for the longest cycle in a graph if $G$ has no Hamiltonian cycle because it is trivially solvable when $G$ has a Hamiltonian cycle. Since, the Hamiltonian cycle problem in $P_5$-free chordal bipartite graphs is polynomial-time solvable, in this section, we shall investigate the complexity of the longest cycle problem.
\\ \textbf{Pruning}: Without loss of generality, we assume that $G$ has no Hamiltonian cycle, and hence $A\ne B$ or there must exist a vertex in $A_2 ~(B_2)$ that does not satisfies the condition mentioned in Theorem 2.  The violated vertices cannot be a part of the longest cycle, so we prune such vertices from $G$.  Let $u_r$ be the first vertex in $A_2$ with $d(u_r) \leq r$.  Remove $u_r$ and relabel the vertices of $A_2$ until there are no such  $u_r$.  After say $c$ iterations, $A_2$ becomes $(u_1, u_2,\ldots, u_{p-c})$ such that for $\forall$$u_g$, 1$\leq g \leq (p-c)$, $d(u_g) > g$. 
Similarly, after say $d$ iterations, $B_2$ becomes $(v_1, v_2,\ldots, v_{q-d})$ such that for $\forall$$v_h$, 1$\leq h \leq (q-d)$, $d(u_h) > h$. After pruning, $A$ reduced to the set $A'$, $|A'|= |A|- c$ and $B$ reduced to the set $B'$, $|B'|=|B|-d$.  Let the modified graph be $G^*$.\\ 

\noindent \textbf{Case 1: $|A'| = |B'|$}\\
$G^*$ satisfies {\em NNO} and by Theorem \ref{thm2}, $G^*$ is an yes instance of the Hamiltonian cycle problem. \\
$ C= (y_1,u_1, y_2,u_2,  \ldots, y_{p-c},u_{p-c}, y_{(p-c)+1}, x_1, v_1,x_2, v_2, \ldots,$ $  x_{q-d},$ $ v_{q-d}, x_{(q-d)+1},y_{(p-c)+2}, x_{(q-d)+2},  \ldots,$ $  y_j,$ $ x_i, y_1 )$ \\ 
\noindent \textbf{Case 2: $|A'| = |B'|+f$, $f\geq 1$}\\
By Theorem \ref{thm2}, $|A'| = |B'|$ and hence $G^*$ is an yes instance of the Hamiltonian cycle problem.  We Remove $f$, $\{u_{(p-c)-1},u_{(p-c)-2},\ldots, u_{(p-c)-f}\}$ vertices from $A'_2$ in $G^*$ results  $|A'| = |B'|$ and by Theorem \ref{thm2}, $G^*$ has a Hamiltonian cycle.
\\ $ C= (y_1,u_1, y_2,u_2,  \ldots, y_{(p-c)-f},u_{(p-c)-f}, y_{((p-c)-f)+1}, $ $x_1, v_1,x_2, v_2,$ $\ldots,x_{q-d},$ $ v_{q-d}, x_{(q-d)+1},y_{((p-c)-f)+2},$ $ x_{(q-d)+2},  \ldots,$ $ y_j, x_i, y_1 ) $ \\

\noindent \textbf{Claim 2:} $C$ is a longest cycle in $G^*$. 
\begin{proof}
	We prune the vertices from $A_2$ and $B_2$ in $G$ and let the modified graph be $G^*$.  We observe that $G^*$ satisfies the condition mentioned in Theorem 2.  Since $G^*$ satisfies {\em (NNO)}, the neighborhood of pruned vertices of $A_2$ and $B_2$ is a subset of $\{y_1,\ldots,y_{p-c}\}$ and $\{x_1,\ldots,x_{q-d}\}$ respectively.  Therefore the pruned vertices cannot be augmented to $C$ to get a longer cycle in $G^*$.  Hence $C$ is a longest cycle in $G^*$. \qed
\end{proof}

\noindent\textbf{Trace of the Algorithm:}
Consider the graph $G$ given in Figure \ref{fig:Lp1}, $u_1$ and $u_2$ violates the degree constraint, we prune the vertices and the sequence becomes $(u_1,u_2,u_3)$.  With respect to the modified sequence $d_{G^*}(u_3)=3$, prune $u_3$ and the sequence becomes $(u_1,u_2)$.  Similarly, in $B_2$, $v_1$ is pruned, $(v_1,v_2,v_3)$.  We find $v_3$ such that $d_{G^*}(v_3)=3$, prune $v_3$, $(v_1,v_2)$.  Graph $G^*$ satisfies case 1, $|A'| = |B'|$ and the longest cycle $C=(y_1,u_1, y_2,u_2,y_3, x_1, v_1,x_2, v_2,x_3,y_4,x_4,y_5,x_5,y_1)$. \\
Case 2: Consider the graph $G$ given in Figure \ref{fig:Lp2}, vertices $u_5$ and $u_6$ are pruned and $A'$ becomes  $(u_1,u_2,u_3,u_4)$.  Similarly, $v_1$ is pruned and the sequence becomes $(v_1,v_2,v_3)$.  In the modified graph, $v_2$ and $v_3$ violates degree constraint and thus $B'_2$ has a vertex $v_1$.  $G^*$ satisfies case 2 $|A'| = |B'|+f$, $f=(u_2,u_3,u_4)$.  Let $G^*_1$ be the graph obtained by removing $(u_2,u_3,u_4)$.  The longest cycle $C=(y_1,u_1, y_2, x_1, v_1,x_2,y_3,x_3,y_4,x_4,y_5,x_5,y_1)$. 
\subsection{Steiner paths in $P_{5}$-free chordal bipartite graphs}
The Steiner path problem is introduced in \cite{Steinerpathcograph} which is a variant of the Steiner tree and a generalization of the Hamiltonian path problem.  Given $G, R \subseteq V(G)$, find a path containing all of $R$ minimizing $V(G) \setminus R$, if exists.  Note that, this {\em constrained path problem} is precisely the Hamiltonian path problem when $R=V(G)$.  This has another motivation as well.  The Steiner tree problem \cite{mullar1987brandstadt} is the problem of connecting a given set $R$ of vertices (known as terminal vertices) by adding a minimum number of vertices from $V(G) \setminus R$ (known as Steiner vertices).  The Steiner tree problem is trivially solvable in $P_{5}$-free chordal bipartite graphs.  It is natural to ask for a variant, namely the Steiner path.  It is important to highlight that not all input graphs have Steiner paths.  We shall present constructive proof for the existence of the Steiner path in a $P_{5}$-free chordal bipartite graph.
\begin{lemma}
	\label{lemma spath rca2}
	For a $P_5$-free chordal bipartite graph and $R=\{u_1,\ldots,u_r\} \subseteq A_2$, $G$ has Steiner path $P$ if and only if the vertices of $R$ has an ordering $(u_1,\ldots,u_r)$ such that $\forall u_g, d(u_g) \geq g$, $1 \leq g \leq r-1$ and $d(u_r) \geq d(u_{r-1})$. 
\end{lemma}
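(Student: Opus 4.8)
The plan is to exploit two facts supplied by the structure theory. By Lemma~\ref{lem2} every terminal $u_i \in A_2$ has $N(u_i) \subseteq B_1$, and by Theorem~\ref{thm1} the neighbourhoods nest once the terminals are listed in non-decreasing degree (the \emph{NNO} order), so $N(u_1) \subseteq N(u_2) \subseteq \cdots \subseteq N(u_r)$. The first thing I would record is that, since every vertex of $R$ lies on the $A$-side, no two terminals are adjacent; hence in \emph{any} path containing all of $R$ consecutive terminals must be separated by at least one $B$-vertex, and each such separator is a common neighbour lying in $B_1$. Therefore any path through $R$ uses at least $r-1$ Steiner vertices, and a minimum one has the shape $u_{\pi(1)}, b_1, u_{\pi(2)}, b_2, \dots, b_{r-1}, u_{\pi(r)}$ with the $b_i$ distinct and $b_i \in N(u_{\pi(i)}) \cap N(u_{\pi(i+1)})$.

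For necessity I would run the Chv\'atal-style component count already used in Theorems~\ref{thm2} and \ref{thm3}. Suppose the NNO order has a first index $g \le r-1$ with $d(u_g) < g$, i.e. $|N(u_g)| \le g-1$, and set $S = N(u_g)$. By nesting, for every $i \le g$ we have $N(u_i) \subseteq N(u_g) = S$, so in any Steiner path $P$ all path-neighbours of $u_1, \dots, u_g$ lie in $S$; deleting $S$ therefore isolates each of $u_1, \dots, u_g$, giving at least $g$ components of $P - S$. But removing the $|S| \le g-1$ vertices of $S$ from a path yields at most $|S|+1 \le g$ components. Hence $P - S$ has exactly $g$ components, all singletons $u_1,\dots,u_g$, forcing $V(P) \subseteq \{u_1,\dots,u_g\} \cup S$. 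Since $u_r \in V(P)$ with $r > g$ and $u_r \notin S \subseteq B_1$, this is a contradiction; thus $d(u_g) \ge g$ for all $g \le r-1$, while $d(u_r) \ge d(u_{r-1})$ is just the NNO ordering applied to the endpoint terminal.

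For sufficiency I would turn the choice of separators into a system of distinct representatives. Keeping the NNO order and placing the terminals as $u_1, b_1, u_2, b_2, \dots, u_{r-1}, b_{r-1}, u_r$, nesting gives $N(u_i) \cap N(u_{i+1}) = N(u_i)$, so it suffices to select distinct $b_i \in N(u_i)$ for $i = 1, \dots, r-1$. I would then check Hall's condition: for any set $T$ of positions, $\bigcup_{i \in T} N(u_i) = N(u_{\max T})$ by nesting, and the tightest instance $T = \{1,\dots,t\}$ demands exactly $|N(u_t)| = d(u_t) \ge t$, which is precisely the hypothesis. Hall's theorem yields the distinct $b_i$; each is adjacent to $u_i$ by choice and to $u_{i+1}$ because $b_i \in N(u_i) \subseteq N(u_{i+1})$ (and $b_{r-1} \in N(u_{r-1}) \subseteq N(u_r)$ since $d(u_r) \ge d(u_{r-1})$), so $P$ is a genuine path through $R$ realising the minimum $r-1$ Steiner vertices.

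The step I expect to be the crux is the necessity count: the nesting must be used to confine all separators of the small terminals to the single set $S = N(u_g)$, so that the bound on components of $P - S$ collapses sharply enough to force $V(P) \subseteq \{u_1,\dots,u_g\} \cup S$ and thereby expel $u_r$. Recognising that feasibility is exactly Hall's condition for an SDR over nested sets is what makes the two directions align, and the special treatment of $u_r$ (endpoint, needing only one separator) is the detail that explains why its degree constraint is relaxed to $d(u_r) \ge d(u_{r-1})$ rather than $d(u_r) \ge r$.
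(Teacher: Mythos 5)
Your proof is correct, but it takes a genuinely different route from the paper. The paper proves this lemma by reduction: it forms the induced subgraph $G^*$ on $R \cup \{y_1,\ldots,y_{r-1}\}$ and invokes the Hamiltonian path characterization (Theorem~\ref{thm3}) on $G^*$, both to construct the path $(u_1,y_1,u_2,\ldots,y_{r-1},u_r)$ in the sufficiency direction and to argue necessity. You instead argue directly on an arbitrary Steiner path: for necessity you delete $S=N(u_g)$ (with $g$ the first bad index in the \emph{NNO} order) and count components, showing $V(P)\subseteq\{u_1,\ldots,u_g\}\cup S$ and hence that $u_r$ cannot lie on $P$; for sufficiency you pick the separators as a system of distinct representatives via Hall's theorem over the nested sets $N(u_1)\subseteq\cdots\subseteq N(u_r)$. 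Two remarks on what each approach buys. First, your necessity argument is actually \emph{stronger} than the paper's: the paper only observes that its particular construction on $G^*$ fails when some $d(u_g)<g$, which by itself does not exclude some other path in $G$ covering $R$; your separator count rules out \emph{every} path containing $R$, which is what necessity really requires. Second, your Hall's-theorem machinery in the sufficiency direction is heavier than needed --- since the neighborhoods nest, one can order $B_1$ so that each $N(u_g)$ is the prefix $\{y_1,\ldots,y_{d(u_g)}\}$ and simply take $b_g=y_g$, which is exactly the paper's explicit path --- but the SDR viewpoint is valid and makes the role of the degree hypothesis (as Hall's condition for nested sets) transparent. One small point to make explicit in a polished write-up: the lemma quantifies over \emph{some} ordering of $R$, while both your argument and the paper's work with the degree-sorted (\emph{NNO}) ordering, for which alone the nesting of Theorem~\ref{thm1} and Lemma~\ref{lem3} is guaranteed; one should note the standard rearrangement fact that if any ordering satisfies the degree constraints then the sorted one does too.
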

\begin{proof} 
	We modify the instance of Steiner path problem of $G$ to the instance of Hamiltonian path problem of $G^*$, where $G^*$ is the graph induced on the set $R \cup \{y_1,y_2,\ldots,y_{r-1}\}$.\\
	{\em Necessity :} Assume on the contrary, there exists a vertex $u_g\in R$ such that $d(u_g) < g$. Since $G^*$  is a no instance of the Hamiltonian path problem, the path $P$ obtained from $G^*$ is not a Steiner path, a contradiction.\\
	{\em Sufficiency:}  Since $R$ has an ordering $(u_1,\ldots,u_r)$ and $\forall u_g, d(u_g) \geq g$, $1 \leq g \leq r-1$, $R$ satisfies {\em NNO} in $G^*$. By passing $G^*$ to the Hamiltonian path algorithm we obtain a path $P$. Since the path $P$ spans all of $R$, $P$ is a Steiner path in $G$. Therefore, $P=(u_1,y_1,\ldots,y_{r-1},u_r)$ is a Steiner path.  Since $R$ is an independent set of size $r$, any path containing $R$ must have $r-1$ additional vertices and hence $P$ is a minimum Steiner path.  
	\qed
\end{proof}
\begin{lemma}
	\label{lemma spath rca1}
	For a $P_5$-free chordal bipartite graph and $R=\{x_1,\ldots,x_r\} \subseteq A_1$, $G$ has Steiner path $P$ if and only if  there exists $(v_1,\ldots,v_{r-(j+1)})$ in $B_2$ such that $\forall v_h, d(v_h) \geq h$, $1 \leq h \leq |A_1|- |B_1|$.
	\end{lemma}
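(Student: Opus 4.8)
The plan is to mirror the reduction used in Lemma~\ref{lemma spath rca2}: convert the Steiner path problem on $G$ with terminal set $R \subseteq A_1$ into a Hamiltonian path problem on a carefully chosen induced subgraph $G^*$, and then invoke the characterization of Theorem~\ref{thm3}. The guiding observation is that $R$ lies entirely on one side of the bipartition, so any path through $R$ must separate consecutive terminals by vertices of $B$; since $R$ is independent of size $r$, this forces at least $r-1$ Steiner vertices, all drawn from $B$. Because every vertex of $B_1$ is adjacent to all of $A_1 \supseteq R$, the $j = |B_1|$ vertices of $B_1$ serve as universal connectors, leaving exactly $r-(j+1)$ connectors to be supplied by $B_2$. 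This is precisely the number of $B_2$ vertices named in the statement.

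First I would set $G^* = G\bigl[R \cup B_1 \cup \{v_1, \ldots, v_{r-(j+1)}\}\bigr]$, where $v_1, \ldots, v_{r-(j+1)}$ are the selected $B_2$ vertices. As an induced subgraph of a $P_5$-free chordal bipartite graph, $G^*$ inherits both properties. Its bipartition has $A$-part $R$ of size $r$ and $B$-part $B_1 \cup \{v_1,\ldots,v_{r-(j+1)}\}$ of size $j + (r-(j+1)) = r-1$, so $|A^*| = |B^*| + 1$ and $G^*$ falls under case (ii) of Theorem~\ref{thm3}. I would then identify $(R, B_1)$ as the maximum biclique of $G^*$: $R$ is completely joined to $B_1$, while by Lemma~\ref{lem2} each $v_h \in B_2$ satisfies $N_G(v_h) \subset A_1$, so in $G^*$ we have $N_{G^*}(v_h) = N_G(v_h) \cap R$, and the $v_h$ constitute the set $B_2^*$ with $A_2^*$ empty. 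The degree requirement of Theorem~\ref{thm3}(ii) on $A_2^*$ is then vacuous, and the requirement on $B_2^*$ reduces, after translating $d_{G^*}(v_h)$ back to $d(v_h)$, to the stated degree condition on the $v_h$.

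For sufficiency I would take a sequence $(v_1, \ldots, v_{r-(j+1)})$ satisfying the degree condition, verify that $G^*$ meets Theorem~\ref{thm3}(ii), and read off the Hamiltonian path of $G^*$ produced there; since this path spans $R$ and uses only $r-1$ Steiner vertices, it is a minimum Steiner path of $G$, the minimum being forced because $R$ is independent of size $r$. For necessity I would argue contrapositively: by the Nested Neighborhood Ordering (Theorem~\ref{thm1}) the highest-degree $B_2$ vertices have the largest neighborhoods, so the greedy choice of the top $r-(j+1)$ vertices of $B_2$ is optimal; if even this choice fails the degree condition, then $G^*$ violates the necessity direction of Theorem~\ref{thm3}(ii) (equivalently Chv{\'a}tal's condition on the corresponding separator), and no assignment of connectors can route all of $R$ into a single path, so no Steiner path exists.

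The main obstacle is the bookkeeping in the middle step: pinning down the maximum biclique of $G^*$ and translating degrees between $G$ and $G^*$. The delicate points are (a) that $N_G(v_h)$ is an initial segment $\{x_1, \ldots, x_{d(v_h)}\}$ of the ordering of $A_1$, which follows from the prefix structure exposed in the proof of Theorem~\ref{thm2}, so that restricting to $R$ preserves the nested structure and keeps each $v_h$ in $B_2^*$; (b) handling any $v_h$ with $d_G(v_h) \geq r$, which become universal connectors and must be absorbed into $B_1^*$ rather than $B_2^*$; and (c) reconciling the exact index range and the comparison ($\geq$ versus $>$) in the inequality, which is sensitive to whether the degree is measured in $G$ or in $G^*$. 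Once the prefix and NNO structure is used to show that $N_{G^*}(v_h)$ is again an initial segment of $R$, these points resolve and the equivalence follows directly from Theorem~\ref{thm3}(ii).
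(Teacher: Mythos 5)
Your proposal is correct and follows essentially the same route as the paper: it builds the same induced subgraph $G^* = G\bigl[R \cup B_1 \cup \{v_1,\ldots,v_{r-(j+1)}\}\bigr]$ and reduces the Steiner path question to the Hamiltonian path characterization of Theorem~\ref{thm3}, exactly as the paper's proof does. Your extra bookkeeping (identifying $(R,B_1)$ as the maximum biclique of $G^*$, translating degrees between $G$ and $G^*$, and using \emph{NNO} to argue that the greedy choice of $B_2$ connectors is optimal in the necessity direction) only makes explicit what the paper's terse proof leaves implicit.
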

\begin{proof}
	 Let $G^*$ be the graph induced on  $R\cup B_1\cup\{v_1,\ldots,v_{r-(j+1)}\}$.\\
	{\em Necessity :} 	 It is easy to see that if $|R| \leq |B_1|+1$, then $P=(x_1,y_1,\ldots,x_{r-1},y_{r-1},x_r)$ is a minimum Steiner path. 
	 We shall now see the case where, $|R| > |B_1|+1$. Assume on the contrary, there exists a vertex $v_h\in B_2$ such that $d(v_h) < h$. We observe that $G^*$ is a no instance of Hamiltonian path problem, a contradiction. \\
	{\em Sufficiency:} We observe that $G^*$ has a Hamiltonian path which is a Steiner path in $G$. Note that the path
	 $P=(x_r,y_j,x_{r-1},y_{j-1},\ldots,x_{r-(j+1)},v_{r-(j+1)},\ldots, v_2,x_2,v_1,x_1)$ is a Steiner path of minimum cardinality. \qed
	\end{proof}
\begin{lemma}
	\label{lemma spath rca1b2}
		For a $P_5$-free chordal bipartite graph and $R=\{x_1,\ldots,x_r\} \subseteq (A_1 \cap B_2$), $G$ has Steiner path $P$ if and only if $R \cap B_2=(z_1,\ldots,z_l)$ has {\em NNO} with $(w_1,\ldots,w_l)$, $1\leq l<r$, of $A_1$.
\end{lemma}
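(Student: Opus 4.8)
The plan is to follow the reduction-to-Hamiltonian-path technique already used in Lemmas \ref{lemma spath rca2} and \ref{lemma spath rca1}: build an auxiliary induced subgraph $G^*$ whose vertex set is $R$ together with a \emph{minimum} set of Steiner vertices, and then invoke the Hamiltonian path characterization of Theorem \ref{thm3}. The new difficulty, compared with the two earlier lemmas, is that $R$ now straddles \emph{both} shores: it has $r-l$ terminals in $A_1$ and $l$ terminals $z_1,\ldots,z_l$ in $B_2$, so the Steiner path must simultaneously thread $A$-side terminals (which need $B$-side connectors) and $B$-side terminals (which need $A$-side connectors).

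First I would fix the construction of $G^*$. Since $(A_1,B_1)$ is a complete biclique, every terminal in $R\cap A_1$ is adjacent to all of $B_1$ and is therefore cheap to interleave; the cost is concentrated on the $z_h\in B_2$, each of which, by Lemma \ref{lem2}, satisfies $N(z_h)\subseteq A_1$ and, by Theorem \ref{thm1}, obeys $N(z_1)\subseteq N(z_2)\subseteq\cdots\subseteq N(z_l)$. The hypothesis that $(z_1,\ldots,z_l)$ has \emph{NNO} with $(w_1,\ldots,w_l)$ of $A_1$ should be read as: the nested chain admits distinct connectors $w_h\in N(z_h)$, which by the nesting is exactly the Chv{\'a}tal-type degree condition $d(z_h)\ge h$. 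I would then take $G^*$ to be the subgraph induced on $R\cup\{w_1,\ldots,w_l\}$ together with just enough vertices of $B_1$ to balance the shores so that $|A(G^*)|=|B(G^*)|$ or $|A(G^*)|=|B(G^*)|+1$, the regime where Theorem \ref{thm3} applies.

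For necessity, suppose $G$ has a Steiner path $P$ but the ordering condition fails, i.e.\ some $z_h$ has $d(z_h)<h$. By the nesting from Theorem \ref{thm1}, the union of neighborhoods $N(z_1)\cup\cdots\cup N(z_h)=N(z_h)$ has size $<h$ inside $A_1$; deleting $N(z_h)$ from $G^*$ isolates $z_1,\ldots,z_h$, so $c(G^*-N(z_h))>|N(z_h)|+1$, violating the component bound of Theorem \ref{thm3}, whence $G^*$ is a no-instance of the Hamiltonian path problem. Since the restriction of $P$ to $V(G^*)$ would have to be a Hamiltonian path of $G^*$, this contradicts the existence of $P$. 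For sufficiency, I would verify that the chosen $G^*$ meets Theorem \ref{thm3} and then write the Hamiltonian path explicitly, interleaving the chain $z_1,w_1,z_2,w_2,\ldots,z_l,w_l$ with the $A_1$-terminals threaded through the remaining $B_1$ vertices, exactly as in the constructive paths of Theorem \ref{thm3}. Because $P$ spans $R$ and uses the minimum possible number of connectors (as $R$ is an independent set, each terminal forces a private linking vertex), $P$ is a minimum Steiner path of $G$.

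The step I expect to be the main obstacle is pinning down $G^*$ so that the two otherwise independent threading processes mesh into a single alternating path while the global balance $|A(G^*)|\in\{|B(G^*)|,|B(G^*)|+1\}$ holds; getting this bookkeeping right — in particular ensuring the $w_h$ connectors and the $B_1$ connectors are chosen distinct and that no terminal is double-counted — is where the one-sided arguments of Lemmas \ref{lemma spath rca2} and \ref{lemma spath rca1} must be genuinely combined rather than merely quoted.
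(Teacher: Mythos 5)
Your proposal is correct to the same standard as the paper's own proof and takes essentially the same approach: you build the same auxiliary graph $G^*$ on $R$ together with the connectors $w_1,\ldots,w_l$ (indeed you are more careful, adding the $B_1$ balancing vertices that the paper's stated $G^*$ omits even though its own constructed path uses them), reduce both directions to the Hamiltonian path characterization of Theorem~\ref{thm3} via the Chv\'atal-type component count, and exhibit the identical interleaved path $(z_1,w_1,\ldots,z_l,w_l,y_1,x_1,\ldots,y_{r-l-1},x_{r-l})$. One caveat: your minimality remark leans on ``$R$ is an independent set,'' which does not hold in this mixed case (a terminal in $R\cap A_1$ may be adjacent to a terminal in $R\cap B_2$, and could itself serve as a connector), but since the paper asserts minimality with no argument at all, this does not separate your proof from the paper's on the stated claim.
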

\begin{proof}
	Let $G^*$ be the graph induced on the set $R\cup\{z_1,\ldots,z_l\}\cup \{w_1,\ldots,w_l\}.  $\\
	{\em Necessity :} 	Proof is similar to the necessity of Lemma \ref{lemma spath rca1}. \\
	{\em Sufficiency:} 	The modified graph $G^*$ has a Hamiltonian path. We obtain the path $P=(z_1,w_1,\ldots,z_l,w_l,y_1,x_1,$ $\ldots,y_{r-l-1},x_{r-l})$ which is a minimum Steiner path in $G$. \qed
\end{proof}
On the similar line, other cases $(a)$ $R \subseteq B_1$, $(b)$ $R \subseteq B_2$, $(c)$ $R \cap A_1 \not= \emptyset$ and $R \cap A_2 \not= \emptyset$, $(d)$ $R \cap A_2 \not= \emptyset$ and $R \cap B_1 \not= \emptyset$, $(e)$ $R \cap B_1 \not= \emptyset$ and $R \cap B_2 \not= \emptyset$, and $(f)$ $R \cap A_1 \not= \emptyset$ and $R \cap B_2 \not= \emptyset$ and  $R \cap A_2 \not= \emptyset$ can be proved. \\\\
\textbf{Trace:} Let us consider the set, $R=\{u_2,u_3,u_5\}$ in Figure \ref{fig:stpath} (a). It is clear that all the vertices in set $R$ satisfies the above given degree constrains and hence there exists a Steiner path, $P=\{u_2,y_1,u_3,y_2,u_5\}$. Suppose, if $R=\{u_1,u_2,u_3\}$,  then we observe that the vertex $u_2$ violates the degree constraint and hence there does not exist Steiner path in Figure \ref{fig:stpath} (b).
\begin{figure}
	\begin{center}
		\includegraphics[width=80mm,scale=0.5]{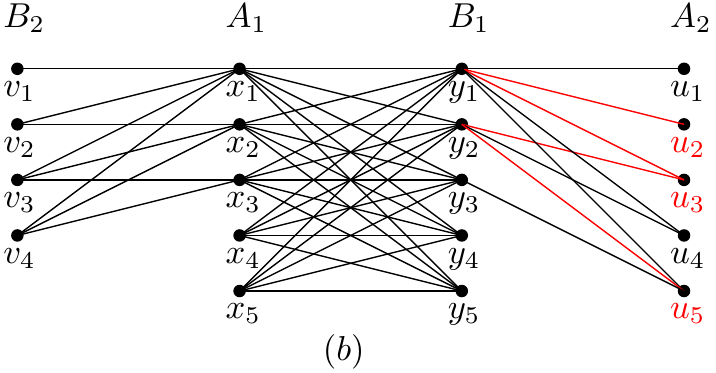}
		\includegraphics[width=80mm,scale=0.5]{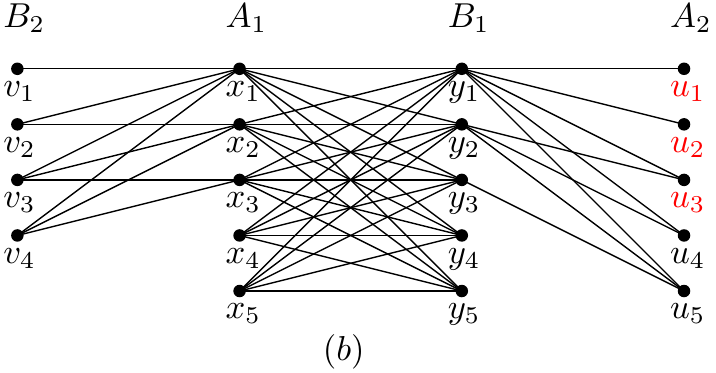}
		\caption{\small \sl  An example graph (a) Yes instance of Steiner path problem, (b) No instance of Steiner path problem.\label{fig:stpath}}
	\end{center}
\end{figure}\\

\subsection{Steiner cycles in $P_5$-free chordal bipartite graphs}
Similar to the Steiner path, given a graph $G, R \subseteq V(G)$, the Steiner cycle problem asks for a simple cycle containing all of $R$ minimizing $V(G) \setminus R$. Suppose, if $R=V(G)$, then this problem is precisely the Hamiltonian cycle problem. If $G$ is an yes instance of the Hamiltonian cycle problem, then this problem is trivially solvable.  We observe that not all input graphs have a solution to the Steiner cycle problem.  We shall present constructive proof for the existence of a Steiner cycle in a $P_5$-free chordal bipartite graph.
\begin{lemma}
	\label{lemma scycle rca2}
	For a $P_5$-free chordal bipartite graph and $R=\{u_1,\ldots,u_r\} \subseteq A_2$, $G$ has Steiner cycle $P$ if and only if the vertices of $R$ has an ordering $(u_1,\ldots,u_r)$ such that $\forall u_g, d(u_g) > g$, $1 \leq g \leq r-1$ and $d(u_r) \geq d(u_{r-1})$. 
\end{lemma}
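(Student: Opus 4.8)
The plan is to reduce the Steiner cycle problem to the Hamiltonian cycle characterisation of Theorem \ref{thm2}, exactly as Lemma \ref{lemma spath rca2} reduces the Steiner path to a Hamiltonian path. First I would build the auxiliary graph $G^*$ induced on $R \cup \{y_1, y_2, \ldots, y_r\}$, where $y_1, \ldots, y_r$ are the lowest-indexed vertices of $B_1$ in the canonical {\em NNO} ordering (by Lemma \ref{lem2} every neighbourhood $N(u_g)$ lies in $B_1$, and by Theorem \ref{thm1} these neighbourhoods form a chain, so each $N(u_g) = \{y_1, \ldots, y_{d(u_g)}\}$ is a prefix). Being an induced subgraph, $G^*$ is again $P_5$-free chordal bipartite, and its two sides both have size $r$, so it meets condition (i) of Theorem \ref{thm2}. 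The key structural observation to record is that any simple cycle of $G$ through all of $R$ — which lies entirely in $A_2 \subseteq A$ — alternates sides and so uses at least $r$ vertices of $B$; thus a minimum Steiner cycle through $R$ uses exactly $r$ non-terminal ($B$-side) vertices, and by the nestedness of neighbourhoods these can be shifted down to the prefix $y_1, \ldots, y_r$, so minimum Steiner cycles of $G$ through $R$ correspond precisely to Hamiltonian cycles of $G^*$.

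For sufficiency I would assume the stated ordering, with $d(u_g) > g$ for $1 \le g \le r-1$ and $d(u_r) \ge d(u_{r-1})$, and exhibit the cycle $C = (y_1, u_1, y_2, u_2, \ldots, y_r, u_r, y_1)$ directly. Since $N(u_g) = \{y_1, \ldots, y_{d(u_g)}\}$ is a downward-closed prefix, the inequality $d(u_g) \ge g+1$ supplies both edges $\{y_g, u_g\}$ and $\{u_g, y_{g+1}\}$ for $g \le r-1$, while $d(u_r) \ge d(u_{r-1}) \ge r$ supplies the two closing edges $\{y_r, u_r\}$ and $\{u_r, y_1\}$. Hence every edge of $C$ is present, $C$ spans $V(G^*)$, and it is the required minimum Steiner cycle (equivalently, one checks the degree conditions of Theorem \ref{thm2} hold for $G^*$ and invokes its sufficiency). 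Note this construction needs only the per-vertex prefix structure and the degree bounds, so it is insensitive to whether the chosen ordering of $R$ is degree-sorted.

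For necessity I would argue by contraposition and use a counting (Chv{\'a}tal-type) argument rather than the $G^*$-reduction alone, since we must rule out Steiner cycles that route through arbitrary, higher-indexed vertices of $B$. Suppose in the degree-sorted ordering some $u_g$ with $g \le r-1$ has $d(u_g) \le g$; by {\em NNO}, $N(u_1) \subseteq \cdots \subseteq N(u_g)$ with $|N(u_g)| = d(u_g) \le g$. In any cycle through $R$, the $g$ pairwise non-adjacent vertices $u_1, \ldots, u_g$ contribute $2g$ cycle-edges, all with endpoints in $N(u_g)$, whereas the at most $g$ vertices of $N(u_g)$ each have cycle-degree two, giving capacity at most $2g$. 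If $d(u_g) < g$ this is already impossible; if $d(u_g) = g$, equality forces every edge incident to $N(u_g)$ to remain inside $\{u_1, \ldots, u_g\}$, so $\{u_1, \ldots, u_g\} \cup N(u_g)$ closes off as a separate cycle disjoint from $u_r \in R$, contradicting that a single simple cycle contains all of $R$. I expect this capacity argument to be the main obstacle, because it is what makes the characterisation robust against non-minimal cycles; the prefix/chain structure of the neighbourhoods is precisely what lets any such cycle be pushed down onto $\{y_1, \ldots, y_r\}$ and reconciled with the $G^*$ picture.
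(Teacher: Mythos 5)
Your proof is correct, and its sufficiency half is essentially the paper's: both build the induced subgraph $G^*$ on $R \cup \{y_1,\ldots,y_r\}$ (the paper writes ``$R\cup (u_1,\ldots,u_r)$'', an evident typo for this set) and produce the cycle $(y_1,u_1,y_2,u_2,\ldots,y_r,u_r,y_1)$ via the characterization of Theorem \ref{thm2}, with the same independent-set argument showing that $r$ Steiner vertices are unavoidable, hence minimality. Where you genuinely diverge is the necessity direction. The paper argues only inside $G^*$: a violating vertex makes $G^*$ a no-instance of the Hamiltonian cycle problem, and this is declared a contradiction --- which silently assumes that any Steiner cycle of $G$ through $R$ can be confined to, or pushed down onto, $G^*$; that ``shifting'' step is exactly the one you flagged, and the paper never states or proves it. Your Chv\'{a}tal-type capacity count --- the $2g$ cycle-edges at the independent vertices $u_1,\ldots,u_g$ must all land in $N(u_g)$, whose total cycle-degree is at most $2d(u_g)$, with the equality case $d(u_g)=g$ sealing off a subcycle inside $\{u_1,\ldots,u_g\}\cup N(u_g)$ that misses $u_r$ --- applies to an arbitrary cycle of $G$ containing $R$, whatever vertices of $A_1$, $B_1$, $B_2$, or $A_2\setminus R$ it happens to use. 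So your route buys rigor: it closes the gap in the paper's necessity at the cost of a slightly longer argument, while the paper's reduction-only proof is shorter but incomplete as written.
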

\begin{proof} 
	 Let $G^*$ be the graph induced on the set $R\cup (u_1,\ldots,u_r)$.\\
	{\em Necessity :} Assume on the contrary, there exists a vertex $u_g\in R$ such that $d(u_g) \leq g$. Since $G^*$  is a no instance of the Hamiltonian cycle problem, the cycle $C$ obtained from $G^*$ is not a Steiner path, a contradiction.\\
	{\em Sufficiency:}  Since $R$ has an ordering $(u_1,\ldots,u_r)$ and $\forall u_g, d(u_g) > g$, $1 \leq g \leq r-1$, $R$ satisfies {\em NNO} in $G^*$. We call our Hamiltonian cycle algorithm on $G^*$ which outputs a cycle $C$. Since the cycle $C$ spans all of $R$, $C$ is a Steiner cycle in $G$. Therefore, $C=(y_1,u_1,\ldots,y_{r},u_r,y_1\}$ is a Steiner cycle.  Any cycle containing $R$ must have $r$ additional vertices and hence $C$ is a minimum Steiner cycle. 
	\qed
\end{proof}
\begin{lemma}
	\label{lemma scycle rca1}
	For a $P_5$-free chordal bipartite graph and $R=\{x_1,\ldots,x_r\} \subseteq A_1$, $G$ has Steiner cycle $C$ if and only if there exists $(v_1,\ldots,v_{r-j})$ in $B_2$ such that $\forall v_h, d(v_h) >h$, $1 \leq h \leq r-j$.

\end{lemma}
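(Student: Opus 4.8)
The plan is to follow the reduction template of Lemma~\ref{lemma scycle rca2} and Lemma~\ref{lemma spath rca1}: turn the Steiner-cycle instance $(G,R)$ with $R=\{x_1,\dots,x_r\}\subseteq A_1$ into a Hamiltonian-cycle instance on a small induced subgraph $G^*$ and then apply Theorem~\ref{thm2}. Since $(A_1,B_1)$ is a biclique, each vertex of $R$ is adjacent to all $j=|B_1|$ vertices of $B_1$, and in any bipartite cycle the two colour classes contribute equally many vertices; hence a cycle through the $r$ independent vertices of $R$ needs at least $r$ vertices on the $B$ side. When $r\le j$ these may all be taken from $B_1$, the cycle $(x_1,y_1,x_2,y_2,\dots,x_r,y_r,x_1)$ is a (minimum) Steiner cycle, and the stated condition is vacuous. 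The substantive regime is $r>j$: now $B_1$ supplies only $j$ of the required $B$-neighbours and exactly $r-j$ more must come from $B_2$. I would therefore let $G^*$ be the subgraph induced on $R\cup B_1\cup\{v_1,\dots,v_{r-j}\}$, where $v_1,\dots,v_{r-j}$ are drawn from $B_2$ in \emph{NNO} order; both colour classes of $G^*$ then have exactly $r$ vertices.

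Next I would read off the decomposition of $G^*$. As an induced subgraph it is again $P_5$-free chordal bipartite, its maximum biclique is $(R,B_1)$, its ``$A_2$'' part is empty, and its ``$B_2$'' part is $\{v_1,\dots,v_{r-j}\}$. By Lemma~\ref{lem2} every neighbour of a $v_h$ lies in $A_1$, so the degree of $v_h$ inside $G^*$ is $|N(v_h)\cap R|$, which coincides with $d(v_h)$ when $R$ contains the whole neighbourhood of $v_h$. Condition~(i) of Theorem~\ref{thm2} ($|A^*|=|B^*|=r$) holds automatically, so $G^*$ is Hamiltonian exactly when its $B_2$-ordering satisfies $d(v_h)>h$ for $1\le h\le r-j$, which is the stated condition. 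For sufficiency I would invoke the constructive part of Theorem~\ref{thm2}: the Hamiltonian cycle it produces threads each $v_h$ between two of its $R$-neighbours and closes up inside the complete biclique $(R,B_1)$, giving a cycle through all of $R$ that uses exactly $r$ vertices outside $R$; since $R$ is independent, any cycle through it needs at least $r$ extra vertices, so this is a minimum Steiner cycle.

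The necessity direction is where the real work lies and is the step I expect to be the main obstacle. It is not enough to note that $G^*$ fails Theorem~\ref{thm2}; one must exclude \emph{every} Steiner cycle of $G$, including cycles that also pass through vertices of $A_1\setminus R$, because such vertices enlarge the pool of $A_1$-neighbours available to $B_2$ vertices and could a priori help thread them. The plan is to argue directly on any hypothetical Steiner cycle $C$. By the colour-class balance above, $C$ contains at least $r-j$ vertices of $B_2$; order them by \emph{NNO} as $w_1,w_2,\dots$. Fix $h$ and delete $N(w_h)\cap V(C)$ from $C$. Since by Theorem~\ref{thm1} the neighbourhoods are nested, $N(w_{h'})\subseteq N(w_h)$ for $h'\le h$, so each of $w_1,\dots,w_h$ loses both of its cycle-neighbours and becomes isolated; together with the component carrying $R\setminus N(w_h)$ this yields at least $h+1$ components of $C$ minus the deleted set. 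As $C$ is itself a cycle, deleting $t$ of its vertices leaves at most $t$ components, whence $d(w_h)\ge|N(w_h)\cap V(C)|\ge h+1>h$. Thus $w_1,\dots,w_{r-j}$ witness the required ordering. Making the component count airtight in the boundary cases (for instance when $R\subseteq N(w_h)$, where the bound $d(w_h)\ge r>h$ must be obtained separately) and confirming that admitting $A_1\setminus R$ vertices never repairs the deficit is the crux of the argument.
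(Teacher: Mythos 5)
Your proposal is correct, and its skeleton coincides with the paper's proof: the same induced subgraph $G^*$ on $R\cup B_1\cup\{v_1,\ldots,v_{r-j}\}$, the same disposal of the trivial case $|R|\le|B_1|$ by a cycle alternating between $R$ and $B_1$, and the same sufficiency direction, invoking the constructive Hamiltonian-cycle proof of Theorem~\ref{thm2} on $G^*$ and noting that independence of $R$ forces at least $r$ vertices outside $R$, so the cycle is minimum. Where you genuinely diverge is the necessity direction. The paper handles it in one line: if some $v_h$ has $d(v_h)\le h$, then $G^*$ is a no-instance of the Hamiltonian cycle problem, ``a contradiction''; it never explains why a Steiner cycle of $G$ could not evade this obstruction by routing through $A_1\setminus R$, $A_2$, or other vertices of $B_2$, i.e.\ why non-Hamiltonicity of the one particular subgraph $G^*$ excludes every candidate cycle in $G$. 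You identify exactly this as the crux and close it with a Chv{\'a}tal-type component count applied to an arbitrary Steiner cycle $C$: colour-class parity forces $C$ to contain at least $r-j$ vertices of $B_2$; ordering them by \emph{NNO} and deleting $N(w_h)\cap V(C)$ isolates each of $w_1,\ldots,w_h$ by the nesting of Theorem~\ref{thm1}, and together with the component meeting $R\setminus N(w_h)$ (or the direct bound $d(w_h)\ge r>h$ in the boundary case $R\subseteq N(w_h)$) this yields at least $h+1$ components, while a cycle minus $t$ vertices has at most $t$ components, so $d(w_h)\ge h+1>h$. Your route is longer but buys a self-contained, airtight necessity proof showing that \emph{any} Steiner cycle of $G$ supplies the required witnesses in $B_2$; the paper's version is terser but leaves that quantification over all Steiner cycles implicit, and your argument in effect repairs that gap.
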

\begin{proof}
  Let $G^*$ be the graph induced on  $R\cup B_1\cup \{v_1,\ldots,v_{r-j}\}$.\\
	{\em Necessity :} 		If $|R| \leq |B_1|$ , then the Steiner cycle problem is trivially solvable.  We obtain the Steiner cycle $C=(x_1,y_1,\ldots,x_{r-1},$ $y_{r-1},x_r,y_r,x_1)$.  We assume that $|R| > |B_1|$. 
	Assume on the contrary, there exists a vertex $v_h\in B_2$ such that $d(v_h) \leq h$. We observe that $G^*$ is a no instance of Hamiltonian cycle problem, a contradiction. \\
	{\em Sufficiency:} We observe that $G^*$ has a Hamiltonian cycle which is a Steiner cycle in $G$. The cycle $C=(x_1,v_1,x_2,v_2,\ldots, x_{r-j}, v_{r-j}, x_{(r-j)+1}, y_j,x_{(r-j)+2}, y_{j-1}, \ldots, x_{r-1},y_2,x_r,y_1,x_1)$.\\
     is a Steiner cycle of minimum cardinality. \qed
\end{proof}

\begin{lemma}
	\label{lemma scycle rca1b2}
	For a $P_5$-free chordal bipartite graph and $R=\{x_1,\ldots,x_r\} \subseteq A_1 \cap B_2$, $G$ has Steiner cycle $C$ if and only if $R \cap B_2=(z_1,\ldots,z_l)$ has {\em NNO} with $(w_1,\ldots,w_l)$, $1\leq l<r$, of $A_1$.
\end{lemma}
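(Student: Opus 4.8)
The plan is to mirror the reduction strategy used for the previous Steiner-cycle lemmas (Lemmas \ref{lemma scycle rca2} and \ref{lemma scycle rca1}): reduce the Steiner cycle problem on $R$ to the Hamiltonian cycle problem on a carefully chosen induced subgraph $G^*$, and then invoke the characterization in Theorem \ref{thm2}. Writing $R\cap B_2=\{z_1,\ldots,z_l\}$ and $R\cap A_1=\{x_1,\ldots,x_{r-l}\}$, I would let $G^*$ be the subgraph of $G$ induced on $R$ together with the $l$ partner vertices $w_1,\ldots,w_l\in A_1$ witnessing the NNO of the $z$'s and exactly $r-l$ vertices $y_1,\ldots,y_{r-l}\in B_1$. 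Since $(A_1,B_1)$ is a complete biclique, the edges among the $w$'s, $x$'s and $y$'s are all present, and the only nontrivial adjacencies are those between the $z_h\in B_2$ and $A_1$, which by Lemma \ref{lem2} lie inside $A_1$ and are nested by Theorem \ref{thm1}. The key bookkeeping step is to check that $G^*$ is balanced: its $A$-side $(R\cap A_1)\cup\{w_1,\ldots,w_l\}$ and its $B$-side $(R\cap B_2)\cup\{y_1,\ldots,y_{r-l}\}$ each have exactly $r$ vertices, so that the parity requirement $|A|=|B|$ of Theorem \ref{thm2} is met.

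For necessity I would argue the contrapositive: if $(z_1,\ldots,z_l)$ fails the NNO degree condition, i.e.\ some $z_h$ has $d_{G^*}(z_h)\le h$, then since $G^*$ inherits the NNO (Theorem \ref{thm1}), it violates the Hamiltonian cycle condition of Theorem \ref{thm2}, so $G^*$ has no Hamiltonian cycle. Because every minimum cycle of $G$ through $R$ must restrict to such a spanning cycle of $G^*$ (the nested neighbourhoods force any cycle covering the independent set $R\cap B_2$ to reuse only vertices available in $G^*$), the absence of a Hamiltonian cycle in $G^*$ yields the absence of a Steiner cycle in $G$.

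For sufficiency, the NNO hypothesis makes $G^*$ satisfy the degree conditions of Theorem \ref{thm2}, so $G^*$ has a Hamiltonian cycle $C$, which is automatically a Steiner cycle of $G$ since it spans all of $R$. I would exhibit $C$ explicitly by following the template of Theorem \ref{thm2}: interleave the $B_2$-block $(z_1,\ldots,z_l)$ with its partners $(w_1,\ldots,w_l)$ exactly as the $v_h$'s are interleaved with the $x_h$'s there, and close the cycle through the biclique part using the $A_1$-terminals $x_1,\ldots,x_{r-l}$ and the $B_1$-vertices $y_1,\ldots,y_{r-l}$. Minimality then follows by a counting argument identical to Lemma \ref{lemma scycle rca1}: the $l$ independent terminals of $R\cap B_2$ force at least $l$ distinct $A_1$-neighbours and the $r-l$ terminals of $R\cap A_1$ force $r-l$ further $B_1$-vertices, matching exactly the $r$ Steiner vertices used by $C$.

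The main obstacle I anticipate is the explicit closing of the cycle at the junction between the $z$-$w$ block and the biclique block: one must verify that the last vertex $z_l$ of the $B_2$-block has an $A_1$-neighbour available to begin the biclique portion, and symmetrically that the returning edge exists. This is exactly where the degree strictness $d(z_h)>h$ (rather than $\ge h$, as in the path case of Lemma \ref{lemma spath rca1b2}) is needed, and reconciling the mildly abused notation $A_1\cap B_2$ (which should read $R\subseteq A_1\cup B_2$ with $R$ meeting both parts) with the balanced construction of $G^*$ is the one point requiring care.
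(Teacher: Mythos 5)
Your proposal follows essentially the same route as the paper's proof: reduce the Steiner cycle question to the Hamiltonian cycle problem on an induced subgraph $G^*$, invoke Theorem \ref{thm2}, exhibit the interleaved cycle $(w_1,z_1,\ldots,w_l,z_l,x_{l+1},y_1,\ldots,x_{r-l},y_{r-l},w_1)$, and conclude minimality by counting forced Steiner vertices. In fact your construction is slightly more careful than the paper's, which defines $G^*$ only on $R\cup\{z_1,\ldots,z_l\}\cup\{w_1,\ldots,w_l\}$ yet then uses the $B_1$-vertices $y_1,\ldots,y_{r-l}$ in its Hamiltonian cycle---your explicit inclusion of those vertices and the balance check $|A|=|B|$ repair that omission.
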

\begin{proof}
	Let $G^*$ be the graph induced on the set $R\cup\{z_1,\ldots,z_l\}\cup \{w_1,\ldots,w_l\}.  $\\
	{\em Necessity :} 	Proof is similar to the necessity of Lemma \ref{lemma scycle rca1} \\
	{\em Sufficiency:} 	We observe that $G^*$ has a Hamiltonian cycle. The cycle  
	$C=(w_1,z_1,\ldots, w_l,z_l,x_{l+1},y_1,\ldots x_{r-l},$ $y_{r-l},w_1)$ is a minimum Steiner cycle in $G$. \qed
\end{proof}
For other cases such as $(a)$ $R \subseteq B_1$, $(b)$ $R \subseteq B_2$, $(c)$ $R \cap A_1 \not= \emptyset$ and $R \cap A_2 \not= \emptyset$, $(d)$ $R \cap A_2 \not= \emptyset$ and $R \cap B_1 \not= \emptyset$, $(e)$ $R \cap A_1 \not= \emptyset$ and $R \cap A_2 \not= \emptyset$, and $(f)$ $R \cap A_1 \not= \emptyset$ and $R \cap B_2 \not= \emptyset$ and  $R \cap A_2 \not= \emptyset$, a similar argument can be given.\\ 
\textbf{Trace:} Consider the illustration given in Figure \ref{fig:stcycle}.  We trace $(a)$ and $(b)$ of Figure \ref{fig:stcycle} for the set $R=\{x_1,x_2,x_3,x_4,x_5\}$. In Figure \ref{fig:stcycle} (a), $|B_1|=3$, so there must exist at least two vertices, $\{z_1,z_2\}$ in $B_2$ such that $d(z_r)>r$. We consider $v_2$ as $z_1$ and $v_3$ as $z_2$. We observe that $d(z_1)>1$ and $d(z_2)>2$. The Steiner cycle, $C=\{x_1,v_2,x_2,v_3,x_3,y_1,x_4,y_2,x_5,y_3,x_1\}$. In Figure \ref{fig:stcycle} (b), there does not exist at least two vertices satisfying the degree constraint and hence it is a no instance of a Steiner cycle.\\
\begin{figure}
	\begin{center} 
		\includegraphics[width=80mm,scale=0.5]{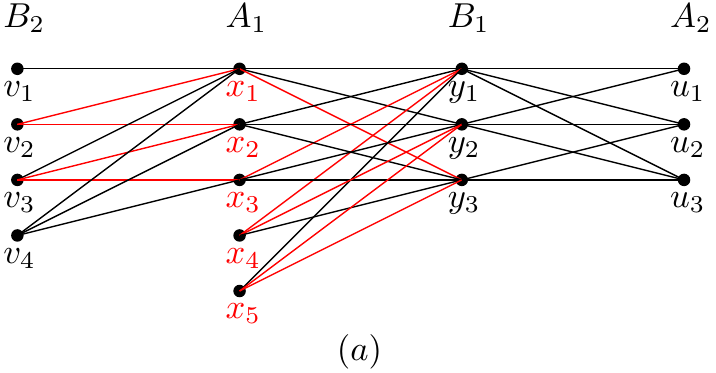}
		\includegraphics[width=80mm,scale=0.5]{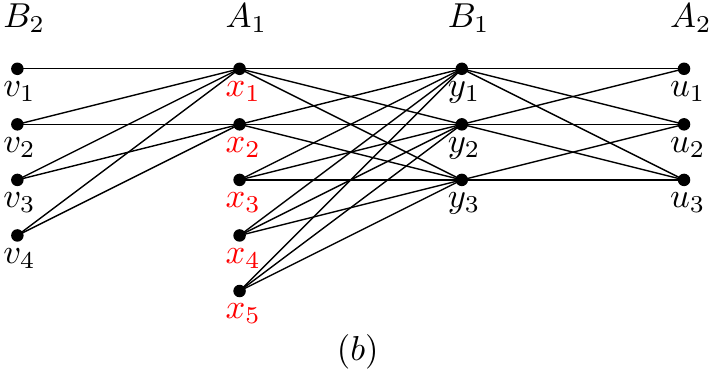}
		\caption{\small \sl  An example graph (a) Yes instance of Steiner cycle problem, (b) No instance of Steiner cycle problem.\label{fig:stcycle}}
	\end{center}
\end{figure}
{\bf Remark:} Proofs of the problems discussed in this section are constructive in nature and thus we obtain a polynomial-time algorithm for generalization of the  Hamiltonian cycle (path).
\section{Treewidth and Pathwidth of $P_5$-free chordal bipartite graphs}
In general, many graph-theoretic problems have linear-time algorithms in trees.   This motivates us to investigate the tree-like representation of graphs so that algorithmic techniques used in trees can be used to study the computational complexity of those problems in trees.  The study of treewidth investigates this line of study.  This problem was introduced by Robertson and Seymour \cite{Robertson}. This problem is NP-complete for bipartite graphs \cite{twbi} and planar graphs. Treewidth can be obtained for cographs in $\mathit{O(n)}$ time and for chordal bipartite graphs in $\mathit{O(n^6)}$ time \cite{kloks1995treewidth}. We have the following result in \cite{kloks1995treewidth} for chordal bipartite graphs. Given a chordal bipartite graph $G$, and a positive integer  $k$, if the treewidth of $G$ is at most $k$ then it returns yes with the underlying chordal embedding of $G$, otherwise, it says no. However, the exact value of treewidth is not explicitly mentioned due to the inherent complex structure of chordal bipartite graphs. Interestingly, for $P_5$-free chordal bipartite graphs, we present an exact value for treewidth as well as the underlying chordal embedding in linear time.
   In this section, we present a polynomial-time algorithm to find a tree decomposition of a $P_5$-free chordal bipartite graph, and hence the treewidth can be computed in polynomial time.  \\
\noindent\textbf{Definition : }\\
A tree decomposition  of a graph $G$ is a pair $(T,Z)$  where $T$ is a tree and $V(T)=Z=\{z_1,z_2,\ldots,z_k\}$,  $z_t$ is a vertex in $T$ such that $z_t \subseteq V(G)$ for each $t$, $1\leq t \leq k$.   Further, 

\begin{enumerate}[label=(\roman*)]
	\item $V(G)= \bigcup_{z_t\in V(T)}z_t$,
	\item For every edge $\{x,y\}\in E(G)$, there exists some $z_t\in V(T)$ such that $\{x,y\} \subseteq z_t$ and 
	\item For every vertex $u\in V(G)$ the set $\{z_t \in V(T)|u \in z_t\}$ induces a subtree in the tree $T$.
\end{enumerate}  
The width of a tree decomposition $(T,Z)$ is $\max_{z_t \in V(T)}~|~z_t|$.  The treewidth of a graph $G$, denoted by $tw(G)$, is the least possible width of a tree decomposition of $G$ minus one.  The vertex set $z_t$ is also referred to as $label(t)$.  In this section, we shall present an algorithm that outputs a tree decomposition of $G$.
\begin{algorithm}[!ht]
	\label{alg1}
	\BlankLine
	\SetAlgoLined
	\KwIn{A $P_5$-free chordal bipartite graph $G$}
	\KwOut{A tree decomposition of $G$}
	\BlankLine
	Let $A_1=(x_1,\ldots,x_i)$, $B_1(y_1,\ldots,y_j)$, $A_2(u_1,\ldots,u_p)$, and $B_2(v_1,\ldots,v_q)$ be the vertices of $G$ and $t=2$\\
	Let $\{z_1,\ldots,z_k\}$ be the vertices of $V(T)$\\
	\eIf{$(i+d(u_p))\leq(j+d(v_q))$}{Create a vertex $z_1$ in $T$ whose label, label $(z_1)=\{x_1,x_2,\ldots,x_i,y_1,y_2,\ldots,y_{d(u_p)}\}$\\ 
		\For{ $k=d(u_p)+1$ to $j$}{Create a label, $(z_t) =\{x_1,x_2,\ldots,x_i,y_k\}$ and add an edge $\{z_{t-1},z_t\}$ to $E(T)$\\$t++$ }
		\For{$k=q$ to 1}{Create a label, $(z_t)=\{v_k,N_G(v_k)\}$ and add an edge $\{z_t,z_{t-1}\}$ to $E(T)$\\$t++$}
		\For{$k=p$ to 1}{\eIf{ $k=p$}{Create a label, $(z_t)= \{u_k,N(u_k)\}$ and add an edge $\{z_t,z_{1}\}$ to $E(T)$\\$t++$}{Create a label $(z_t)= \{u_K,N(u_k)\}$ and add an edge $\{z_t,z_{t-1}\}$ to $E(T)$\\$t++$}
	}}{Create a vertex $z_1$ in $T$ whose label, label $(z_1)=\{y_1,y_2,\ldots,y_j,x_1,x_2,\ldots,x_{d(v_q)}\}$\\ 
		\For{ $k=d(v_q)+1$ to $i$}{Create a label, $(z_t) =\{y_1,y_2,\ldots,y_j,x_k\}$ and add an edge $\{z_{t-1},z_t\}$ to $E(T)$\\$t++$ }
		\For{$k=p$ to 1}{Create a label, $(z_t)=\{u_k,N_G(u_k)\}$ and add an edge $\{z_t,z_{t-1}\}$ to $E(T)$\\$t++$}
		\For{$k=q$ to 1}{\eIf{ $k=q$}{Create a label, $(z_t)= \{v_k,N(v_k)\}$ and add an edge $\{z_t,z_{1}\}$ to $E(T)$\\$t++$}{Create a label $(z_t)= \{v_k,N(v_k)\}$ and add an edge $\{z_t,z_{t-1}\}$ to $E(T)$\\$t++$}
	}}
	\caption{Tree Decomposition for $P_5$-free chordal bipartite graphs}
\end{algorithm}\\\\ 

\noindent \textbf{Correctness of Algorithm \ref{alg1}}
\begin{lemma}
	The graph $T$ obtained by Algorithm \ref{alg1} is a tree and it satisfies all the three properties of tree decomposition mentioned in the definition.
\end{lemma}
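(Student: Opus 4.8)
The plan is to handle the two branches of the conditional test separately; since the branch with $(i+d(u_p)) \le (j+d(v_q))$ and the branch with the reverse inequality are interchanged by swapping the roles of $A$ and $B$ (and correspondingly $x \leftrightarrow y$, $u \leftrightarrow v$, $p \leftrightarrow q$, $i \leftrightarrow j$), it suffices to argue the first branch in full and remark that the second is identical. I would first verify that $T$ is a tree by induction on the order in which bags are created. The bag $z_1$ is created alone; thereafter every bag $z_t$ is introduced together with exactly one edge joining it to an already existing bag (either $z_{t-1}$ or $z_1$). Hence at every step the current graph stays connected and its number of edges is one less than its number of vertices, so the final $T$ is connected and acyclic, i.e.\ a tree. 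In particular $z_1$ is the unique branch point: the chain of bags $\{x_1,\ldots,x_i,y_k\}$ together with the $B_2$-chain $\{v_q,N(v_q)\},\{v_{q-1},N(v_{q-1})\},\ldots$ forms one path running through $z_1$, while the $A_2$-chain $\{u_p,N(u_p)\},\{u_{p-1},N(u_{p-1})\},\ldots$ hangs off $z_1$ as a separate branch.

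For property (i) I would trace the membership of each vertex class: all of $A_1$ lies in $z_1$; each $y_k \in B_1$ lies either in $z_1$ (when $k \le d(u_p)$) or in the bag $\{x_1,\ldots,x_i,y_k\}$; each $u_k \in A_2$ lies in $\{u_k,N(u_k)\}$; and each $v_k \in B_2$ lies in $\{v_k,N(v_k)\}$, so $V(G)=\bigcup_{z_t} z_t$. For property (ii) I would classify the edges of $G$. By Lemma \ref{lem2}, $N(u_k) \subseteq B_1$ and $N(v_k) \subseteq A_1$, so (using bipartiteness) every edge is of exactly one of three types: a biclique edge $\{x_a,y_b\}$ with $x_a\in A_1,\ y_b\in B_1$; an edge $\{u_k,y_b\}$ with $y_b\in N(u_k)$; or an edge $\{v_k,x_a\}$ with $x_a \in N(v_k)$. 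A biclique edge is covered by $z_1$ when $b\le d(u_p)$ and by $\{x_1,\ldots,x_i,y_b\}$ otherwise, while the other two types are covered by the bags $\{u_k,N(u_k)\}$ and $\{v_k,N(v_k)\}$ respectively.

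The substantive step is property (iii): for each vertex $w$ the bags containing $w$ must induce a subtree of $T$. I would establish this using the {\em NNO} from Theorem \ref{thm1}, where $B_1$ is labelled so that the $A_2$-neighbourhoods are nested initial segments $N(u_1)\subseteq\cdots\subseteq N(u_p)=\{y_1,\ldots,y_{d(u_p)}\}$ and dually $N(v_1)\subseteq\cdots\subseteq N(v_q)=\{x_1,\ldots,x_{d(v_q)}\}$. Each $u_k$ and each $v_k$ occurs in a single bag, hence is trivial. For $x_a\in A_1$, the vertex occurs in every bag of the first chain (all contain $A_1$) and, among the $B_2$-chain bags, in exactly those with $x_a\in N(v_k)$; by nesting $x_a\in N(v_k)\Rightarrow x_a\in N(v_{k'})$ for all $k'\ge k$, so these form a contiguous prefix $v_q,v_{q-1},\ldots,v_{k_0}$ of the $B_2$-chain, which is attached to the last first-chain bag $\{x_1,\ldots,x_i,y_j\}$ (a bag containing $x_a$); hence the occurrences of $x_a$ are connected. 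For $y_b\in B_1$ with $b>d(u_p)$, the vertex lies only in the single chain bag $\{x_1,\ldots,x_i,y_b\}$ and in no $A_2$-bag (since $N(u_p)=\{y_1,\ldots,y_{d(u_p)}\}$ is the largest $A_2$-neighbourhood), so it is trivial; for $b\le d(u_p)$ it lies in $z_1$ and in exactly the contiguous prefix of the $A_2$-branch whose neighbourhoods contain $y_b$, and since that branch is rooted at $z_1$ and $y_b\in z_1$, the occurrences are again connected.

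The crux, and the step I expect to require the most care, is this contiguity argument: one must confirm that $\{k : x_a\in N(v_k)\}$ and $\{k : y_b\in N(u_k)\}$ are genuine intervals aligned with the order in which the algorithm lays out the $B_2$-chain and the $A_2$-branch, and that each such interval abuts a bag already known to contain the vertex in question. This is exactly where the nested, initial-segment structure of the neighbourhoods guaranteed by Theorem \ref{thm1} is indispensable; without nesting, the occurrences of an $A_1$- or $B_1$-vertex could be scattered along a chain and the running-intersection condition (iii) would fail.
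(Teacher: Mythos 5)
Your proposal is correct and follows essentially the same route as the paper: a direct verification of tree-ness and of the three tree-decomposition properties from the construction of Algorithm~\ref{alg1}, with the \emph{NNO} of Theorem~\ref{thm1} doing the work for the running-intersection property. In fact your write-up is stricter than the paper's own proof, which merely asserts property (iii) with an ``observe that''; your contiguity argument (that the bags containing a fixed $A_1$- or $B_1$-vertex form an interval of the appropriate chain abutting a bag already containing it) is precisely the missing detail, and your edge classification via Lemma~\ref{lem2} likewise fills in property (ii).
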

\begin{proof}
\end{proof}
\begin{enumerate}[label=(\roman*)]
	\item $T$ is a Tree\\
	Let $V(T)=\{z_1,z_2,\ldots,z_k\}$, and $E(T)=\{\{z_r,z_t\}|z_t$ is adjacent to $z_r$ by steps 6, 10, 15, 18 of Algorithm 1$\}$. From the construction, it is clear that the invariant maintained by Algorithm \ref{alg1} is connectedness.  We observe that at each step, $(z_t)$, $t>1$ is created and an edge $\{z_t,z_{t-1}\}$ is added to $E(T)$ corresponding to sets $z_t$ and $z_{t-1}$ which shows that $T$ is acyclic. This implies that $T$ is connected and acyclic.  Hence $T$ is a tree.\\
	\item For every edge $\{x,y\}\in E(G)$, there exist some $z_t\in V(T)$ such that $\{x,y\} \in label(z_t).$\\
	By our construction, at each step a vertex $x$ and $N_G(x)$ is added to $label(z_t)$ for some $t$.  Hence there exist some $z_t\in V(T)$ such that $\{x,y\} \in label(z_t)$. This shows that $V(G)= \bigcup_{z_t \in V(T)} label(z_t)$.\\
	\item For every vertex $u\in V(G)$ the set $\{z_t\in V(T)|u\in label(z_t)\}$ induces subtree of the tree $T$.\\
	As per the construction, the vertices $\{x_1,\ldots,x_i,y_1,\ldots,y_{d(u_p)}\}$ are added to the label $z_1$.  It is clear from Steps (5-20), the vertices $\{y_{d(u_p)},\ldots,y_j \}$ of $B_1$ and  $A_2$ are included based on {\em NNO} and an edge$\{z_t,z_{t-1})\}$ is added to $E(T)$.  Similarly, from Steps (24-40), the vertices of $A_1$ and  $B_2$ are included.  Observe that for each $u\in V(G)$, the set $\{z_t|u\in label(z_t)\}$ induces a subtree in $T$.  
\end{enumerate}

\begin{theorem}\label{twt}
	The treewidth of a $P_5$-free chordal bipartite graph is $tw(G)\geq \min \{(i+d(u_p)), (j+d(v_q))\}-1$.
\end{theorem}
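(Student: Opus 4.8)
The plan is to attack this lower bound through the minor-monotone bound for treewidth, $tw(G)\ge h(G)-1$, where $h(G)$ is the order of the largest clique minor; equivalently one may argue through the triangulation characterisation $tw(G)=\min_{H}\,(\omega(H)-1)$ over all chordal supergraphs $H\supseteq G$. Because the two quantities in the statement are interchanged by the symmetry $A\leftrightarrow B$ (which swaps $i\leftrightarrow j$ and $u_p\leftrightarrow v_q$), I would assume without loss of generality that $i+d(u_p)\le j+d(v_q)$ and try to exhibit a clique minor of order exactly $i+d(u_p)$. The natural ground set is $W=A_1\cup N(u_p)$, which has $|W|=i+d(u_p)$ vertices, and I would take one branch set anchored at each vertex of $W$.

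First I would record the adjacencies that come for free. Since $(A_1,B_1)$ is a biclique and $N(u_p)\subseteq B_1$ by Lemma \ref{lem2}, the subgraph induced on $W$ is the complete bipartite graph $K_{i,\,d(u_p)}$, so every $x_a\in A_1$ is already adjacent to every $y_b\in N(u_p)$; all $i\cdot d(u_p)$ cross pairs are therefore adjacent in the minor without spending any connector vertex. What remains is to realise the two families of intra-part pairs: the $\binom{i}{2}$ pairs inside $A_1$ and the $\binom{d(u_p)}{2}$ pairs inside $N(u_p)$. For these I would draw on the vertices outside $W$: $u_p$, which by the NNO (Theorem \ref{thm1}) is adjacent to all of $N(u_p)$; $v_q$, adjacent to $N(v_q)\subseteq A_1$; and the spare vertices $B_1\setminus N(u_p)$, each of which, being in $B_1$, is adjacent to all of $A_1$. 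Absorbing a spare $y\in B_1\setminus N(u_p)$ into the branch set of some $x_a$ makes that $x_a$ adjacent to every other $A_1$-anchor, and absorbing $u_p$ into some $y_b$ links that $y_b$ to the rest of $N(u_p)$.

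The step I expect to be the main obstacle --- and the one on which the whole theorem rests --- is proving that these connectors actually \emph{suffice} to turn \emph{both} independent sets $A_1$ and $N(u_p)$ into cliques of the minor \emph{simultaneously}. The difficulty is quantitative: $u_p$ by itself only produces a star centred at a single anchor of $N(u_p)$, not a clique, so the pairs $y_by_{b'}$ with $b,b'\ge 2$ still need witnesses, and the only vertices adjacent to two or more vertices of $N(u_p)$ are $u_p$ itself and the $x$'s of $A_1$ --- which are already committed as anchors. A correct proof must therefore give an explicit, conflict-free assignment of $u_p$, $v_q$ and the spares $B_1\setminus N(u_p)$ to branch sets that discharges every intra-part pair at once, and must verify that enough such connectors exist (roughly, that the $j-d(u_p)$ spares plus the reach of $v_q$ cover the $A_1$-side while $u_p$ covers the $N(u_p)$-side). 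I would isolate this as a separate combinatorial ``branch-set assignment'' lemma and try to prove it by induction on $i$, peeling off one universal anchor at a time; the $N(u_p)$-side is the scarcer and hence harder case, and I would check it carefully before trusting the bound. Should a clean minor of the full order fail to materialise, the fallback is to argue directly on an arbitrary tree decomposition that no bag of size less than $|W|$ can simultaneously accommodate the biclique together with $u_p$ and $v_q$, again leaning on the nested structure from Theorem \ref{thm1}. I treat this assignment/separation step as the genuine crux, since the free cross-adjacencies and the symmetry reduction are routine.
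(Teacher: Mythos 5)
Your suspicion about the crux was well placed, but the situation is worse than a missing lemma: the ``branch-set assignment'' you need cannot exist, because the statement itself is false. Consider the graph $G$ in which $A_1=\{x_1,\ldots,x_4\}$ and $B_1=\{y_1,\ldots,y_4\}$ induce $K_{4,4}$, $A_2=\{u\}$ with $N(u)=\{y_1,y_2,y_3\}$, and $B_2=\{v\}$ with $N(v)=\{x_1,x_2,x_3\}$. One checks directly that $G$ is chordal bipartite and $P_5$-free (its only induced cycles are $4$-cycles and every induced path has at most four vertices), and that $(A_1,B_1)$ is the maximum biclique, so $i=j=4$, $d(u_p)=d(v_q)=3$, and the claimed bound reads $tw(G)\geq 6$. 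Now add to $G$ all edges inside $A_1\cup\{u\}$. The resulting graph $H$ is a split graph with clique $A_1\cup\{u\}$ and independent set $B_1\cup\{v\}$, hence chordal; a clique of $H$ meets the independent side in at most one vertex, so $\omega(H)=6$ (attained by $A_1\cup\{u,y_1\}$), and therefore $tw(G)\leq tw(H)=\omega(H)-1=5$. In particular $G$ has no $K_7$ minor, so no assignment of $u_p$, $v_q$ and the spare vertices of $B_1\setminus N(u_p)$ to branch sets can ever complete $W=A_1\cup N(u_p)$ into a clique minor of order $i+d(u_p)=7$. The quantitative worry you isolated---that only $u_p$ and the already-committed anchors of $A_1$ see two or more vertices of $N(u_p)$---is exactly the obstruction, and it is fatal rather than technical.

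The same example disposes of your fallback, which is in fact the route the paper takes: the paper's proof asserts that every tree decomposition has a bag of cardinality at least $i+d(u_p)$, assumes ``without loss of generality'' that some bag equals $\{x_1,\ldots,x_i,y_1,\ldots,y_{d(u_p)-1}\}$, and claims that subtree-connectivity then forces $y_{d(u_p)}$ into that bag. That argument is unsound, and the clique tree of the split graph $H$ above is a concrete tree decomposition of $G$ all of whose bags have size at most $6=i+d(u_p)-1$ (indeed, none of its bags contains more than one vertex of $B_1$ together with $A_1$). What the structure genuinely forces, for $i,j\geq 2$, is weaker: in any chordal completion one of $A_1,B_1$ must become a clique (from the $4$-cycles of $K_{i,j}$), and then, for a vertex $u\in A_2$ with $d(u)\geq 2$, either $N(u)$ becomes a clique, giving a clique of size $i+d(u)$, or $u$ becomes complete to $A_1$, giving a clique only of size $i+2$. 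So these arguments can deliver a bound of the shape $\min\{\,i+\min(d(u_p),2),\; j+\min(d(v_q),2)\,\}-1$, but not the stated one; both Theorem \ref{twt} and the matching upper-bound claim in Lemma \ref{tw} can fail as soon as $d(u_p),d(v_q)\geq 3$, as the example shows.
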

\begin{proof}
	We observe that in any tree decomposition there exists a label, $z_t$ whose cardinality is at least $i+d(u_p)$. On the contrary, suppose there exists a tree decomposition such that $|label(z_t)|\leq i+d(u_p)-1$.  Without loss of generality, we assume that $z_t < i+d(u_p)$ and label$(z_t)=\{x_1,\ldots,x_i,y_1,\ldots,y_{d(u_p)-1}\}$. By the property of {\em NNO}, the vertex $u_p$ is adjacent to $\{y_1,\ldots,y_{d(u_p)\}}$.   Clearly, either the set $\{z~|~y_{d(u_p)}\in $ label$(z)\}$ or $\{z~|~u_p\}\in $ label$(z)\}$ does not form a subtree in $T$.  Therefore, $y_{d(u_p)} \in $ label$(z_1)$.  Similarly, each $y_r\in\{y_1,\ldots,y_{d(u_p)-1}\}$ must be part of label$(z_1)$. \qed

\end{proof}
\begin{lemma}
\label{tw}
The tree $T$ obtained from Algorithm \ref{alg1} is a tree decomposition of $G$ such that the treewidth,  $tw(G)=\min \{(i+d(u_p)), (j+d(v_q))\}-1$.
\end{lemma}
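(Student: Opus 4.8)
The plan is to combine two facts that are already in hand: the preceding lemma guarantees that $T$ is a legitimate tree decomposition of $G$, and Theorem \ref{twt} supplies the lower bound $tw(G) \geq \min\{(i+d(u_p)),(j+d(v_q))\}-1$. Hence the only thing left to establish is the matching upper bound, and for that it suffices to compute the \emph{width} of the particular decomposition $T$ produced by Algorithm \ref{alg1}, i.e.\ the maximum bag size $\max_{z_t \in V(T)}|z_t|$. If I can show this maximum equals $\min\{(i+d(u_p)),(j+d(v_q))\}$, then by the definition of treewidth $tw(G) \leq \min\{(i+d(u_p)),(j+d(v_q))\}-1$, and together with Theorem \ref{twt} the stated equality follows.

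To compute the width I would argue by the same case split used in the algorithm; consider Case 1, where $(i+d(u_p)) \leq (j+d(v_q))$ (Case 2 being symmetric with the roles of $A$ and $B$ interchanged). Here Algorithm \ref{alg1} creates four kinds of bags: the root bag $z_1 = \{x_1,\ldots,x_i,y_1,\ldots,y_{d(u_p)}\}$ of size $i+d(u_p)$; the chain of $B_1$-bags $\{x_1,\ldots,x_i,y_k\}$ of size $i+1$; the $B_2$-bags $\{v_k, N(v_k)\}$ of size $1+d(v_k)$; and the $A_2$-bags $\{u_k, N(u_k)\}$ of size $1+d(u_k)$. I would then verify that each of the latter three families has size at most $i+d(u_p)$, so that $z_1$ is a largest bag and the width is exactly $i+d(u_p)=\min\{(i+d(u_p)),(j+d(v_q))\}$.

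The bounds on the individual families are where the structural lemmas enter. Since $d(u_p)\geq 1$ and $i\geq 1$, the $B_1$-bags (size $i+1$) and the $A_2$-bags (size $1+d(u_k)\leq 1+d(u_p)$, using the {\em NNO} degree ordering) are immediately dominated by $z_1$. The slightly more delicate family is the $B_2$-bags: by Lemma \ref{lem2} we have $N(v_k)\subseteq A_1$, and by Lemma \ref{lem1} together with the maximality of the biclique $(A_1,B_1)$ no vertex of $B_2$ can be adjacent to all of $A_1$, so $d(v_k)\leq i-1$ and hence $1+d(v_k)\leq i\leq i+d(u_p)$. This confirms that the width of $T$ equals $i+d(u_p)$ in Case 1, and the symmetric computation gives $j+d(v_q)$ in Case 2.

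I expect the main obstacle to be bookkeeping rather than conceptual: making sure that \emph{every} bag, not just the obviously largest one, is accounted for, and in particular pinning down the $B_2$-bag bound $d(v_k)\leq i-1$ from biclique maximality so that the case hypothesis $(i+d(u_p))\leq(j+d(v_q))$ correctly identifies which of the two quantities is the minimum. Once the width is shown to be $\min\{(i+d(u_p)),(j+d(v_q))\}$, the equality $tw(G)=\min\{(i+d(u_p)),(j+d(v_q))\}-1$ is immediate from Theorem \ref{twt}.
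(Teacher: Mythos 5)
Your proposal is correct and takes essentially the same route as the paper: the paper's (very terse) proof likewise rests on the lower bound of Theorem \ref{twt} together with the observation that, by construction, every bag produced by Algorithm \ref{alg1} has cardinality at most $\min\{(i+d(u_p)),(j+d(v_q))\}$, attained by $z_1$. Your bag-by-bag verification (NNO for the $A_2$-bags, $d(v_k)\le i-1$ from biclique maximality for the $B_2$-bags) merely makes explicit what the paper asserts with ``by our construction,'' and in doing so it quietly corrects an off-by-one slip in the paper's wording, which claims each bag has size at most $\min\{(i+d(u_p)),(j+d(v_q))\}-1$ even though $|label(z_1)|$ equals the minimum itself.
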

\begin{proof}
	By our construction, we see that the cardinality of each label$(z_t)$ is at most $\min \{(i+d(u_p)), (j+d(v_q))\}-1$, in particular, $|label(z_1)|=\min \{(i+d(u_p)), (j+d(v_q))\}-1$. \qed
\end{proof}
\begin{corollary}
	Let $G$ be a $P_5$-free chordal bipartite graph.  Then, the pathwidth of $G$ is $min\{(i+d(u_p)), (j+d(v_q))\}-1$.
\end{corollary}	
\begin{proof}
	We know that for any graph $G$, the pathwidth of $G$ is at least the treewidth of $G$.   By Theorem \ref{twt}, the treewidth of a $P_5$-free chordal bipartite graph is $\min \{(i+d(u_p)), (j+d(v_q))\}-1$. We observe that the tree decomposition of $G$ output by  Algorithm \ref{alg1} is a also a path decomposition of $G$.  Hence the pathwidth of $G$ is same as the treewidth of $G$ which is $min\{(i+d(u_p)), (j+d(v_q))\}-1$. \qed
\end{proof}
\noindent\textbf{Trace of the Treewidth (Pathwidth) Algorithm (Algorithm \ref{alg1}):} \\Consider the illustration given in Figure \ref{fig:Lp1}.   We shall trace the treewidth algorithm for Figure. \ref{fig:Lp1}.  By Step 3, we create a label $z_1=\{x_1,x_2,x_3,x_4,x_5,y_1,y_2,y_3\}$.  The trace of other steps are included in Figure \ref{fig:td1} and Figure \ref{fig:td2}.
\begin{figure}
	\begin{center}
		\includegraphics[width=35mm,scale=0.4]{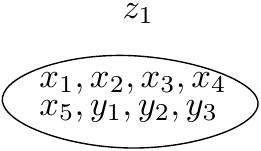}
		\caption{\small \sl   Step 3: Initialization of $label(z_1)$.\label{fig:td}}
	\end{center}
\end{figure}

\begin{figure}
	\begin{center}
		\includegraphics[width=167mm,scale=0.5]{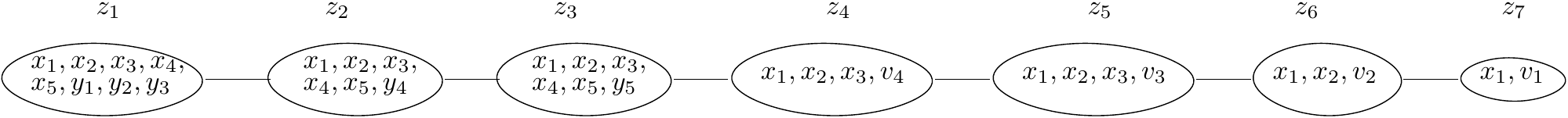}
		\caption{\small \sl   Steps (5-12): Vertices of $B_1$ and $B_2$ are included in $T$.\label{fig:td1}}
	\end{center}
\end{figure}

\begin{figure}[!h]
	\begin{center}
		\includegraphics[width=167mm,scale=0.5]{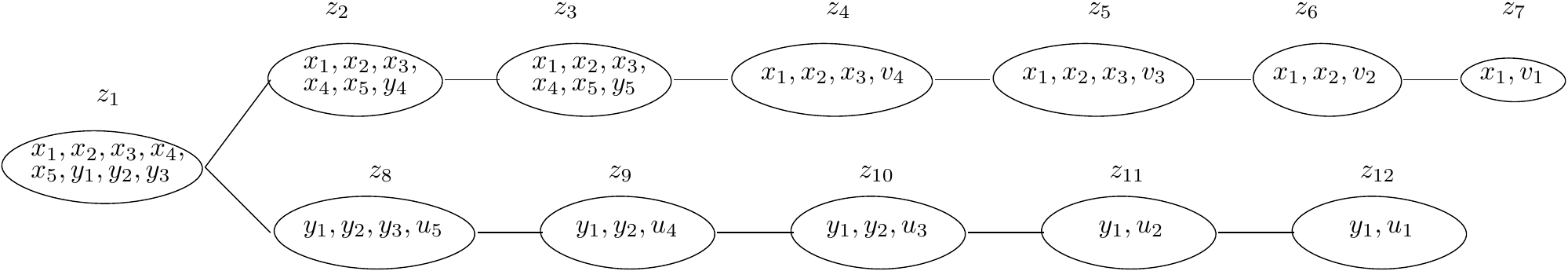}
		\caption{\small \sl   Steps (13-20): Vertices of $A_2$ are included in $T$.\label{fig:td2}} 
	\end{center}
\end{figure}
\section{Minimum Fill-in in $P_5$-free chordal bipartite graphs}
Having studied treewidth, we focus on problems that are related to treewidth.  We know that a graph $G$ has treewidth at most $k$ if and only if $G$ has a chordal completion whose maximum clique size is at most $k + 1$. This motivates us to study the complexity of the chordal completion problem in $P_5$-free chordal bipartite graphs. For a graph $G$, the minimum fill-in (chordal completion) \cite{kloks1993minimum} problem is the problem of finding a chordal embedding of $G$ which is a graph $G^*$ obtained by an augmenting minimum number of edges to $G$ so that $G^*$ is chordal.  The minimum fill-in problem is NP-complete on general graphs \cite{Minfill_g}. In \cite{kloks1993minimum} an $\mathit{O(n^5)}$ time algorithm is presented for chordal bipartite graphs. Subsequently, in \cite{Minfill_04} it is improved to $\mathit{O(n^4)}$. Interestingly, a chordal embedding of a $P_5$-free chordal bipartite graph is a split graph. In this section, we present a linear-time algorithm to find the minimum fill-in for $P_5$-free chordal bipartite graphs.\\\\

\begin{algorithm}[H]
	\label{alg2-mf} 
	\BlankLine
	\SetAlgoLined
	\KwIn{A $P_5$-free chordal bipartite graph $G$}
	\KwOut{A minimum fill-in of $G$}
	\BlankLine
	Let $A_1=(x_1,\ldots,x_i)$, $B_1=(y_1,\ldots,y_j)$, $A_2=(u_1,\ldots,u_p)$, and $B_2=(v_1,\ldots,v_q)$ be the vertices of $G$ \\
	Let $k=(i-d(v_q))$ and $l=(j-d(u_p))$\\
	\For{$r=d(u_p)$ to 1}{\For{$z=r-1$ to 1}{Add an edge $\{y_r,y_z\}$ to $E(G)$}}
	\For{$r=d(v_q)$ to 1}{\For{$z=r-1$ to 1}{Add an edge $\{x_r,x_z\}$ to $E(G)$}}
	\eIf{$(k*(i-k))\leq(l*(j-l))$ and $(k\leq l)$}{\For{$r=i$ to $(i-k)$}{\For{$z=r-1$ to 1}{Add an edge $\{x_r,x_z\}$ to $E(G)$}} 
	}{\For{$r=j$ to $(j-l)$}{\For{$z=r-1$ to 1}{Add an edge $\{y_r,y_z\}$ to $E(G)$}}}
	\caption{Minimum Fill-in of $P_5$-free chordal bipartite graphs}
\end{algorithm}
\begin{lemma}
	\label{lemmaminfill}
	Let $G$ be a $P_5$-free chordal bipartite graph. The number of edges augmented to find a chordal embedding of $G$ is min\{$(\frac{i(i-1)}{2} +  \frac{d(u_p)(d(u_p)-1)}{2}), (\frac{j(j-1)}{2} + \frac{d(v_q)(d(v_q)-1)}{2})\}$
\end{lemma}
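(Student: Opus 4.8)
The plan is to prove the statement in two directions: first that the completion produced by Algorithm~\ref{alg2-mf} is a genuine chordal (indeed split) supergraph $G^*$ of $G$ adding exactly the claimed number of edges, and second that no chordal completion of $G$ can use fewer. Throughout I abbreviate $\binom{a}{2}=\tfrac{a(a-1)}{2}$ and set $V_A=\binom{i}{2}+\binom{d(u_p)}{2}$, $V_B=\binom{j}{2}+\binom{d(v_q)}{2}$, so the target value is $\min\{V_A,V_B\}$. I will repeatedly use, from the {\em NNO} (Theorem~\ref{thm1}) and Lemma~\ref{lem2}, that $N(u_p)=\{y_1,\dots,y_{d(u_p)}\}$ is the largest neighborhood among vertices of $A_2$ (all of which lie in $B_1$), and symmetrically $N(v_q)=\{x_1,\dots,x_{d(v_q)}\}\subseteq A_1$ is the largest among $B_2$.

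For the upper bound I would first do the routine telescoping count: the two inner loops turn $\{y_1,\dots,y_{d(u_p)}\}$ and $\{x_1,\dots,x_{d(v_q)}\}$ into cliques at cost $\binom{d(u_p)}{2}$ and $\binom{d(v_q)}{2}$, and the selected branch completes either $A_1$ or $B_1$ into a clique, so the two branches add exactly $V_A$ and $V_B$ edges (one of the two small cliques being absorbed into the larger one). A short check that the branching predicate selects the cheaper branch then yields $\min\{V_A,V_B\}$. The substantive step is that $G^*$ is chordal, and here I would show it is a \emph{split} graph. In the first branch, $K:=A_1\cup N(u_p)$ is a clique: $A_1$ and $N(u_p)$ are now cliques and all $A_1$--$N(u_p)$ edges are present since $(A_1,B_1)$ is a biclique. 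The remaining set $I:=(B_1\setminus N(u_p))\cup A_2\cup B_2$ is independent precisely because of the {\em NNO}: every $u_g\in A_2$ has $N(u_g)\subseteq N(u_p)\subseteq K$, every $v_h\in B_2$ has $N(v_h)\subseteq A_1\subseteq K$, and no vertex $y_r$ with $r>d(u_p)$ is adjacent to any $A_2$ vertex. Hence $V(G^*)=K\cup I$ is a split partition, $G^*$ is chordal, and $\mathrm{fill}(G)\le\min\{V_A,V_B\}$.

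For the lower bound I would take an arbitrary chordal completion $G^*$ and exploit that every induced $C_4$ of $G$ must acquire a chord. For any $x,x'\in A_1$ and $y,y'\in B_1$ the set $\{x,x',y,y'\}$ induces a $C_4$ (all four biclique edges present, $xx',yy'\notin E(G)$), so $xx'$ or $yy'$ lies in $G^*$; a standard covering argument then forces the clean dichotomy that \emph{either $A_1$ is a clique in $G^*$ or $B_1$ is}, contributing $\binom{i}{2}$ or $\binom{j}{2}$ fill edges. To recover the second term in the $A_1$-clique case I would use, for each $x\in A_1$, the $K_{2,d(u_p)}$ induced by $\{u_p,x\}\cup N(u_p)$: for every pair $y_s,y_{s'}\in N(u_p)$ the $4$-cycle $u_p\,y_s\,x\,y_{s'}$ needs the chord $u_p x$ or $y_s y_{s'}$, so either $N(u_p)$ becomes a clique (yielding $\binom{d(u_p)}{2}$ further fill edges inside $B$, disjoint from the $\binom{i}{2}$ edges inside $A$) or $u_p$ is joined to all of $A_1$. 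Summing the two edge-disjoint contributions gives $\mathrm{fill}(G)\ge V_A$, and symmetrically $\ge V_B$ in the $B_1$-clique case.

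\textbf{Main obstacle.} The delicate point is the alternative in the last argument, where the completion refuses to make $N(u_p)$ a clique and instead \emph{absorbs} $u_p$ into the $A_1$-clique, enlarging it to $A_1\cup\{u_p\}$ at cost only $i$ rather than $\binom{d(u_p)}{2}$ (and one may try to iterate this, absorbing further $A_2$ vertices). Closing the proof requires showing this never undercuts $\min\{V_A,V_B\}$, and this is exactly where the \emph{maximality} of the biclique $(A_1,B_1)$ is indispensable: the pair $(A_1\cup\{u_p\},N(u_p))$ is itself a biclique, so if $d(u_p)$ were large relative to $i$ this would be a strictly more balanced biclique than $(A_1,B_1)$, contradicting that $(A_1,B_1)$ minimizes $|i-j|$ among maximal bicliques. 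Thus maximality supplies the inequality (morally $d(u_p)\le i$) that makes the absorption move cost at least as much as completing $N(u_p)$. I expect the bulk of the work to be making this quantitative comparison precise and handling the symmetric absorptions of $v_q$ and of the remaining $A_2,B_2$ vertices uniformly; once done, the matching bounds pin $\mathrm{fill}(G)$ to $\min\{V_A,V_B\}$.
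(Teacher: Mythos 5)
Your proof skeleton is the same as the paper's: the upper bound via the split-graph completion of Algorithm~\ref{alg2-mf}, and the lower bound via the dichotomy that any chordal completion $G^*$ must turn $A_1$ or $B_1$ into a clique, followed by the four-cycles through $u_p$. You also correctly isolated the step that the paper's own proof passes over silently: when the paper exhibits the ``induced'' $C_4$ $(y_r,u_p,y_s,x_1,y_r)$ in $G^*$, it tacitly assumes $\{u_p,x_1\}\notin E(G^*)$, i.e.\ it ignores exactly your absorption move. The problem is that your proposed repair of this step cannot work. If some pair in $N(u_p)$ remains non-adjacent in $G^*$, what is forced is that $u_p$ be adjacent to \emph{all} of $A_1$, at a cost of exactly $i$ edges; to rule this out you need the inequality $i\ge\binom{d(u_p)}{2}$. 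Maximality/maximumness of $(A_1,B_1)$ yields only linear constraints (e.g.\ $d(u_p)\le j-1$, and comparing side-differences of maximal bicliques gives something like $d(u_p)\le 2i+1-j$); it cannot produce the quadratic bound $i\ge\binom{d(u_p)}{2}$, which is simply false once $d(u_p)$ is moderately large relative to $i$.

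Moreover, the gap is not repairable, because the statement itself is false: the absorption completion genuinely undercuts the claimed minimum. Take $A_1=\{x_1,\dots,x_7\}$ and $B_1=\{y_1,\dots,y_9\}$ inducing $K_{7,9}$, $A_2=\{u_1\}$ with $N(u_1)=\{y_1,\dots,y_5\}$, and $B_2=\{v_1\}$ with $N(v_1)=\{x_1\}$. This graph is connected, chordal bipartite and $P_5$-free, it satisfies the {\em NNO} setup, and $(A_1,B_1)$ is its unique maximum biclique ($|7-9|=2$, versus $|8-5|=3$ for $(A_1\cup\{u_1\},N(u_1))$ and $|1-10|=9$ for $(\{x_1\},B_1\cup\{v_1\})$). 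The formula of Lemma~\ref{lemmaminfill} evaluates to $\min\{\binom{7}{2}+\binom{5}{2},\ \binom{9}{2}+\binom{1}{2}\}=\min\{31,36\}=31$. However, adding the $\binom{7}{2}=21$ edges inside $A_1$ together with the $7$ edges joining $u_1$ to $A_1$ yields a split graph --- clique $A_1\cup\{u_1\}$, independent set $B_1\cup\{v_1\}$, every vertex of which has its whole neighborhood inside the clique --- hence a chordal completion of $G$ with only $28$ fill edges. So the minimum fill-in is at most $28<31$. Any correct version of this lemma must compare $\binom{d(u)}{2}$ against the cost of absorbing $u$ into the clique side (and symmetrically for $B_2$), so both your lower-bound argument and the paper's Case~1/Case~2 analysis fail at this exact point; your write-up has the virtue of naming the obstacle, but the patch you sketch cannot close it.
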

\begin{proof}
Let $G^*$ be a chordal embedding of $G$ with $V(G^*)=V(G)$ and $E(G^*)=E(G^*)\cup S$, where $S$ is the set of augmented edges. We know that $A_1 \cup B_1$ is a maximal biclique in $G$. So, one of the partitions $A_1~ (B_1)$ of $G$  must be a clique $K$ in $G^*$. We consider the followings cases to find the edges added from $A_2$ $(B_2)$ to $K$.\\
	Case 1:  $A_1$ is a clique. Then, in any $G^*$, $\{y_1,y_2,\ldots,y_{d(u_p)}\}$ must be a clique.
	On the contrary, there exist a pair of vertices, say $y_r$, $y_s$ $\in$$B_1$, $1\leq r \leq d(u_p)$, $1\leq s \leq d(u_p)$ such that $\{y_r,y_s\}\notin E(G^*)$.  Due to {\em NNO} property in $G^*$, we get an induced cycle $C=(y_r,u_p,y_s,x_1,y_r)$ of length four. This contradicts the fact that $G^*$ is chordal. Therefore, $\{y_1,y_2,\ldots,y_{d(u_p)}\}$ induces a clique and the number of edges augmented is $\frac{d(u_p)(d(u_p)-1)}{2}$. \\
	Case 2: $B_1$ is a clique. Then, $\{x_1,x_2,\ldots,x_{d(v_p)}\}$ must be a clique. 
Suppose not, then there exist a pair of vertices, say $x_r$, $x_s$ $\in$$A_1$, $1\leq r \leq d(v_q)$, $1\leq s \leq d(v_q)$ that are not adjacent in $G^*$. Similar to Case 1, we get an induced cycle $C=(x_r,v_q,x_s,y_1,x_r)$ of length four, which is a contradiction. Therefore, we augment $\frac{d(v_q)(d(v_q)-1)}{2})\}$ edges to $A_1$.\\
	From the above case analysis, Lemma \ref{lemmaminfill} follows. \qed
	\end{proof}
\noindent\textbf{Correctness of Algorithm 2}
\begin{lemma}
	For a $P_5$-free chordal bipartite graph $G$, Algorithm \ref{alg2-mf} outputs a  minimum fill-in of $G$. Besides, Algorithm \ref{alg2-mf} outputs a chordal embedding of $G$.
	 \end{lemma}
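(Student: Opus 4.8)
The plan is to establish the two assertions of the lemma separately: that the graph $G^*$ returned by Algorithm \ref{alg2-mf} is chordal (in fact split), and that the augmented edge set has minimum cardinality, matching the bound of Lemma \ref{lemmaminfill}.

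\textbf{Chordality via a split partition.} First I would show $G^*$ is a split graph, whence it is chordal. Take the if-branch, in which the algorithm completes $A_1$ into a clique and $\{y_1,\ldots,y_{d(u_p)}\}$ into a clique. I claim $K = A_1 \cup \{y_1,\ldots,y_{d(u_p)}\}$ is a clique and $I = (B_1 \setminus \{y_1,\ldots,y_{d(u_p)}\}) \cup A_2 \cup B_2$ is independent. That $K$ is a clique is immediate: $A_1$ and $\{y_1,\ldots,y_{d(u_p)}\}$ are each cliques after the loops, and every $x \in A_1$ is adjacent to every $y \in B_1 \supseteq \{y_1,\ldots,y_{d(u_p)}\}$ because $(A_1,B_1)$ is a biclique. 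For independence of $I$ I would invoke Lemma \ref{lem2}, giving $N(u) \subseteq B_1$ for $u \in A_2$ and $N(v) \subseteq A_1$ for $v \in B_2$, together with the {\em NNO} of Theorem \ref{thm1}, giving $N(u_g) \subseteq N(u_p) = \{y_1,\ldots,y_{d(u_p)}\} \subseteq K$. Hence no vertex of $A_2$ is adjacent to a $y_b$ with $b > d(u_p)$, no vertex of $A_2$ is adjacent to a vertex of $B_2$, and the three sets $A_2$, $B_2$, and $B_1 \setminus \{y_1,\ldots,y_{d(u_p)}\}$ are internally independent. Since the only $B_1$-edges created in the if-branch lie inside $\{y_1,\ldots,y_{d(u_p)}\} \subseteq K$ and all created $A_1$-edges lie in $K$, no edge is added among vertices of $I$, so $I$ remains independent. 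Thus $V(G^*) = K \cup I$ is a split partition and $G^*$ is chordal; the else-branch is symmetric with $K = B_1 \cup \{x_1,\ldots,x_{d(v_q)}\}$.

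\textbf{Matching the optimal count.} Next I would check that the number of augmented edges equals one of the two quantities in Lemma \ref{lemmaminfill}. In the if-branch the algorithm makes $A_1$ a clique and makes $N(u_p) = \{y_1,\ldots,y_{d(u_p)}\}$ a clique; since $A_1$ and $B_1$ are independent in $G$, all of these are new edges, numbering $\binom{i}{2} + \binom{d(u_p)}{2} = \frac{i(i-1)}{2} + \frac{d(u_p)(d(u_p)-1)}{2}$, where the $\binom{d(v_q)}{2}$ edges of the second loop are absorbed into the completion of $A_1$ and not recounted. Symmetrically the else-branch augments $\binom{j}{2} + \binom{d(v_q)}{2}$ edges. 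These are precisely the two terms whose minimum is asserted by Lemma \ref{lemmaminfill}.

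\textbf{The main obstacle.} It remains to prove that the guard $(k(i-k) \le l(j-l)) \wedge (k \le l)$, with $k = i - d(v_q)$ and $l = j - d(u_p)$, takes the if-branch exactly when $\binom{i}{2} + \binom{d(u_p)}{2} \le \binom{j}{2} + \binom{d(v_q)}{2}$, i.e. when the if-branch is the cheaper of the two. This is where I expect the real difficulty to lie: one has to convert the product inequality $(i-d(v_q))\,d(v_q) \le (j-d(u_p))\,d(u_p)$ together with $k \le l$ into the comparison of the two binomial sums, using the structural bounds $d(u_p) \le j-1$ and $d(v_q) \le i-1$ that follow from maximality of $(A_1,B_1)$. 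Once this reconciliation is in place, combining it with the edge count above and the minimality bound of Lemma \ref{lemmaminfill} shows that the returned edge set is a minimum fill-in and that $G^*$ is the desired chordal embedding.
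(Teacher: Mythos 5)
Your first two parts are correct, and your split-partition proof of chordality is actually sharper than what the paper offers: the paper's own proof argues chordality loosely (completing $A_1$, noting that $\{u_r\}\cup N_G(u_r)$ forms cliques under {\em NNO}, and asserting there is no induced cycle of length at least four), and then simply declares that ``from Steps (3-7) and Steps (14-18) it is clear'' that the number of added edges equals the minimum of Lemma \ref{lemmaminfill}. Your explicit partition $K=A_1\cup\{y_1,\ldots,y_{d(u_p)}\}$, $I=V(G)\setminus K$ also subsumes the paper's Theorem \ref{split}, and your per-branch counts $\binom{i}{2}+\binom{d(u_p)}{2}$ and $\binom{j}{2}+\binom{d(v_q)}{2}$ (with the absorption remark) are exactly right.

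The step you flagged as the main obstacle is a genuine gap, and it cannot be closed, because the guard of Algorithm \ref{alg2-mf} does not in fact select the cheaper branch. One direction does hold, as you anticipated: if $k\le l$ and $k(i-k)\le l(j-l)$, then, writing $i-k=d(v_q)$ and $j-l=d(u_p)$, the if-branch surplus $\binom{k}{2}+k\,d(v_q)$ is at most the else-branch surplus $\binom{l}{2}+l\,d(u_p)$. But when the guard fails the algorithm runs the else-branch, and the two conjuncts of the guard can point in opposite directions. Concrete counterexample: let $A_1=\{x_1,\ldots,x_{10}\}$ and $B_1=\{y_1,\ldots,y_{12}\}$ induce a complete bipartite graph, let $A_2=\{u_1\}$ with $N(u_1)=\{y_1\}$, and let $B_2=\{v_1\}$ with $N(v_1)=\{x_1,\ldots,x_5\}$. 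This is a connected $P_5$-free chordal bipartite graph whose maximum biclique is $(A_1,B_1)$, with $i=10$, $j=12$, $d(u_p)=1$, $d(v_q)=5$, hence $k=5$, $l=11$, and $k(i-k)=25>11=l(j-l)$: the guard fails, the else-branch is executed, and $\binom{12}{2}+\binom{5}{2}=76$ edges are added, whereas the if-branch would produce a valid chordal (indeed split) completion with only $\binom{10}{2}+\binom{1}{2}=45$ edges. So on this instance the algorithm's output is not a minimum fill-in, and the lemma, read as a statement about Algorithm \ref{alg2-mf} as written, is false. It becomes true---and your chordality and counting parts then constitute a complete proof---only after the guard is replaced by a direct comparison of $\frac{i(i-1)}{2}+\frac{d(u_p)(d(u_p)-1)}{2}$ against $\frac{j(j-1)}{2}+\frac{d(v_q)(d(v_q)-1)}{2}$. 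In short, your instinct about where the difficulty lies was exactly right; the paper's proof silently skips that step, and the skipped step is the one that fails.
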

\begin{proof}
	By our construction, the graph induced on $A_1$ ($B_1$) is a complete graph. This implies that the graph induced on $A_1 \cup B_1$ is a split graph, which is a chordal graph.  We know that the pendant vertices of $A_2$ cannot be a part of a cycle.  It is clear from the Steps (3-7) and the fact that $A_2$ follows {\em NNO} in $G$, $\{u_r\}\cup N_G(u_r)$, $1\leq r\leq p$, forms a clique.  Similarly, for $B_2$ as well.  We observe that there does not exist an induced cycle of length at least four in $G^*$. Therefore, the graph constructed by Algorithm \ref{alg2-mf} is a chordal graph. From Steps (3-7) and Steps (14-18), it is clear that the number of edges added is the number mentioned in Lemma \ref{lemmaminfill}, which is min\{$(\frac{i(i-1)}{2} +  \frac{d(u_p)(d(u_p)-1)}{2}), (\frac{j(j-1)}{2} + \frac{d(v_q)(d(v_q)-1)}{2}) \}$.  \qed
	\end{proof}
\noindent\textbf{Trace of Algorithm \ref{alg2-mf}:} 
Consider the illustration given in Figure \ref{fig:mf}. We trace Algorithm \ref{alg2-mf} for Figure \ref{fig:mf}. The degree of $u_p$, $d(u_p)=3$, Steps (3-5) of Algorithm \ref{alg2-mf} make $\{y_1,y_2,y_3\}$ a clique.  Similarly, $d(v_q)=3$, $\{x_1,x_2,x_3\}$ forms a clique.  Here, $k=l=3$, $(k*(i-k))\leq(l*(j-l))$ = 6$\leq$6.  Steps (14-18), initially chooses $u_5$ and add edges to $\{u_4,u_3,u_2,u_1\}$.  At Iteration 2, $u_4$ is connected to $\{u_3,u_2,u_1\}$.  The resultant graph $H$ is a chordal graph.
\begin{figure} 
	\begin{center}
		\includegraphics[width=100mm,scale=0.5]{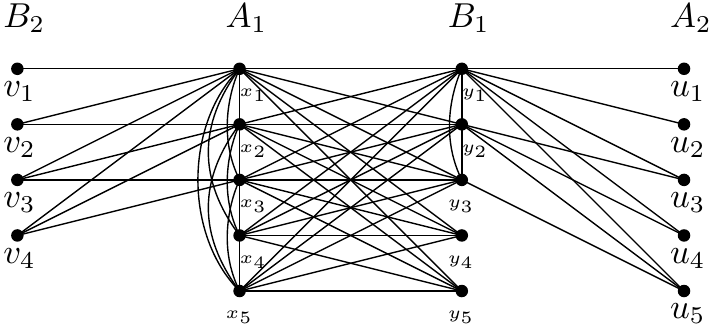}
		\caption{\small \sl   Minimum Fill-in of G.\label{fig:mf}} 
	\end{center}
\end{figure}\\

\begin{theorem}
	\label{split}
			The chordal embedding of a $P_5$-free chordal bipartite graph is a split graph.
	\end{theorem}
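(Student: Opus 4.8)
The plan is to exhibit an explicit partition of $V(G^*)$ into a clique and an independent set, where $G^*$ denotes the chordal embedding produced by Algorithm \ref{alg2-mf}. By Lemma \ref{lemmaminfill} the embedding falls into one of two symmetric cases according to whether $A_1$ or $B_1$ is completed into a clique; I would argue the case in which $A_1$ becomes a clique, the other being entirely symmetric.

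First I would show that $K = A_1 \cup \{y_1,\ldots,y_{d(u_p)}\}$ is a clique in $G^*$. Indeed, $A_1$ is a clique by construction, the set $\{y_1,\ldots,y_{d(u_p)}\}$ is made into a clique by Steps (3--5) of Algorithm \ref{alg2-mf}, and every vertex of $A_1$ is adjacent to every vertex of $\{y_1,\ldots,y_{d(u_p)}\}\subseteq B_1$ because $(A_1,B_1)$ is a biclique. Hence all pairs inside $K$ are edges, so $K$ is a clique.

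Next I would show that the complementary set $I = A_2 \cup B_2 \cup \{y_{d(u_p)+1},\ldots,y_j\}$ is independent in $G^*$; this is the crux and the step demanding care, since I must verify that neither the original edges of $G$ nor any edge added by Algorithm \ref{alg2-mf} lies inside $I$. By Lemma \ref{lem2}, every vertex of $A_2$ joins only $B_1$ and every vertex of $B_2$ joins only $A_1$, so there is no $A_2$--$B_2$ edge. By the \emph{NNO} of Theorem \ref{thm1}, every neighbourhood in $A_2$ is contained in $N(u_p)=\{y_1,\ldots,y_{d(u_p)}\}$, so no vertex of $A_2$ meets $\{y_{d(u_p)+1},\ldots,y_j\}$. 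Vertices lying on one side of the bipartition are pairwise non-adjacent in $G$, and in this case the algorithm adds edges only inside $A_1$ and inside $\{y_1,\ldots,y_{d(u_p)}\}$; consequently $B_2$ and the tail $\{y_{d(u_p)+1},\ldots,y_j\}$ remain pairwise non-adjacent, and the tail is internally independent as well. Collecting these observations shows that $I$ is an independent set.

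Since $V(G^*) = K \cup I$ with $K$ a clique and $I$ an independent set, $G^*$ is a split graph. The case where $B_1$ is completed is handled identically, taking $K = B_1 \cup \{x_1,\ldots,x_{d(v_q)}\}$ and $I = A_2 \cup B_2 \cup \{x_{d(v_q)+1},\ldots,x_i\}$, and invoking Lemma \ref{lem2} and Theorem \ref{thm1} for the $B_2$-side. The main obstacle is the exhaustive, though routine, verification of non-adjacency inside $I$, which hinges entirely on the \emph{NNO} property bounding the $A_2$-neighbourhoods by $N(u_p)$ and on the fill-in procedure never adding an edge between the two bipartition sides.
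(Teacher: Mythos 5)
Your proof is correct and takes essentially the same approach as the paper: both arguments exhibit the split partition with clique $A_1 \cup \{y_1,\ldots,y_{d(u_p)}\}$ and independent set $A_2 \cup B_2 \cup \{y_{d(u_p)+1},\ldots,y_j\}$ produced by Algorithm \ref{alg2-mf} (the paper writes the independent set as starting at $y_{d(u_p)}$, an off-by-one slip that your version corrects, since $y_{d(u_p)}$ lies in the clique). Your more detailed verification of independence via Lemma \ref{lem2} and \emph{NNO}, and the explicit symmetric case where $B_1$ is completed, is simply a more careful write-up of the paper's argument.
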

\begin{proof}
	By construction, it is clear that $A_1\cup B_1$ is a maximal biclique, and  $A_2$, $B_2$ are independent sets. Let $G^*$ be a chordal embedding of $G$.  Algorithm \ref{alg2-mf} makes $A_1 \cup \{y_1,y_2,\ldots,y_{d(u_p)}\}$ as a clique. The set $A_2\cup B_2  \cup \{y_{d(u_p)}, y_{d(u_p)+1},\ldots, y_j\}$ forms an independent set. We see that the graph $G^*$ is partitioned into an independent and a clique. Therefore, $G^*$ is a split graph.
	\end{proof}
\section{Results on Complement graphs of $P_5$-free chordal bipartite graphs} 
The generalization of the minimum fill-in problem is cluster editing problem  \cite{fomin2020subgraph} where we modify the input graph by adding/removing at most $k$ edges. Note that finding the complement of a graph is an example of a cluster editing problem.  Interestingly, the complement of $P_5$-free chordal bipartite graph is chordal. We now present some results on complement graphs of $P_5$-free chordal bipartite graphs. \\
We use the following notation, definition and results to prove our results.
Let $\overline{G}$ denote the complement of the graph $G$, where $V (\overline{G}) = V (G)$, and $E(\overline{G}) = \{\{u, v\} | \{u, v\} \notin E(G)\}$. Since $V(G)=V(\overline{G})$, the notation we follow for $G$ is applicable for $\overline{G}$ also.


\begin{definition}
	A vertex $x$ of $G$ is called simplicial if $N(x)$ induces a complete subgraph of $G$.
\end{definition}

\begin{definition}
Let $G = (V, E)$ be an undirected graph and let $\sigma = [v_1, v_2, \ldots, v_n]$ be an ordering of the vertices. We say that $\sigma$ is a perfect vertex elimination scheme (or perfect elimination ordering) if each $v_i$ is a simplicial vertex of the induced subgraph $G_{\{v_i, v_{i+1}, \ldots, v_n\}}$.
\end{definition}
\begin{theorem}
	\label{peo}
	(Dirac 1961, Fulkerson and Gross, 1965, Rose 1970). A graph $G$ is chordal if and only if $G$ has a PEO.
\end{theorem}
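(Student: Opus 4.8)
The plan is to prove the two directions separately, with the forward direction (PEO $\Rightarrow$ chordal) being short and the reverse direction (chordal $\Rightarrow$ PEO) carrying all the weight through an induction that hinges on the existence of simplicial vertices. For the easy direction, I would assume $G$ has a PEO $\sigma = [v_1, \ldots, v_n]$ and take an arbitrary cycle $C$ of length at least four. Looking at the vertex of $C$ that appears earliest in $\sigma$, say $v_i$, its two neighbors on $C$ both appear after $v_i$ in $\sigma$, so both lie in the neighborhood of $v_i$ inside the induced subgraph $G_{\{v_i, \ldots, v_n\}}$. Since $v_i$ is simplicial there, these two neighbors are adjacent, and that edge is a chord of $C$. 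Hence every long cycle has a chord and $G$ is chordal.

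For the hard direction I would induct on $|V(G)|$. The inductive step reduces immediately to the claim that every chordal graph on at least one vertex has a simplicial vertex: given such a vertex $v$, the graph $G - v$ is an induced subgraph of $G$ and hence chordal, so by induction it admits a PEO, and prepending $v$ yields a PEO of $G$. To produce the simplicial vertex I would prove the stronger statement (Dirac's lemma): a chordal graph that is not complete has two non-adjacent simplicial vertices. This is again an induction: pick non-adjacent $a, b$, take a minimal $a$-$b$ separator $S$, and let $G_A$ and $G_B$ be the subgraphs induced on $S$ together with the component of $G - S$ containing $a$, respectively $b$. Applying the inductive hypothesis to each piece, and observing that a simplicial vertex lying strictly inside a component remains simplicial in $G$ because its neighborhood never reaches across $S$, one extracts one simplicial vertex from each side; these are non-adjacent since they sit in different components of $G - S$.

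The main obstacle, and the step I would handle most carefully, is the sub-lemma that in a chordal graph every minimal vertex separator $S$ induces a clique, since this is exactly what prevents the two simplicial vertices above from both being absorbed into $S$. To prove it, I would take any $x, y \in S$; by minimality each has a neighbor in the $a$-component and in the $b$-component, so there is a shortest $x$-$y$ path through the $a$-side and a shortest $x$-$y$ path through the $b$-side. Concatenating them gives a cycle of length at least four, and chordality supplies a chord. No chord can join the two sides (there are no edges between distinct components of $G - S$), and no chord can shortcut either shortest path (a shortest path is induced), so the only surviving possibility is the edge $\{x, y\}$ itself. Thus $x$ and $y$ are adjacent and $S$ is a clique. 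With this sub-lemma in place both inductions close and the equivalence follows.
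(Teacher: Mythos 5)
Your proof is correct, and it is the classical argument: the forward direction by taking the vertex of a chordless-candidate cycle that occurs earliest in the ordering, and the reverse direction by induction via Dirac's lemma (every non-complete chordal graph has two non-adjacent simplicial vertices), which in turn rests on the sub-lemma that minimal vertex separators in chordal graphs induce cliques. Note, however, that the paper offers no proof of this statement at all --- it is quoted as a known result attributed to Dirac (1961), Fulkerson--Gross (1965), and Rose (1970) and then used only to certify the PEO exhibited in its Theorem on $\overline{G}$ being chordal --- so your proposal simply supplies the standard proof from the cited literature, and there is no argument in the paper to diverge from. The only point worth tightening in a written version is the application of the inductive hypothesis to the pieces $G_A$ and $G_B$: when a piece is complete the hypothesis as you state it does not literally apply, and you should say explicitly that in that case every vertex (in particular one inside the component) is simplicial, while in the non-complete case the clique property of $S$ forces one of the two non-adjacent simplicial vertices into the component.
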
	
	\begin{theorem}
		\label{chordal}
		If $G$ is a $P_5$-free chordal bipartite graph, then $\overline{G}$ is a chordal graph.
	\end{theorem}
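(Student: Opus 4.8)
The plan is to invoke Theorem~\ref{peo} and exhibit a perfect elimination ordering (PEO) of $\overline{G}$. Since $A$ and $B$ are independent sets of $G$, they become cliques in $\overline{G}$, so $\overline{G}$ consists of the two cliques $A$ and $B$ joined by the non-edges of $G$ between the sides; concretely, for $a\in A$ and $b\in B$ we have $\{a,b\}\in E(\overline{G})$ iff $b\notin N_G(a)$. The single fact that drives everything is the following.

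\textbf{Key claim:} the $G$-neighborhoods $\{N_G(a): a\in A\}$ form a chain under inclusion (equivalently, $G$ is $2K_2$-free). First I would establish this by checking the three types of pairs $a,a'\in A$. If $a,a'\in A_2$, their neighborhoods are nested directly by the NNO of Theorem~\ref{thm1}. If $a\in A_2$ and $a'\in A_1$, then $N_G(a)\subseteq B_1$ by Lemma~\ref{lem2}, while $N_G(a')\supseteq B_1$ because $(A_1,B_1)$ is a biclique, so $N_G(a)\subseteq N_G(a')$. The remaining case $a,a'\in A_1$ is the only one that needs nestedness on the $B_2$ side: both vertices contain all of $B_1$, so I only compare the parts inside $B_2$. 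Because the vertices of $B_2$ are nested by Theorem~\ref{thm1}, say $N_G(v_1)\subseteq\cdots\subseteq N_G(v_q)$, the set $\{h: x\in N_G(v_h)\}$ is an up-set of this chain for each $x\in A_1$; hence $N_G(x)\cap B_2$ is a suffix $\{v_h,\ldots,v_q\}$, and two suffixes of the same chain are comparable. Adjoining the common block $B_1$ keeps them comparable, which proves the claim.

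Finally I would build the PEO. Order $A$ by non-increasing $G$-degree (which, by the chain property, coincides with non-increasing inclusion order of neighborhoods) as $a^{(1)},\ldots,a^{(m)}$, and let the elimination ordering of $\overline{G}$ be $a^{(1)},\ldots,a^{(m)}$ followed by the vertices of $B$ in any order. When $a^{(k)}$ is eliminated, the surviving vertices are the later $A$-vertices together with all of $B$, and $a^{(k)}$ has the largest neighborhood among the surviving $A$-vertices. Its neighborhood in the surviving subgraph of $\overline{G}$ is $(\text{surviving }A\setminus\{a^{(k)}\})\cup(B\setminus N_G(a^{(k)}))$; the first block is a clique and the second is a clique, and for a surviving $a'$ and any $b\in B\setminus N_G(a^{(k)})$ we have $N_G(a')\subseteq N_G(a^{(k)})$, so $b\notin N_G(a')$ and hence $\{a',b\}\in E(\overline{G})$. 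Thus $a^{(k)}$ is simplicial in the surviving subgraph; and once all of $A$ is removed the remaining graph is the clique $B$, in which every vertex is trivially simplicial. By Theorem~\ref{peo}, $\overline{G}$ is chordal. The main obstacle is the key claim, and within it the case $a,a'\in A_1$, since there the comparability is not supplied directly by the NNO on $A_2$ but must be recovered from the nestedness on the $B_2$ side.
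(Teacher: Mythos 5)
Your proof is correct, and while it rests on the same foundation as the paper's --- exhibit a perfect elimination ordering of $\overline{G}$ and invoke Theorem \ref{peo} --- it is organized around a different key lemma and a mirror-image elimination order. The paper eliminates the $B$-side first (in the order $y_1,\ldots,y_j,v_1,\ldots,v_q$), then $A_2$ in reverse \emph{NNO} order, then $A_1$; simpliciality of the $B_1$-vertices is argued by a case analysis that implicitly relies on each $N_G(u)$, $u\in A_2$, being a prefix $\{y_1,\ldots,y_{d(u)}\}$ of $B_1$. You instead isolate a standalone structural claim: \emph{all} neighborhoods on side $A$ form a chain under inclusion, assembled from the NNO on $A_2$ (Theorem \ref{thm1}), the containment $N_G(u)\subseteq B_1\subseteq N_G(x)$ for $u\in A_2$, $x\in A_1$ (Lemma \ref{lem2}), and --- the case the paper never addresses explicitly --- comparability within $A_1$, recovered from the $B_2$-side nesting by observing that each $N_G(x)\cap B_2$ is a suffix of the chain $(v_1,\ldots,v_q)$. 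This claim (equivalently, that $G$ is a chain/difference graph) is stronger than anything stated in the paper, and it buys you a single uniform simpliciality argument: every needed cross edge in $N_{\overline{G}}(a^{(k)})$ follows from $N_G(a')\subseteq N_G(a^{(k)})$, with the second elimination phase (the clique $B$) trivial. Your write-up is also tighter at the one delicate point: in the paper's Case 2, the set $H$ as literally defined contains pairs such as $u\in A_2$ and $y_t\in N_G(u)$ with $t>r$, which are nonadjacent in $\overline{G}$, so $H$ itself is not a clique; what the paper's argument actually needs (and what your chain claim delivers precisely) is that the $\overline{G}$-neighborhood of the eliminated vertex among the surviving vertices is a clique.
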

	\begin{proof}
		To show that $\overline{G}$ is chordal, we now exhibit a {\em Perfect Elimination Ordering (PEO)} in $\overline{G}$. Considering an ordering $ \sigma $ of $V(\overline{G})$, $\sigma=(z_1=y_1,y_2,\ldots y_{j},v_1,v_2,\ldots,v_q,u_p,u_{p-1},\ldots,u_1,x_i,$ $x_{i-1},\ldots,x_1=z_n)$.		
		We now show that $\sigma$ is a {\em PEO}. On the contrary, assume that $\sigma$ is not a {\em PEO}. Let $z_r$ be the first vertex in $ \sigma $ whose neighborhood is not a clique in the graph induced on $\{z_r, z_{r+1}, \ldots, z_n\}$.  \\	\\
		Case 1: $z_r \in A(\overline{G})$ \\ 
		We claim that $A(\overline{G})$ is a clique. Suppose not, then there exist $u,v$ such that $\{u,v\}\notin E(\overline{G})$. This shows that in $A(G)$, $\{u,v\}\in E(G)$, a contradiction to the independent set property of $A(G)$. Note that $N_{\overline{G}}(z_r)=\{z_r, z_{r+1}, \ldots, z_n\}$ is a subset of $A(\overline{G})$.  Therefore, $z_r$ is a simplicial vertex. \\\\
		Case 2: $z_r \in B_1(\overline{G})$.\\
		Let $s=|N(z_r)\cap A_2(G)|$.   This implies that $w=p-s=|N_{\overline{G}}(z_r)\cap A_2(\overline{G})|$.  Since $A_2(G)$ satisfies {\em NNO}, $u_w$ to $u_p$  vertices of $A_2(\overline{G})$ are adjacent to the vertices of the set $\{y_{r+1},y_{r+2},\ldots, y_j\}$, and $A_2(\overline{G}) \cup B_2(\overline{G})$ is a clique. Let $H= A_2(\overline{G})\cup B_2(\overline{G}) \cup \{y_{r}, y_{r+1},\ldots,y_n\}$.
		We now claim that the graph induced on $H$ is a clique.  Suppose not, we obtain a contradiction to the independent set property of $G$ similar to Case 1. Note that $N_{\overline{G}}(z_r)=\{z_r, z_{r+1}, \ldots, z_n\}$ is a subset of $H$.
		Therefore,   $z_r$ is a simplicial vertex.  Similar argument is true for $z_r \in B_2(\overline{G})$.\\\\		
		From the above two cases, we conclude that $\sigma$ is a {\em PEO}.  Hence by Theorem \ref{peo},  $\overline{G}$ is a chordal graph.\\
		Suppose $A_2(G)=\emptyset$ or $B_2(G)=\emptyset$, then $A_1$ and $B_1$ are clique in 	
		$\overline{G}$. Therefore, $\overline{G}$ is chordal.  \qed
			\end{proof}
	 \begin{lemma}
	 	\label{vcin}
	 	 For a graph $G$, if the vertex connectivity is at least the independence number then $G$ is Hamiltonian \cite{chvatal1972note}.
	 	\end{lemma}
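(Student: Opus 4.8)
The plan is to prove this classical Chv\'atal--Erd\H{o}s criterion by a longest-cycle exchange argument. Write $\kappa = \kappa(G)$ for the vertex connectivity and $\alpha = \alpha(G)$ for the independence number, and assume $\kappa \geq \alpha$ with $n = |V(G)| \geq 3$ (so in particular $\kappa \geq 2$). First I would take $C$ to be a longest cycle of $G$, fix a cyclic orientation on it, and argue by contradiction: suppose $C$ is not Hamiltonian. Then $G - V(C)$ has a nonempty component $H$, and I would let $F = \{c_1, c_2, \ldots, c_t\}$ be the set of vertices of $C$ having a neighbour in $H$, indexed in the cyclic order induced by the orientation of $C$.

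The first key step is to show $t \geq \kappa$. Every path from $H$ to a vertex outside $H \cup F$ must pass through $F$, so $F$ separates $H$ from the remaining vertices of $C$, forcing $|F| \geq \kappa$ (in the degenerate case $F = V(C)$ this still holds, since a $\kappa$-connected graph has circumference at least $\kappa + 1$). For each $c_i$ let $c_i^{+}$ denote its successor along the oriented cycle $C$; note $c_i^{+} \in V(C)$ because $C$ is a cycle.

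The heart of the proof is to show that $S = \{c_1^{+}, \ldots, c_t^{+}\} \cup \{h\}$ is an independent set, where $h$ is any fixed vertex of $H$. I would establish this through two rerouting arguments, each producing a cycle strictly longer than $C$, which is the contradiction. \textbf{(a)} No $c_i^{+}$ is adjacent to any vertex of $H$: if $c_i^{+}$ had a neighbour $b \in H$ while $c_i$ has a neighbour $a \in H$, then connectivity of $H$ provides an $a$--$b$ path inside $H$, and replacing the single edge $c_i c_i^{+}$ of $C$ by the detour that leaves $C$ at $c_i$, traverses this $H$-path, and re-enters $C$ at $c_i^{+}$ yields a strictly longer cycle; in particular each $c_i^{+}$ is nonadjacent to $h$. \textbf{(b)} For $i \ne j$ the vertices $c_i^{+}$ and $c_j^{+}$ are nonadjacent: assuming the edge $c_i^{+} c_j^{+}$ exists, I would splice together the forward arc of $C$ from $c_i^{+}$ to $c_j$, an $H$-path joining a neighbour of $c_j$ in $H$ to a neighbour of $c_i$ in $H$, the backward arc of $C$ from $c_i$ to $c_j^{+}$, and finally the edge $c_j^{+} c_i^{+}$; this closes into a cycle that uses at least one vertex of $H$ and is therefore longer than $C$. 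Hence $S$ is independent.

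Finally, $|S| = t + 1 \geq \kappa + 1 > \kappa \geq \alpha$, so $G$ would contain an independent set strictly larger than $\alpha$, contradicting the definition of the independence number. Therefore $C$ is Hamiltonian. The main obstacle I anticipate is the bookkeeping in step \textbf{(b)}: one must orient $C$ consistently, choose the neighbours of $c_i$ and $c_j$ in $H$ with care, and verify that the four spliced pieces are pairwise internally disjoint and together form a single cycle. The connectivity of $H$ (used to route inside $H$) and the maximality of $C$ are precisely the ingredients that make both detours legal while forcing the length contradiction.
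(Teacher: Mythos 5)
Your proposal is correct, but it cannot be compared against a proof in the paper because the paper gives none: Lemma \ref{vcin} is stated purely as a citation of the Chv\'atal--Erd\H{o}s theorem \cite{chvatal1972note} and is used as a black box (only in the proof of Theorem \ref{cham}, where $\overline{G}$ has independence number at most two and connectivity at least two). What you have reconstructed is essentially the classical Chv\'atal--Erd\H{o}s argument itself: longest non-Hamiltonian cycle $C$, component $H$ of $G-V(C)$, attachment set $F=\{c_1,\dots,c_t\}$, the bound $t\ge\kappa$, and the independent set $\{c_1^+,\dots,c_t^+\}\cup\{h\}$ of size $t+1>\alpha$ obtained from your two rerouting arguments (a) and (b); both reroutings are set up correctly, and the final counting contradiction is right. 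One remark on economy: your degenerate case $F=V(C)$ does not need Dirac's circumference theorem. If you establish step (a) \emph{before} the bound $t\ge\kappa$, it shows that no successor $c_i^+$ has a neighbour in $H$, hence $c_i^+\notin F$, so $F\ne V(C)$ automatically and $F$ is a genuine vertex cut separating $H$ from the nonempty set $V(C)\setminus F$; with that reordering the proof is fully self-contained. As for what each route buys: the paper's citation keeps its section on complement graphs short and focused on the $P_5$-free structure, while your from-scratch proof makes the Hamiltonicity claim for $\overline{G}$ verifiable without consulting external literature, at the cost of importing a page of classical extremal graph theory into a paper that only needs the statement.
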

\begin{theorem}
	\label{cham}
	Let $G(A_1,B_1,A_2,B_2)$ be a $P_5$-free chordal bipartite graph such that $A_2$ and $B_2$ are non empty.  If $G$ is Hamiltonian, then $\overline{G}$ is Hamiltonian.
\end{theorem}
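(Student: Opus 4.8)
The plan is to invoke the sufficient condition for Hamiltonicity recorded in Lemma \ref{vcin}, namely that it suffices to show the vertex connectivity of $\overline{G}$ is at least its independence number. First I would pin down the independence number of $\overline{G}$. An independent set of $\overline{G}$ is exactly a clique of $G$, and since $G$ is bipartite and connected its largest clique is a single edge; hence $\alpha(\overline{G}) = \omega(G) = 2$. Thus the entire problem reduces to proving that $\overline{G}$ is $2$-connected, i.e. $\kappa(\overline{G}) \geq 2$.

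Next I would exploit the structure of $\overline{G}$. Since $A$ and $B$ are independent sets of $G$, both $A$ and $B$ induce cliques in $\overline{G}$, and the only remaining (cross) edges of $\overline{G}$ are precisely the non-edges of $G$ joining $A$ to $B$. I would then exhibit two vertex-disjoint cross edges. Because $A_2 \neq \emptyset$, pick $u_1 \in A_2$; by Lemma \ref{lem1} there is $y_k \in B_1$ with $y_k \notin N_G(u_1)$, so $\{u_1, y_k\}$ is a cross edge of $\overline{G}$. Because $B_2 \neq \emptyset$, pick $v_1 \in B_2$; by Lemma \ref{lem1} there is $x_k \in A_1$ with $x_k \notin N_G(v_1)$, so $\{x_k, v_1\}$ is a cross edge of $\overline{G}$. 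Since $u_1 \in A_2$, $x_k \in A_1$, $y_k \in B_1$, and $v_1 \in B_2$ lie in four pairwise disjoint parts, these two cross edges share no endpoint.

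Finally I would deduce $2$-connectivity from these two disjoint cross edges. Note that $|A| = |A_1| + |A_2| \geq 2$ and $|B| = |B_1| + |B_2| \geq 2$, so after deleting any single vertex $w$ both $A \setminus \{w\}$ and $B \setminus \{w\}$ remain non-empty cliques. As $w$ can be an endpoint of at most one of the two disjoint cross edges, the other one survives in $\overline{G} - w$ and joins the two cliques; hence $\overline{G} - w$ is connected. Therefore $\kappa(\overline{G}) \geq 2 = \alpha(\overline{G})$, and Lemma \ref{vcin} yields that $\overline{G}$ is Hamiltonian.

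I expect the main obstacle to be the $2$-connectivity argument, in particular the degenerate cases $|A_2| = 1$ or $|B_2| = 1$, where one cannot simply select two disjoint edges inside the complete bipartite join between $A_2$ and $B_2$ present in $\overline{G}$. The device of drawing one cross edge from the $A_2$--$B_1$ side and the other from the $A_1$--$B_2$ side, each guaranteed by Lemma \ref{lem1}, sidesteps this difficulty uniformly. It is worth remarking that this argument uses only the connectedness of $G$ together with the non-emptiness of $A_2$ and $B_2$, so the Hamiltonicity hypothesis on $G$ is in fact not required for this direction.
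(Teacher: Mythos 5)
Your proposal is correct, and it follows the same overall strategy as the paper: both proofs reduce the statement to the Chv\'atal--Erd\H{o}s condition of Lemma \ref{vcin} by showing that the independence number of $\overline{G}$ is at most $2$ while its vertex connectivity is at least $2$. The difference lies in how $2$-connectivity is established. The paper argues that no cut vertex $z$ can exist by joining any two neighbours $x \in A(\overline{G})$, $y \in B(\overline{G})$ via the path $(x,u_1,y_j,y)$, where $u_1 \in A_2$, $y_j \in B_2$; this exploits the fact that $A_2 \cup B_2$ induces a clique in $\overline{G}$, which in turn rests on Lemma \ref{lem2} (i.e.\ on the full $P_5$-free structure guaranteeing no $G$-edges between $A_2$ and $B_2$). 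Your argument instead produces two vertex-disjoint cross edges of $\overline{G}$ --- one of type $A_2$--$B_1$ and one of type $A_1$--$B_2$ --- using only Lemma \ref{lem1}, i.e.\ only the maximality of the biclique $(A_1,B_1)$. This buys two things. First, it is more elementary: it does not need Lemma \ref{lem2} or the clique $A_2 \cup B_2$ at all. Second, it is more watertight in the degenerate cases $|A_2|=1$ or $|B_2|=1$: the paper's path $(x,u_1,y_j,y)$ silently fails when the deleted vertex $z$ happens to be the unique vertex of $A_2$ (or of $B_2$), whereas your two edges, drawn from opposite sides, cannot both be destroyed by a single deletion. Your closing remark is also accurate and applies equally to the paper's own proof: neither argument ever invokes the hypothesis that $G$ is Hamiltonian, so the theorem holds under the weaker assumptions of connectedness and $A_2, B_2 \neq \emptyset$.
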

\begin{proof}
	Since $A_1$ $(B_1)$ is an independent set in $G$, it is clique in $\overline{G}$.  Similarly,  $A_2 \cup B_2$ is a clique in $\overline{G}$.  
	We observe that any independent set in $\overline{G}$ contains at most two vertices, in particular one vertex from $A(\overline{G})$ and other from $B(\overline{G})$. Therefore, the independence number is at most two. We observer that $\overline{G}$ is connected and now we show that there does not exist a cut vertex in $\overline{G}$. On the contrary, assume that there exists a cut vertex $z$. W.l.o.g. assume that $x$ and $y$ be the neighbors of $z$ in $\overline{G}$ .\\ \\
	Case 1: $x, y \in A(\overline{G})$. Since the graph induced on $A(\overline{G})$ is a clique, there exists a path $P(x,y)$ in $\overline{G}-z$. This shows that $\overline{G}-z$ is connected. Similar argument is true for $x,y \in B(\overline{G})$, and $x, y \in A_2 \cup B_2$ in $\overline{G}$.\\\\
	Case 2: $x\in A(\overline{G})$, and $y\in B(\overline{G})$.  Since $A_1$ $(B_1)$, and the set $A_2 \cup B_2$ induces a clique, we obtain a path $P(x,y)=(x,u_1,y_j,y)$, where $u_1\in A_2$, and  $y_j\in B_2$ in $G-z$. This shows that $z$ is not a cut vertex. Therefore, the vertex connectivity of $\overline{G}$ is at least two. From Lemma \ref{vcin}, we conclude that $\overline{G}$ is Hamiltonian.
\end{proof}
\textbf{Time Complexity Analysis :}\\
The {\em  Nested Neighbourhood Ordering (NNO)} of $P_5$-free chordal bipartite graphs can be obtained in linear time by using perfect edge elimination ordering proposed by Fulkerson and Gross. Further, the Hamiltonian cycle (path) can be obtained in linear time. Since bipancyclic, and homogeneously traceable involves order $n$ call to Hamiltonian cycle (path) algorithm, these two incurs $\mathit{O(n^2)}$ time.  All other algorithmic results run in linear time.\\\\
\textbf{Conclusions and Further Research}\\
 In this paper, we have presented structural results on $P_5$-free chordal bipartite graphs.  Subsequently, using these results, we have presented polynomial-time algorithms for the Hamiltonian cycle (path), also its variants and generalizations.   We also presented polynomial-time algorithms for combinatorial problems such as treewidth (pathwidth), and minimum fill-in.  We have also shown that Chv{\'a}tal's necessary condition is sufficient for this graph class.   Further, we have presented some results on complement graphs of $P_5$-free chordal bipartite graphs.  These results exploit the nested neighborhood ordering of $P_5$-free chordal bipartite graphs which is an important contribution of this paper.  A natural direction for further research is to study $P_6$-free chordal bipartite graphs and $P_7$-free chordal bipartite graphs as the complexity of most of these problems are open in these graph classes.


\begin{thebibliography}{8}


	\bibitem{dirac1961rigid}
	Gabriel Andrew Dirac.: On rigid circuit graphs. Abhandlungen aus dem Mathematischen Seminar der Universität Hamburg, 25, 71--76 (1961)
	
	\bibitem{fulkerson1965incidence}
	Delbert Fulkerson and Oliver Gross.: Incidence matrices and interval graphs. Pacific Journal of Mathematics,	15(3), 835--855 (1965)
	
	\bibitem{golumbie1980algorithmic}
	Martin Charles Golumbic.:  Algorithmic Graph Theory and Perfect Graphs. Elsevier, (1980)
	
        	
	\bibitem{wang2012efficient}
	Yi-Ming Wang, Shi-Hao Chen, and Mango C-T Chao.: An Efficient Hamiltonian-cycle power-switch routing for MTCMOS designs. In Design Automation Conference (ASP-DAC), 17th Asia and SouthPacific, 59--65 (2012)
	
	\bibitem{malakis1976Hamiltonian}
	Malakis, A.: Hamiltonian walks and polymer configurations. Physica A: Statistical Mechanics and its Applications, 84(2), 256--284 (1976)
	
	\bibitem{dorninger1994Hamiltonian}
	Dorninger.D: Hamiltonian circuits determining the order of chromosomes. Discrete Applied Mathematics, 50(2), 159--168 (1994)
	
	\bibitem{deogun1994polynomial}
	 Deogun, J.S and Steiner, G.: Polynomial algorithms for Hamiltonian cycle in cocomparability graphs. SIAM Journal on Computing, 23(3), 520--552 (1994)
	
	\bibitem{bertossi1986Hamiltonian}
	Alan A Bertossi and Maurizio A Bonuccelli.: Hamiltonian circuits in interval graph generalizations. Information Processing Letters, 23(4), 195--200 (1986)
	
	\bibitem{Krishnamoorthy:1975:NPB:990518.990521}
	Krishnamoorthy, M. S.: An NP-hard problem in bipartite graphs. SIGACT News, 7(1), 26--26 (1975)
	
	\bibitem{muller1996Hamiltonian}
	M$\ddot{\rm u}$ller.: Hamiltonian circuits in chordal bipartite graphs. Discrete Mathematics, 156(1-3), 291--298 (1996)
	
	\bibitem{mullar1987brandstadt}
	M$\ddot{\rm u}$ller and Andreas Brandst$\ddot{\rm a}$dt.: The NP-completeness of Steiner tree and dominating set for chordal bipartite graphs. Theoretical Computer Science, 53(2-3), 257--265 (1987)
	
	\bibitem{bisplit}	
	Brandst{\"a}dt, Andreas and Hammer, Peter L and Lozin, Vadim V.: Bisplit graphs. Discrete Mathematics, 299, (1-3), 11--32 (2005)
	
	\bibitem{bisplitn2}
	Abueida, Atif and Sritharan, R.: A note on the recognition of bisplit graphs. Discrete mathematics, 306(17), 2108--2110 (2006)
	
	\bibitem{Lokshantov:2014:ISP:2634074.2634117}
	Daniel Lokshantov, Martin Vatshelle, and Yngve Villanger.: Independent set in $P_5$-free graphs in polynomial time. In Proceedings of the Twenty-fifth Annual ACM-SIAM Symposium on Discrete Algorithms, SODA '14, PP. 570--581 (2014) 
	
	
	
	
	\bibitem{hoang2010deciding}
	Ho{\`a}ng, Ch{\'\i}nh T and Kami{\'n}ski, Marcin and Lozin, Vadim and Sawada, Joe and Shu, Xiao. Deciding $k-$colorability of $P_5$-free graphs in polynomial-time. Algorithmica, 8, 85--89 (2001)

\bibitem{fvsp5}
Abrishami, Tara and Chudnovsky, Maria and Pilipczuk, Marcin and Rz{\k{a}}{\.z}ewski, Pawe{\l} and Seymour, Paul.: Induced subgraphs of bounded treewidth and the container method. In Proceedings of the 2021 ACM-SIAM Symposium on Discrete Algorithms (SODA), PP. 1948--1964, (2021)	



\bibitem{lpathinterval}
Ioannidou, Kyriaki and Mertzios, George B and Nikolopoulos, Stavros D.: The longest path problem is polynomial on interval graphs. International Symposium on Mathematical Foundations of Computer Science, Springer, pp. 403--414 (2009)

\bibitem{lpathbiconvex}
Ghosh, Esha and Narayanaswamy, NS and Rangan, C Pandu.: A polynomial-time algorithm for longest paths in biconvex graphs. International Workshop on Algorithms and Computation, Springer, pp. 191--201 (2011)

   \bibitem{cographsmlst}
	Kona, Harshita and Sadagopan, N.: On some combinatorial problems in cographs. International Journal of Advances in Engineering Sciences and Applied Mathematics, 11(1), 25--39 (2019)
	
	\bibitem{bondy1971pancyclic}
	Bondy,  Adrian J.: Pancyclic graphs I. Journal of Combinatorial Theory, Series B, 11(1), 80--84 (1971)	
	
	\bibitem{2pancycle}
	Zamfirescu, Carol T.: (2)-pancyclic graphs. Discrete Applied Mathematics, 161(7-8), 1128--1136 (2013)
	
	\bibitem{bipancyclic}
	Amar, D.: A condition for a Hamiltonian bipartite graph to be bipancyclic. Discrete mathematics, 102(3), 221--227 (1992)	
	
	\bibitem{asratian1993criterionhamiltonconnected}
	Asratian, Armen S.: A criterion for some Hamiltonian graphs to be Hamilton-connected. Faculty of Applied Mathematics, University of Twente, (1993)
	
	\bibitem{kewen2008hamiltonianhamiltonconnected}	
	Kewen, Zhao and Lai, Hong-Jian and Zhou, Ju.: Hamiltonian-connected graphs. Computers \& Mathematics with Applications, 55(12), 2707--2714 (2008)
	
	 \bibitem{skupien1984homogeneously}
	Skupie{\'n}, Zdzislaw.: Homogeneously traceable and Hamiltonian connected graphs. Demonstratio Mathematica, 17(4), 1051--1068 (1984)
	
	\bibitem{chartrand1979homogeneously}
	Chartrand, Gary and Gould, Ronald J and Kapoor, SF.: On homogeneously traceable nonhamiltonian graphs. Annals of the New York Academy of Sciences, 319(1), 130--135 (1979)
	
	\bibitem{korderhamilton} 
	Faudree, Ralph J and Gould, Ronald J and Kostochka, Alexandr V and Lesniak, Linda and Schiermeyer, Ingo and Saito, Akira.: Degree conditions for $k$-ordered Hamiltonian graphs. Journal of Graph Theory, 42(3), 199--210 (2003) 
	
		\bibitem{diffgraph1990}
	Hammer, Peled, Sun.: Difference graphs. Discrete Applied Mathematics, 28(1), 35--44 (1990)
	
	\bibitem{maffray3color2012}
	Maffray, Morel.: On 3-colorable $P_5$-free graphs. SIAM J. Discrete  Mathematics, 26(4), 1682--1708 (2012)
	
	\bibitem{fouquet1993decomposition}
	Fouquet, Jean-Luc.: A decomposition for a class of $(P_5, \overline{P_5})$-free graphs.  Discrete Mathematics, 121, 75--83 (1993) 
	
	
		\bibitem{dbwest2003}
	Douglas Brent West. Introduction to Graph Theory. 2nd Edition. Prentice hall Upper Saddle River, (2003)
	
	\bibitem{nguyen2018various}
	Nguyen, Thinh D., Various variants of Hamiltonian problems.: Moscow State University, OSF Preprints (2018)
	
	\bibitem{Steinerpathcograph} 
	Kona Harshita and Sadagopan. N.: On some combinatorial problems in cographs. International Journal of Advances in Engineering Sciences and Applied Mathematics, 11(1), 25--39 (2019)	
	
	\bibitem{Robertson}
	Robertson, N., Seymour, P. D.: Graph minors. II. Algorithmic aspects of tree-width. Journal of algorithms,  7(3), 309--322 (1986)
	
	\bibitem{twbi}
	Hans L. Bodlaender.: A tourist guide through treewidth. Acta Cybernetica, 92, 1--21 (1992)
	
	\bibitem{kloks1995treewidth}
	Ton Kloks, Dieter Kratsch.:  Treewidth of chordal bipartite graphs. Journal of Algorithms, 19(2), 266--281 (1995)
	
	\bibitem{Minfill_g}
	Yannakakis, M.: Computing the minimum fill-in is NP-complete,  SIAM Journal on Algebraic and Discrete Methods, 2, 77--79 (1981)
	
	\bibitem{kloks1993minimum}
	Ton Kloks, Antonius Jacobus Johannes.:  Minimum fill-in for chordal bipartite graphs, Universiteit Utrecht. UU-CS, Utrecht University, 93 (1993)
	
	
	

\bibitem{Minfill_04}
Chang, Maw-Shang.: Algorithms for maximum matching and minimum fill-in on chordal bipartite graphs. International Symposium on Algorithms and Computation, Springer, PP. 146--155, (1996)




	
	\bibitem{fomin2020subgraph}
	Fomin, Fedor V and Golovach, Petr A and Str{\o}mme, Torstein JF and Thilikos, Dimitrios M.: Subgraph Complementation. Algorithmica, 82(7), 1859--1880 (2020)
	
	
	\bibitem{chvatal1972note}
	Chv{\'a}tal, Vasek and Erd{\"o}s, Paul. A note on Hamiltonian circuits. Journal Discrete  Mathematics, 2, 111--113    (1972)
        
    
 		
\end{thebibliography}

\begin{thebibliography}{8}
\bibitem{ref_article1}
Author, F.: Article title. Journal \textbf{2}(5), 99--110 (2016)

\bibitem{ref_lncs1}
Author, F., Author, S.: Title of a proceedings paper. In: Editor,
F., Editor, S. (eds.) CONFERENCE 2016, LNCS, vol. 9999, pp. 1--13.
Springer, Heidelberg (2016). \doi{10.10007/1234567890}

\bibitem{ref_book1}
Author, F., Author, S., Author, T.: Book title. 2nd edn. Publisher,
Location (1999)

\bibitem{ref_proc1}
Author, A.-B.: Contribution title. In: 9th International Proceedings
on Proceedings, pp. 1--2. Publisher, Location (2010)




\bibitem{ref_url1}
LNCS Homepage, \url{http://www.springer.com/lncs}. Last accessed 4
Oct 2017
\end{thebibliography}
\end{document}